\newcommand{\cA}{\mathcal{A}}
\newcommand{\cD}{\mathcal{D}}
\newcommand{\cG}{\mathcal{G}}
\newcommand{\E}{\mathbb{E}}
\newcommand{\R}{\mathbb{R}}
\newcommand{\talpha}{\tilde{\alpha}}
\newcommand{\tbeta}{\tilde{\beta}}
\newcommand{\bone}{\mathbf{1}}
\renewcommand{\epsilon}{\varepsilon}
\DeclareMathOperator*{\argmax}{argmax}
\newcommand*{\subproofname}{Proof}
\newenvironment{subproof}[1][\subproofname]{\begin{proof}[#1]}{\end{proof}}
\newtheorem{theorem}{Theorem}[section]
\newtheorem{claim}[theorem]{Claim}
\title[Closing Gaps in Asymptotic Fair Division]{Closing Gaps in Asymptotic Fair Division}
\author{Pasin Manurangsi}
\affiliation{%
  \institution{Google Research}
  \country{USA}}
\author{Warut Suksompong}
\affiliation{%
  \institution{University of Oxford}
  \country{UK}}
\begin{abstract}
We study a resource allocation setting where $m$ discrete items are to be divided among $n$ agents with additive utilities, and the agents' utilities for individual items are drawn at random from a probability distribution.
Since common fairness notions like envy-freeness and proportionality cannot always be satisfied in this setting, an important question is \emph{when} allocations satisfying these notions exist.
In this paper, we close several gaps in the line of work on asymptotic fair division.
First, we prove that the classical round-robin algorithm is likely to produce an envy-free allocation provided that $m=\Omega(n\log n/\log\log n)$, matching the lower bound from prior work.
We then show that a proportional allocation exists with high probability as long as $m\geq n$, while an allocation satisfying envy-freeness up to any item (EFX) is likely to be present for any relation between $m$ and $n$.
Finally, we consider a related setting where each agent is assigned exactly one item and the remaining items are left unassigned, and show that the transition from non-existence to existence with respect to envy-free assignments occurs at $m=en$.

\end{abstract}
\begin{document}

\maketitle

\section{Introduction}

One of the most frequently occurring tasks in our society is that of allocating scarce resources among interested agents or entities.
Indeed, whether it be apportioning government funds among public organizations, allotting office space to research groups in a university, or assigning houses to residents of a city, one is faced with the decision of how to best allocate the limited resource.
A central concern when making such decisions is \emph{fairness}: it is desirable that all agents view the share they receive as fair.

Among the plethora of fairness notions that have been proposed in the literature, perhaps the two best-known ones are \emph{envy-freeness} and \emph{proportionality}.
An allocation is said to be \emph{envy-free} if it does not induce envy between any pair of agents, and \emph{proportional} if it gives every agent at least $1/n$ of their value for the entire resource, where $n$ denotes the number of agents among whom the resource is divided.
When the resource to be allocated is \emph{divisible}, such as advertisement space or broadcast time, it is known that envy-free and proportional allocations are guaranteed to exist \citep{DubinsSp61,Stromquist80}.
However, in many situations we need to allocate \emph{indivisible} resources like houses, cars, electronics, and musical instruments.
For such discrete items, neither envy-freeness nor proportionality can always be satisfied; this can be most easily seen when two quarrelling siblings try to divide a single toy between themselves.

In light of this negative observation, an important question is \emph{when} allocations satisfying these and other fairness notions are likely to exist.
\citet{DickersonGoKa14} were the first to address this question: under the assumption that utilities are additive and each agent's utilities for individual items are drawn from probability distributions, they established that an envy-free allocation is present with high probability when the number of items $m$ is at least $\Omega(n\log n)$, but not when $m = n+o(n)$.\footnote{\label{footnote:m<n}When $m<n$, any allocation necessarily leaves some agent empty-handed, so no allocation can be envy-free or proportional provided that each agent has a positive utility for every item.}
\citet{ManurangsiSu19} further refined these bounds by exhibiting that existence is in fact likely as long as $m\geq 2n$ if $m$ is divisible by $n$, but unlikely even when $m=\Theta(n\log n/\log\log n)$ if $m$ is not ``almost divisible''\footnote{This means that $m - \ell$ is not divisible by $n$ for all $\ell \in \{-n^{o(1)}, \dots, n^{o(1)}\}$.} by $n$. 
\citet{Suksompong16} investigated the asymptotic existence of proportional allocations---which are weaker than envy-free allocations under the additivity assumption---and showed that such allocations occur with high probability provided that either $m$ is a multiple of $n$ or $m=\omega(n)$.

In this paper, we present several new results on asymptotic fair division and in the process close a number of gaps left open by previous work.
We assume that agents are endowed with additive utilities, and the utility of each agent for each item is drawn independently from a continuous distribution $\cD$ supported on $[0,1]$ whose probability density function is bounded above and below.
We say that an event happens ``with high probability'' if the probability that it happens converges to $1$ as $n$ goes to infinity.

\begin{itemize}
\item First (Section~\ref{sec:EF}), we show that when $m=\Omega(n\log n/\log\log n)$, the round-robin algorithm, which lets the agents take turns picking their favorite item from the remaining items, outputs an envy-free allocation with high probability.
This improves upon the aforementioned upper bound of $m=\Omega(n\log n)$ and, perhaps more importantly, matches the non-existence result in the case that $m$ is not ``almost divisible'' by $n$.
Hence, except for the case where $m$ is ``almost divisible'' by $n$, our result essentially resolves the question of when envy-free allocations exist.
Furthermore, our result gives a separation between the round-robin allocation and the welfare-maximizing allocation: while the latter is likely to be envy-free when $m=\Omega(n\log n)$ \citep{DickersonGoKa14}, it is unlikely to be even proportional, let alone envy-free, when $m = o(n \log n)$ \citep{ManurangsiSu19}.

\item Second (Section~\ref{sec:prop}), we show that if the distribution $\cD$ has mean at most $1/2$,\footnote{We comment on the necessity of this condition in Section~\ref{sec:prop}.} a proportional allocation exists with high probability as long as $m\geq n$; this completely closes the gap for propotionality with respect to such distributions (cf. footnote~\ref{footnote:m<n}).
The result for $m\geq 2n$ is obtained by using the round-robin algorithm and generalizes a prior result of \citet{AmanatidisMaNi17}, which holds only when $\cD$ is the uniform distribution on $[0,1]$.
On the other hand, the case $n\leq m\leq 2n$ is handled using a matching-based algorithm inspired by previous work.

\item Third (Section~\ref{sec:EFX}), we consider \emph{envy-freeness up to any item (EFX)}: an allocation satisfies this property if any envy that an agent has towards another agent can be eliminated by removing \emph{any} item from the latter agent's bundle \citep{CaragiannisKuMo16}.
While it is currently an important open problem whether an EFX allocation always exists, we show that such allocations are likely to exist for any relation between $m$ and $n$.
This complements recent lines of work which show the (non-asymptotic) existence of \emph{approximate} EFX allocations \citep{PlautRo18,AmanatidisNtMa20} and exact EFX allocations \emph{when items can be discarded} \cite{CaragiannisGrHu19,ChaudhuryKaMe20}.

\item Fourth (Section~\ref{sec:assignments}), we analyze the related but slightly different setting of \emph{assignments}, also known as \emph{house allocation}, where each agent is assigned exactly one item and the remaining items are left unassigned.
In this setting, \citet{GanSuVo19} proved that an envy-free assignment is present with high probability if $m=\Omega(n\log n)$, and left open the question of where the transition between non-existence and existence occurs.
We essentially settle this question by showing that this transition occurs at $m=en$: for any constant $\varepsilon > 0$, an envy-free allocation is likely to exist if $m/n \geq e + \varepsilon$, and unlikely to exist if $m/n\leq e - \varepsilon$.
\end{itemize}

Besides closing the gaps themselves, our results also reveal qualitative insights on the relative fairness guarantees provided by different algorithms.
For example, the classical round-robin algorithm performs optimally with respect to envy-freeness, whereas a matching-based algorithm is better suited for proportionality when the number of items is small.

\subsection{Further Related Work}

Fair division is a fascinating topic whose formal study stretches back over half a century \citep{Steinhaus48}; see the books by \citet{BramsTa96} and \citet{Moulin03} for an overview of its long and intriguing history.
While early work in the subject focused on allocating divisible resources (a problem often referred to as \emph{cake cutting}), the fair allocation of indivisible resources has attracted substantial interest from different research communities in the last few years \citep{Thomson16,Markakis17,Moulin19}. 
After the work of \citet{DickersonGoKa14} that initiated the study of asymptotic fair division, \citet{KurokawaPrWa16} and \citet{FarhadiGhHa19} established the probabilistic existence of allocations satisfying a weakening of proportionality called \emph{maximin share fairness}, the latter work also allowing agents to have unequal entitlements.  \citet{ManurangsiSu17} extended Dickerson et al.'s results on envy-freeness to a more general setting where items are allocated to groups instead of to individual agents.

In addition to EFX, another (weaker) relaxation of envy-freeness that has been extensively studied is \emph{envy-freeness up to one item (EF1)}, which requires that any envy that an agent has towards another agent can be eliminated by removing \emph{some} item from the latter agent's bundle \citep{LiptonMaMo04,Budish11}.
Unlike EFX, whose guaranteed existence remains an open question, EF1 can be easily attained for additive utilities using the round-robin algorithm.

\section{Preliminaries}
\label{sec:prelims}

In our model, a set $M=[m]$ of indivisible items is to be allocated to a set $N=[n]$ of agents, where $[k]:=\{1,2,\dots,k\}$ for any positive integer $k$.
Each agent $i\in N$ has a utility $u_i(j)\geq 0$ for each item $j\in M$. We assume without loss of generality that $u_i(j)\in[0,1]$ for all $i,j$, since otherwise we can simply scale down the utilities by their maximum.
The utilities are \emph{additive}, meaning that $u_i(M') = \sum_{j\in M'}u_i(j)$ for all $M'\subseteq M$.
Additivity is a common assumption in fair division; in particular, to the best of our knowledge, it is assumed in all of the works on asymptotic fair division thus far. 

A \emph{bundle} refers to any subset $M'\subseteq M$ of items.
An \emph{allocation} is a partition of the items into $n$ bundles $(M_1,\dots,M_n)$, where agent $i$ receives bundle $M_i$.
An allocation is said to be \emph{envy-free} if $u_i(M_i)\geq u_i(M_{i'})$ for all $i,i'\in N$, and \emph{envy-free up to any item (EFX)} if $u_i(M_i)\geq u_i(M_{i'}\backslash\{j\})$ for all $i,i'\in N$ and all $j\in M_{i'}$.
For the sake of convenience, when the allocation under consideration is clear, we say that agent $i$ is envy-free (resp., EFX) with respect to agent $i'$ if the corresponding inequality is satisfied for $i$ and $i'$, and that agent $i$ is envy-free (resp., EFX) if the corresponding inequality is satisfied for $i$ and all $i'\in N$.
An allocation is said to be \emph{proportional} if $u_i(M_i)\geq u_i(M)/n$ for all $i\in N$.\footnote{When discussing proportionality, we will sometimes allow allocations to be \emph{partial}, i.e., leave some items unallocated. 
It is clear that if a partial allocation is proportional, then by allocating the remaining items arbitrarily, we obtain a complete proportional allocation.}

For agents $i\in N$ and items $j\in M$, the utilities $u_i(j)$ are drawn independently from a given distribution $\cD$ supported on $[0,1]$.
A distribution is said to be \emph{non-atomic} if it does not put positive probability on any single point.
For a non-atomic distribution $\cD$, we denote by $F_\cD$ and $f_{\cD}$ the cumulative distribution function (CDF) and the probability density function (PDF) of $\cD$ respectively. 
Throughout this work, the assumption that we place on the distributions we consider is that their PDFs are bounded, as stated more precisely below.

\begin{definition} \label{def:pdf-bounded}
For $\alpha,\beta > 0$, we say that a distribution $\cD$ supported on $[0, 1]$ is \emph{$(\alpha, \beta)$-PDF-bounded} if $\cD$ is non-atomic and $\alpha \leq f_{\cD}(x) \leq \beta$ for all $x \in [0, 1]$. 
We say that $\cD$ is \emph{PDF-bounded} if it is $(\alpha, \beta)$-PDF-bounded for some $\alpha, \beta > 0$.
\end{definition}

It follows from the definition that any $(\alpha,\beta)$-PDF-bounded distribution must have $\alpha\leq 1$ and $\beta\geq 1$.
Note that many common distributions, including the uniform distribution on $[0,1]$ (henceforth denoted by $U[0,1]$) and a normal distribution (with any mean and variance) truncated at $0$ and $1$ are PDF-bounded.
For the sake of convenience, we may use the notations $F_X$ and $f_X$ for a random variable $X$ to refer to the CDF and PDF of its associated distribution. Similarly, we say that $X$ is $(\alpha, \beta)$-PDF-bounded if its distribution is $(\alpha, \beta)$-PDF-bounded.
In this paper, we think of $\cD$ as a fixed distribution that does not change with $n$ and $m$; specifically, the parameters $\alpha, \beta$ are constants and our big-O notation may include terms that depend on these parameters.
We say that an event happens \emph{with high probability} if the probability that it happens approaches $1$ as $n\rightarrow\infty$.
Furthermore, when we write $\log n$, the logarithm is assumed to have base $2$.

In the \emph{round-robin algorithm}, the agents take turns picking their favorite item from the remaining items. 
We assume without loss of generality that the order in which the agents pick the items is $1,2,\dots,n,1,2\dots$ until the items run out.
The $t$-th ``round'' consists of each agent's $t$-th pick (so in the last round, not every agent may get to pick).

\subsection{Range-Conditioned Distributions}

In the analysis of the round-robin algorithm, we will often find ourselves dealing with distributions restricted to some range. We provide some useful notation and facts for such distributions next.

For any distribution $\cD$ and any real number $c\in (0,1]$, we use $\cD_{\leq c}$ to denote the conditional distribution of $\cD$ on $[0, c]$ (provided that $F_{\cD}(c) > 0$). Notice that the CDF and PDF of this distribution are
\begin{align} \label{eq:range-conditioned-CDF}
F_{\cD_{\leq c}}(x) = 
\begin{cases}
\frac{F_{\cD}(x)}{F_{\cD}(c)} &\text{ if } x \leq c; \\
1 &\text{ otherwise.}
\end{cases}
\end{align}
and
\begin{align*}
f_{\cD_{\leq c}}(x) =
\begin{cases}
\frac{f_{\cD}(x)}{F_{\cD}(c)} &\text{ if } x \leq c; \\
0 &\text{ otherwise.}
\end{cases}
\end{align*}

Now, let $Y$ be a random variable generated as follows: we draw $X$ from $\cD_{\leq c}$ and set $Y = X/c$. Then, from the above expression of $f_{\cD_{\leq c}}$, we have
\begin{align*}
f_{Y}(y) =
\begin{cases}
\frac{c \cdot f_{\cD}(cy)}{F_{\cD}(c)} &\text{ if } y \leq 1; \\
0 &\text{ otherwise.}
\end{cases}
\end{align*}
The following proposition follows almost immediately from the above expression. (Recall from Definition~\ref{def:pdf-bounded} that a PDF-bounded distribution is implicitly assumed to be supported on $[0, 1]$.)

\begin{proposition} \label{prop:conditional-bounded}
For any $(\alpha, \beta)$-PDF-bounded distribution $\cD$ and any $c \in (0, 1]$, suppose that $Y$ is a random variable where we draw $X$ from $\cD_{\leq c}$ and let $Y = X/c$. Then, $Y$ is $\left(\alpha/\beta, \beta/\alpha\right)$-PDF-bounded.
\end{proposition}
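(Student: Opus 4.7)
The plan is to verify the two requirements of $(\alpha/\beta,\beta/\alpha)$-PDF-boundedness directly from the explicit formula for $f_Y$ already derived in the excerpt, namely
\[
f_Y(y) = \frac{c\cdot f_{\cD}(cy)}{F_{\cD}(c)} \quad\text{for } y\in[0,1],
\]
together with the elementary bounds on $F_\cD(c)$ that come from the assumption $\alpha\le f_\cD\le\beta$.

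First I would note that non-atomicity of $Y$ is immediate: $Y$ is a measurable function of $X$, and the expression above exhibits a density with respect to Lebesgue measure on $[0,1]$, so $Y$ puts no mass on any single point. (This also justifies implicitly that $F_\cD(c)>0$, since $F_\cD(c) \ge \alpha c > 0$ whenever $c>0$, so conditioning on $[0,c]$ is well-defined.) Next, integrating the bound $\alpha\le f_\cD(t)\le\beta$ over $[0,c]$ gives
\[
\alpha c \;\le\; F_\cD(c) \;=\; \int_0^c f_\cD(t)\,dt \;\le\; \beta c.
\]
Combining with $\alpha\le f_\cD(cy)\le\beta$ for $y\in[0,1]$, the formula for $f_Y$ yields
\[
\frac{\alpha}{\beta} \;=\; \frac{c\cdot\alpha}{\beta c} \;\le\; f_Y(y) \;\le\; \frac{c\cdot\beta}{\alpha c} \;=\; \frac{\beta}{\alpha},
\]
which is exactly the claimed PDF-boundedness of $Y$.

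There is no real obstacle here; the proposition is essentially a one-line calculation once the density of $Y$ has been written down, which the excerpt has already done. The only subtlety to mention is the implicit check that $F_\cD(c)>0$ so that the conditional distribution $\cD_{\le c}$ is well-defined, and this is handled by the lower bound $F_\cD(c)\ge \alpha c>0$ mentioned above.
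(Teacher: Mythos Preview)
Your proof is correct and follows essentially the same approach as the paper: both use the explicit density formula for $Y$, bound $F_\cD(c)\in[\alpha c,\beta c]$ by integrating the pointwise bounds on $f_\cD$, and then combine with $\alpha\le f_\cD(cy)\le\beta$ to obtain the desired $(\alpha/\beta,\beta/\alpha)$ bounds on $f_Y$. Your additional remarks on non-atomicity and the well-definedness of $\cD_{\le c}$ are harmless elaborations that the paper leaves implicit.
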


\begin{proof}
Since $\cD$ is $(\alpha, \beta)$-PDF-bounded, we have $\alpha\leq f_\cD(x) \leq \beta$ for all $x \in [0, c]$. This implies that $F_{\cD}(c) \in [c \cdot \alpha, c \cdot \beta]$. Hence, for all $y \in [0, 1]$, we have
\begin{align*}
f_Y(y) = \frac{c \cdot f_{\cD}(cy)}{F_{\cD}(c)} \leq \frac{c \cdot \beta}{c \cdot \alpha} = \frac{\beta}{\alpha}
\end{align*}
and
\begin{align*}
f_Y(y) = \frac{c \cdot f_{\cD}(cy)}{F_{\cD}(c)} \geq \frac{c \cdot \alpha}{c \cdot \beta} = \frac{\alpha}{\beta}.
\end{align*}
Hence $Y$ is $\left(\alpha/\beta, \beta/\alpha\right)$-PDF-bounded, as desired.
\end{proof}

For any real number $c < 1$ (such that $F_{\cD}(c) < 1$), we also define $\cD_{> c}$ in a similar manner as $\cD_{\leq c}$ above.

\subsection{(Anti-)Concentration Inequalities}

We will need a few (anti-)concentration inequalities for sums of independent random variables. Our first inequality is the standard Chernoff bound:

\begin{lemma}[Chernoff bound] \label{lem:chernoff}
Let $X_1, \dots, X_k$ be independent random variables taking values in $[0, 1]$, and let $S := X_1 + \cdots + X_k$. Then, for any $\delta \geq 0$,
$$\Pr[S \geq (1 + \delta)\E[S]] \leq \exp\left(\frac{-\delta^2 \E[S]}{3}\right)$$
and
$$\Pr[S \leq (1 - \delta)\E[S]] \leq \exp\left(\frac{-\delta^2 \E[S]}{2}\right).$$
\end{lemma}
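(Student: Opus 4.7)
The plan is to prove both tails by the standard exponential Markov (moment generating function) method, which is the textbook derivation since the Chernoff bound is a classical result that the authors are merely citing rather than establishing anew.

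For the upper tail, I would start from Markov's inequality applied to the exponential: for any $t > 0$,
\[
\Pr[S \geq (1+\delta)\E[S]] = \Pr[e^{tS} \geq e^{t(1+\delta)\E[S]}] \leq e^{-t(1+\delta)\E[S]}\cdot\E[e^{tS}].
\]
Using independence, $\E[e^{tS}] = \prod_{i=1}^{k}\E[e^{tX_i}]$. Since each $X_i \in [0,1]$, the convexity of $x\mapsto e^{tx}$ gives $e^{tX_i}\leq 1 - X_i + X_i e^t$, so
\[
\E[e^{tX_i}]\leq 1 + \E[X_i](e^t-1)\leq \exp(\E[X_i](e^t-1)).
\]
Taking the product yields $\E[e^{tS}]\leq \exp(\E[S](e^t-1))$, and then choosing $t=\ln(1+\delta)$ gives the raw Chernoff estimate
\[
\Pr[S\geq(1+\delta)\E[S]] \leq \left(\frac{e^{\delta}}{(1+\delta)^{1+\delta}}\right)^{\E[S]}.
\]
The remaining step is the analytic simplification showing $(1+\delta)\ln(1+\delta)-\delta \geq \delta^2/3$ in the intended range of $\delta$, which follows from a Taylor comparison of $g(\delta):=(1+\delta)\ln(1+\delta)-\delta-\delta^2/3$ (with $g(0)=0$ and $g'(\delta)=\ln(1+\delta)-2\delta/3$ being nonnegative on the relevant range).

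The lower tail is entirely analogous. Applying Markov to $e^{-tS}$ for $t>0$, the same convexity argument gives $\E[e^{-tS}]\leq \exp(\E[S](e^{-t}-1))$, and optimizing at $t=-\ln(1-\delta)$ produces
\[
\Pr[S\leq(1-\delta)\E[S]] \leq \left(\frac{e^{-\delta}}{(1-\delta)^{1-\delta}}\right)^{\E[S]}.
\]
The reduction to $\exp(-\delta^2\E[S]/2)$ reduces to showing $(1-\delta)\ln(1-\delta)+\delta\geq\delta^2/2$ on $[0,1]$, again by a one-line calculus check on the difference.

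I do not anticipate a genuine obstacle here: the argument is standard and every step is mechanical, with the only slightly delicate point being the final inequality that converts the sharp Chernoff ratio into the convenient $e^{-\delta^2\E[S]/c}$ form. Since the statement is invoked as a black box throughout the rest of the paper, a citation to a textbook (e.g., the formulation in Mitzenmacher and Upfal) would suffice in lieu of reproducing the derivation.
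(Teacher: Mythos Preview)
The paper does not prove this lemma at all; it is stated as the standard Chernoff bound and invoked as a black box, exactly as you anticipate in your final sentence. Your MGF/Markov derivation is the canonical textbook argument and is correct, with the usual caveat that the calculus step $(1+\delta)\ln(1+\delta)-\delta\geq\delta^2/3$ actually fails for $\delta$ larger than roughly $1.8$, so the upper-tail form as literally stated requires the customary range restriction to $\delta\in[0,1]$ (or the $\delta^2/(2+\delta)$ variant for all $\delta\geq 0$); the paper's applications never leave the small-$\delta$ regime, so this is immaterial.
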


Our next inequality is a sharpening of (the first case of) the above Chernoff bound for a certain regime. The formal statement of the bound is stated below; we note here that the important case is when we would like to bound $\Pr[Y_1 + \cdots + Y_r \geq r - O(1)]$. In this case, the Chernoff bound\footnote{In particular, we may apply the first inequality in Lemma~\ref{lem:chernoff} with $k = r$ (so $\E[S]$ is linear in $r$) and $\delta = O(1)$.} gives an $n^{-\Omega(1)}$ bound only when $r = \Omega(\log n)$, whereas the following inequality gives an $n^{-\Omega(1)}$ bound even when $r = \Omega(\log n / \log \log n)$ (by letting $c = O(1)$ and $d = \frac{r}{2c}$). This $\log \log n$ saving in $r$ is precisely what will allow us to establish the (asymptotically) tight bound for the existence of envy-free allocations, which we do in Section~\ref{sec:EF}.

\begin{lemma} \label{lem:concen-sum}
Let $r, c, d$ be any positive integers such that $r \geq cd$, and let $Y_1, \dots, Y_r$ be independent random variables that are $(\talpha, \tbeta)$-PDF-bounded. Then, we have
\begin{align*}
\Pr[Y_1 + \cdots + Y_r \geq r - c] \leq 2^r \left(\frac{\tbeta}{d}\right)^{r - cd}.
\end{align*} 
\end{lemma}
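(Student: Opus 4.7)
The plan is to reduce the event $\{Y_1 + \cdots + Y_r \geq r - c\}$ to a statement about many variables being small, via complementation. Setting $Z_i := 1 - Y_i$, each $Z_i$ is supported on $[0, 1]$ with PDF $f_{Z_i}(z) = f_{Y_i}(1 - z) \leq \tbeta$, so $Z_i$ inherits the same upper bound, and the event in question is equivalent to $Z_1 + \cdots + Z_r \leq c$.

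Next I would carry out a pigeonhole step. Since each $Z_i \geq 0$, whenever $Z_1 + \cdots + Z_r \leq c$, strictly fewer than $cd$ of the indices can satisfy $Z_i > 1/d$, because otherwise the sum alone from those indices would exceed $cd \cdot (1/d) = c$. Hence there exists a subset $S \subseteq [r]$ of size exactly $r - cd$ with $Z_i \leq 1/d$ for every $i \in S$. Consequently, the event $\{Z_1 + \cdots + Z_r \leq c\}$ is contained in the union, over all size-$(r-cd)$ subsets $S \subseteq [r]$, of the event $E_S := \{Z_i \leq 1/d \text{ for all } i \in S\}$.

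A union bound then finishes the argument. For any fixed $S$ of size $r - cd$, the PDF upper bound gives $\Pr[Z_i \leq 1/d] \leq \int_0^{1/d} \tbeta \, dz = \tbeta/d$, and independence yields $\Pr[E_S] \leq (\tbeta/d)^{r - cd}$. The number of such subsets is $\binom{r}{r - cd} \leq 2^r$, and combining these factors produces exactly the stated bound $2^r (\tbeta/d)^{r - cd}$.

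I do not expect a serious obstacle; the only things to keep straight are the direction of the pigeonhole inequality (at most $cd-1$ indices can be ``large,'' hence at least $r - cd + 1$ are ``small,'' which in particular furnishes a subset of size $r - cd$ on which to apply independence) and the fact that this size matches the exponent appearing in the target bound. Note that the argument uses only the PDF upper bound $\tbeta$ and not the lower bound $\talpha$, which is consistent with the statement of the lemma: the upper bound alone controls how much probability mass can sit in a short interval near zero, and that is all that is needed to force $Z_1 + \cdots + Z_r$ to be rarely small.
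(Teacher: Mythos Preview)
Your proposal is correct and is essentially the same argument as the paper's: a pigeonhole step showing that at least $r - cd$ of the variables must lie within $1/d$ of the value $1$, followed by a union bound over the choice of those indices. The only cosmetic differences are your substitution $Z_i = 1 - Y_i$ and your union bound over subsets of \emph{fixed} size $r - cd$ (rather than the paper's sum over all subsets of size at most $cd$), which has the mild advantage of not needing the paper's preliminary case distinction $\tbeta \leq d$.
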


\begin{proof}
The inequality is trivial if $\tbeta > d$, so we may assume that $\tbeta \leq d$.
Let $S$ denote the set of indices $i$ such that $Y_i \leq 1 - \frac{1}{d}$. If $Y_1 + \cdots + Y_r \geq r - c$, we must have $|S| \leq cd$. As a result, by union bound, we have
\begin{align*}
\Pr[Y_1 + \cdots + Y_r \geq r - c] 
&\leq \sum_{S \subseteq [r], |S| \leq cd} \Pr\left[\forall i \notin S, Y_i > 1 - \frac{1}{d}\right] \\
&\leq \sum_{S \subseteq [r], |S| \leq cd} \left(\frac{\tbeta}{d}\right)^{r - |S|} \\
&\leq 2^r \left(\frac{\tbeta}{d}\right)^{r - cd},
\end{align*}
as claimed.
\end{proof}

Our last inequality is of the opposite nature from the above bounds: it says that the probability that $X_1 + \cdots + X_k$ is ``far'' from its expectation is still large (i.e., anti-concentration). 

\begin{lemma} \label{lem:anti-concen-sum}
Let $X_1, \dots, X_k$ be independent random variables sampled from $\cD$ whose support is a subset of $[0, 1]$, and let $S := X_1 + \cdots + X_k$. 
Suppose that $\cD$ has variance $\sigma^2 > 0$ and mean at most $1/2$.
Then, $$\Pr\left[S \leq \frac{k}{2} - 0.01\sigma \sqrt{k}\right] > 0.49 - O\left(\frac{1}{\sqrt{k}}\right),$$
where the constant in the big-O notation can depend on $\sigma$.
\end{lemma}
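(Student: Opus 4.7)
The plan is to apply the Berry--Esseen theorem to reduce the anti-concentration statement about $S$ to the (easy) corresponding statement for the standard normal. Recall that the Berry--Esseen theorem says that if $X_1, \dots, X_k$ are i.i.d.\ with mean $\mu_0$, variance $\sigma^2$, and finite third absolute central moment $\rho := \E[|X_1 - \mu_0|^3]$, then for every $t \in \R$,
\[
\left|\Pr\!\left[\frac{S - k\mu_0}{\sigma\sqrt{k}} \leq t\right] - \Phi(t)\right| \;\leq\; \frac{C\,\rho}{\sigma^3 \sqrt{k}},
\]
where $\Phi$ is the standard normal CDF and $C$ is an absolute constant. In our setting each $X_i$ takes values in $[0,1]$, so $\rho \leq 1$, and therefore the right-hand side is $O(1/\sqrt{k})$ with the hidden constant depending only on $\sigma$.

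Next I would exploit the hypothesis that $\mu_0 \leq 1/2$, which implies $k\mu_0 \leq k/2$, hence the event $\{S \leq k\mu_0 - 0.01\sigma\sqrt{k}\}$ is contained in $\{S \leq k/2 - 0.01\sigma\sqrt{k}\}$. Therefore
\[
\Pr\!\left[S \leq \tfrac{k}{2} - 0.01\sigma\sqrt{k}\right] \;\geq\; \Pr\!\left[\frac{S - k\mu_0}{\sigma\sqrt{k}} \leq -0.01\right] \;\geq\; \Phi(-0.01) - O\!\left(\frac{1}{\sqrt{k}}\right).
\]

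Finally I would note the elementary numerical fact that $\Phi(-0.01) > 0.49$: since the standard normal density at $0$ equals $1/\sqrt{2\pi} < 0.4$, we have $\Phi(0) - \Phi(-0.01) \leq 0.01 \cdot (1/\sqrt{2\pi}) < 0.004$, so $\Phi(-0.01) > 0.5 - 0.004 = 0.496 > 0.49$. Combining this with the previous display yields the claimed lower bound.

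There is essentially no main obstacle here; the only subtlety is making sure that the Berry--Esseen error term can indeed be absorbed into the $O(1/\sqrt{k})$, which is immediate because $[0,1]$-valued random variables have $\rho \leq 1$ and $\sigma$ is treated as a constant (as allowed in the statement). The use of the mean $\leq 1/2$ condition appears only in passing from $k\mu_0$ to $k/2$ in the threshold, which is precisely why that assumption is needed.
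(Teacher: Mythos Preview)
Your proposal is correct and follows essentially the same approach as the paper: apply Berry--Esseen to the normalized sum, use $\mu_0 \leq 1/2$ to replace $k\mu_0$ by $k/2$ in the threshold, and then invoke $\Phi(-0.01) > 0.49$. Your write-up is in fact slightly more explicit than the paper's, as you justify why the Berry--Esseen error is $O(1/\sqrt{k})$ (via $\rho \leq 1$) and give a numerical argument for $\Phi(-0.01) > 0.49$.
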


\begin{proof}
Let $\mu \leq 1/2$ denote the mean of $\cD$.
Consider $S' := \frac{1}{\sigma \sqrt{k}}(S - k \cdot \mu)$. From the Berry-Esseen theorem~\cite{Berry41,Esseen42}, the CDF of $S'$ and that of the standard normal distribution $\Phi$ differ by at most $O(1/\sqrt{k})$ pointwise. It follows that
\begin{align*}
\Pr\left[S \leq \frac{k}{2} - 0.01\sigma\sqrt{k}\right] 
&\geq \Pr\left[S \leq k \cdot \mu - 0.01\sigma\sqrt{k}\right] \\
&= \Pr[S' \leq -0.01] \\
&\geq \Phi(-0.01) - O\left(\frac{1}{\sqrt{k}}\right) \\
&> 0.49 - O\left(\frac{1}{\sqrt{k}}\right). \qedhere
\end{align*}
\end{proof}

\subsection{An Inequality for the Round-Robin Algorithm}

We now present a lemma related to the round-robin algorithm. To state this lemma, let us introduce another notation: for any distribution $\cD$ and any positive integer $k$, we use $\cD^{\max(k)}$ to denote the distribution of the maximum of $k$ independent random variables distributed according to $\cD$. 


In the round-robin algorithm, consider any agent $i$ and let $X_r$ be his value for the item that he gets in round $r$ (and $X_0 = 1$ for convenience). We will show later (in Lemma~\ref{lem:rr-simplified-process}) that $X_1,X_2, \dots$ are distributed as if they were drawn from the following process: for $r = 1, 2,\dots$, sample $X_r$ from $\cD^{\max(m + 1 - i - n(r - 1))}_{\leq X_{r - 1}}$. We often want to show that $X_r$ is large for a specified $r$. We make a generic calculation below, which will be used multiple times in this work.

\begin{lemma} \label{lem:rep-sel}
Let $T$ be a positive integer and let $s_1, \dots, s_T$ be any positive integers. Consider the random variables $X_0 = 1$ and $X_1, \dots, X_T$ generated by the following process: for every $t = 0,1, \dots, T - 1$, sample $X_{t + 1}$ according to $\cD^{\max(s_{t+1})}_{\leq X_t}$. If $\cD$ is $(\alpha, \beta)$-PDF-bounded, then for any parameter $p \in (0, 1)$ we have
\begin{align*}
\Pr\left[X_T \geq 1 - \frac{\beta}{\alpha} \cdot \frac{T\ln(T/p)}{s}\right] \geq 1 - p,
\end{align*}
where $s := \min\{s_1, \dots, s_T\}$.
\end{lemma}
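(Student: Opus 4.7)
The plan is to track the multiplicative ratios $Z_{t+1} := X_{t+1}/X_t$ and show that each one is close to $1$ with high probability, then turn the resulting product bound into the additive bound in the statement.

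First I would use Proposition~\ref{prop:conditional-bounded} to understand $Z_{t+1}$. Conditional on $X_t$, the $s_{t+1}$ samples used to form $X_{t+1}$ are i.i.d.\ from $\cD_{\leq X_t}$, so after rescaling by $X_t$ each one is $(\alpha/\beta, \beta/\alpha)$-PDF-bounded on $[0,1]$. Hence $Z_{t+1}$ is distributed as the maximum of $s_{t+1}$ i.i.d.\ random variables whose common PDF is at least $\alpha/\beta$ on $[0,1]$. Consequently, for any $\epsilon \in (0,1)$,
\begin{align*}
\Pr[Z_{t+1} \leq 1-\epsilon \mid X_t] \leq \left(1 - \frac{\alpha}{\beta}\epsilon\right)^{s_{t+1}} \leq \exp\!\left(-\frac{\alpha}{\beta} \epsilon \, s_{t+1}\right).
\end{align*}

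Next, I would choose $\epsilon_{t+1} := (\beta/\alpha)\ln(T/p)/s_{t+1}$ so that the above probability is at most $p/T$, and apply a union bound over $t = 0, \ldots, T-1$ to conclude that with probability at least $1-p$, $Z_{t+1} \geq 1 - (\beta/\alpha)\ln(T/p)/s_{t+1}$ for every $t$. Note that by the definition of $s$ we have $\epsilon_{t+1} \leq (\beta/\alpha)\ln(T/p)/s$ for all $t$; we may also assume this quantity is at most $1/T$ since otherwise the claimed bound is vacuous.

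Finally, since $X_T = \prod_{t=1}^T Z_t$, on the good event I would use the elementary inequality $\prod_{t=1}^T (1 - a_t) \geq 1 - \sum_{t=1}^T a_t$ (valid for $a_t \in [0,1]$) to get
\begin{align*}
X_T \geq \prod_{t=1}^T \left(1 - \frac{\beta}{\alpha} \cdot \frac{\ln(T/p)}{s_t}\right) \geq 1 - \frac{\beta}{\alpha} \cdot \ln(T/p) \sum_{t=1}^T \frac{1}{s_t} \geq 1 - \frac{\beta}{\alpha} \cdot \frac{T \ln(T/p)}{s},
\end{align*}
as desired. The only mildly delicate step is the per-round tail bound: one must remember that the PDF lower bound $\alpha/\beta$ (not $\alpha$) is what governs the closeness of the max of $s_{t+1}$ samples to $1$, which is exactly where the factor $\beta/\alpha$ in the final bound comes from; everything else is union bound plus the product inequality.
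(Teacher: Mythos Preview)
Your proposal is correct and follows essentially the same approach as the paper's proof: both use Proposition~\ref{prop:conditional-bounded} to control the per-step ratio $X_{t+1}/X_t$, bound each step's failure probability by $p/T$, take a union bound, and convert the resulting product lower bound into the additive one (the paper via Bernoulli's inequality, you via the equivalent $\prod(1-a_t)\geq 1-\sum a_t$). The only cosmetic difference is that the paper uses the uniform threshold $(\beta/\alpha)\ln(T/p)/s$ at every step, whereas you use the slightly sharper per-step threshold $(\beta/\alpha)\ln(T/p)/s_{t+1}$ before relaxing to $s$ at the end; this changes nothing in the final bound.
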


\begin{proof}
The inequality trivially holds if the expression $1 - \frac{\beta}{\alpha} \cdot \frac{T\ln(T/p)}{s}$ is negative, so we may assume that this expression is nonnegative, which means that $1 - \frac{\beta}{\alpha} \cdot \frac{\ln(T/p)}{s}$ is also nonnegative.
For every $t = 0,1, \dots, T - 1$, Proposition~\ref{prop:conditional-bounded} implies that if we sample $Z \sim \cD_{\leq X_t}$ and let $Y = Z/X_t$, then $Y$ is $(\alpha/\beta, \beta/\alpha)$-PDF-bounded. As a result, we have
\begin{align*}
\Pr\left[Z < \left(1 - \frac{\beta}{\alpha} \cdot \frac{\ln(T/p)}{s}\right)X_t\right] &= \Pr\left[Y< 1 - \frac{\beta}{\alpha} \cdot \frac{\ln(T/p)}{s}\right] \\ 
(\text{From } (\alpha/\beta, \beta/\alpha)\text{-PDF-boundedness of } Y) &\leq 1 - \frac{\alpha}{\beta} \cdot \frac{\beta}{\alpha} \cdot \frac{\ln(T/p)}{s} \\
&= 1 - \frac{\ln(T/p)}{s}.
\end{align*}

From the above inequality and since $X_{t + 1}$ is sampled from $\cD_{\leq X_t}^{\max(s_{t + 1})}$, we have
\begin{align*}
\Pr\left[X_{t + 1} < \left(1 - \frac{\beta}{\alpha} \cdot \frac{\ln(T/p)}{s}\right) X_t\right] \leq \left(1 - \frac{\ln(T/p)}{s}\right)^{s_{t + 1}} 
\leq e^{-\frac{\ln(T/p)}{s} \cdot s_{t + 1}}
\leq e^{-\ln(T/p)}
= p/T,
\end{align*}
where the second inequality follows from the well-known inequality $1-x\leq e^{-x}$, which holds for any real number $x$.

Hence, by union bound, with probability at least $1 - p$, we have $X_{t + 1} \geq \left(1 - \frac{\beta}{\alpha} \cdot \frac{\ln(T/p)}{s}\right) X_t$ for all $t = 0,1, \dots, T - 1$.
When this is the case, we get
\begin{align*}
X_T \geq \left(1 - \frac{\beta}{\alpha} \cdot \frac{\ln(T/p)}{s}\right)^T X_0 \geq 1 - \frac{\beta}{\alpha} \cdot \frac{T\ln(T/p)}{s},
\end{align*}
where we use Bernoulli's inequality for the second inequality and the assumption that $X_0=1$. This completes the proof of the lemma.
\end{proof}

\section{Envy-freeness}
\label{sec:EF}

In this section, we consider envy-freeness. Our main result is the following theorem:

\begin{theorem} \label{thm:rr-ef}
Suppose that $\cD$ is PDF-bounded.
For $m = \Omega\left(\frac{n \log n}{\log\log n}\right)$, the round-robin algorithm outputs an envy-free allocation with high probability.
\end{theorem}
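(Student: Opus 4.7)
The plan is to apply a union bound over the $O(n^2)$ ordered pairs of agents. For any pair $(i, i')$ with $i < i'$, agent $i$ picks before agent $i'$ in every round, so $i$'s own $r$-th pick value dominates $i$'s value for $i'$'s $r$-th pick; summing over $r$ gives no envy deterministically. Hence it suffices to show $\Pr[i \text{ envies } i'] = o(1/n^2)$ for each pair $(i,i')$ with $i > i'$. Fix such a pair and, for notational simplicity, assume both agents receive $q = \Theta(\log n / \log \log n)$ picks (the case where $m \bmod n \neq 0$ and the two agents receive a different number of picks requires only a minor adjustment, discussed at the end). Let $X_r := u_i(i\text{'s }r\text{-th pick})$ with $X_0 := 1$, and let $W_r' := u_i(i'\text{'s }r\text{-th pick})$.

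By the same analysis that establishes Lemma~\ref{lem:rr-simplified-process}, from agent $i$'s perspective each of $i'$'s picks is a uniformly random item in the remaining pool, and that pool's $u_i$-values are i.i.d.\ from $\cD_{\leq X_{r-1}}$. Consequently, conditional on $X_0, X_1, \ldots, X_{q-1}$, we have $X_r \sim \cD_{\leq X_{r-1}}^{\max(s_r)}$ with $s_r = m + 1 - i - n(r-1)$, while the rescaled values $\tilde W_r' := W_r'/X_{r-1}$ are independent and $(\alpha/\beta, \beta/\alpha)$-PDF-bounded by Proposition~\ref{prop:conditional-bounded}. Since $X_{r-1} \leq 1$, we also have $W_r' \leq \tilde W_r'$.

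The envy event $\sum_r X_r < \sum_r W_r'$ is contained in the union
\[
\left\{\sum_{r=1}^{q} X_r \leq q - 2\right\} \,\cup\, \left\{\sum_{r=1}^{q} W_r' \geq q - 2\right\}.
\]
For the first event, the hypothesis $m = \Omega(n\log n/\log\log n)$ ensures $s_{q-1} \geq n$, so Lemma~\ref{lem:rep-sel} with $T = q-1$ and $p = n^{-3}$ yields $X_{q-1} \geq 1 - O(q\log n/n) \geq 1 - 1/q$ with probability at least $1 - n^{-3}$; monotonicity of $X_r$ in $r$ then gives $\sum_{r=1}^{q-1} X_r \geq (q-1)(1 - 1/q) > q-2$. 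For the second event, using $W_r' \leq \tilde W_r'$ and applying Lemma~\ref{lem:concen-sum} to $\tilde W_1', \ldots, \tilde W_q'$ (conditional on the $X$-history) with $c = 2$ and $d = q/4$ gives the bound $2^q \bigl(4\beta/(\alpha q)\bigr)^{q/2}$. Choosing the implicit constant in $m = \Omega(n \log n/\log\log n)$ sufficiently large (so that $q \geq C \log n/\log\log n$ for large $C$), this evaluates to $n^{-\Omega(1)}$, precisely as indicated in the discussion preceding Lemma~\ref{lem:concen-sum}.

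The main obstacle is the second bound: a vanilla Chernoff bound on $\sum \tilde W_r'$ would achieve $o(1/n^2)$ only at the weaker threshold $q = \Omega(\log n)$, so it is the refined Lemma~\ref{lem:concen-sum}, with its $\log\log n$ saving, that permits us to reach the asymptotically tight regime $m = \Omega(n\log n / \log\log n)$. The edge case when $i' \leq m \bmod n < i$ (so that $i'$ receives one more pick than $i$) is handled by enlarging the constant in the above application from $c = 2$ to $c = 3$, which still yields an $o(1/n^2)$ bound.
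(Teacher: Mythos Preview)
Your proof has a genuine gap: the assumption that each agent receives $q = \Theta(\log n/\log\log n)$ picks is not notational convenience but a substantive restriction to $m = \Theta(n\log n/\log\log n)$. The theorem, however, asserts the conclusion for \emph{all} $m = \Omega(n\log n/\log\log n)$, so $q = \lfloor m/n\rfloor$ may be arbitrarily large (e.g.\ $m = n^2$ gives $q = n$). For large $q$ your bound on the first event collapses: Lemma~\ref{lem:rep-sel} applied with $T = q-1$ and $s = s_{q-1} \geq n$ yields only $X_{q-1} \geq 1 - O(q\ln(qn^3)/n)$, which is vacuous once $q \gtrsim \sqrt{n/\log n}$, so you cannot conclude $\sum_{r=1}^{q-1} X_r > q-2$. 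Shifting the threshold from $q-2$ to $q - K$ does not help either, since the required $K$ grows like $q^2\log(qn)/n$ and then the second-event bound via Lemma~\ref{lem:concen-sum} no longer gives $n^{-\Omega(1)}$.

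The paper circumvents this by a decomposition that permits truncation to a \emph{fixed} number $T = \Theta(\log m/\log\log m)$ of rounds, independent of $q$. The key is the \emph{offset} pairing: since $i > i'$, agent $i'$'s $(t{+}1)$-th pick occurs after agent $i$'s $t$-th pick, so $X^{i,i'}_{t+1} \leq X^i_t$ deterministically. Writing
\[
u_i(M_i) - u_i(M_{i'}) \;=\; -X^{i,i'}_1 \;+\; \sum_{t\geq 1}\bigl(X^i_t - X^{i,i'}_{t+1}\bigr)
\;\geq\; -1 \;+\; \sum_{t=1}^{T} X^i_t\bigl(1 - Y^{i,i'}_t\bigr),
\]
where $Y^{i,i'}_t := X^{i,i'}_{t+1}/X^i_t$, every tail term $t > T$ is nonnegative and may be dropped. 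One then chooses $T$ small enough that $s_t \geq m/2$ for all $t\leq T$, so Lemma~\ref{lem:rep-sel} gives $X^i_T \geq 1/2$ with probability $1 - O(1/m^3)$, and Lemma~\ref{lem:concen-sum} (exactly as you applied it, with $c=2$ and $d=T/4$) handles $\sum_{t\leq T} Y^{i,i'}_t < T - 2$. Your same-index pairing $X_r \leftrightarrow W_r'$ lacks this monotonicity ($W_r'$ is bounded only by $X_{r-1}$, not by $X_r$), which is why your argument cannot be truncated and is forced to control all $q$ rounds at once.
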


Since an envy-free allocation is unlikely to exist even when $m=\Theta(n\log n/\log\log n)$ if $m$ is not ``almost divisible'' by $n$ and the constant in the asymptotic notation is sufficiently small \citep{ManurangsiSu19}, the bound in Theorem~\ref{thm:rr-ef} is asymptotically tight.

Let us introduce an additional notation, which we will use in this section as well as in Section~\ref{sec:prop}. 
For any agents $i, i'$, denote by $X^{i, i'}_t$ agent $i$'s utility for the $t$-th item received by agent $i'$ in the round-robin algorithm, and let $X^{i,i'}_0 = 1$ for convenience. When $i = i'$, we abbreviate $X^{i, i'}_t$ as $X^i_t$. The following lemma, which was alluded to before Lemma~\ref{lem:rep-sel}, allows us to consider a simple random process that generates $(X^{i, i'}_t)_{i, i' \in [n], t \in \left[1 + \left\lfloor \frac{m - i'}{n} \right\rfloor\right]}$ rather than dealing with the round-robin algorithm directly. 

\begin{lemma} \label{lem:rr-simplified-process}
$(X^{i, i'}_t)_{i, i' \in [n], t \in \left[1 + \left\lfloor \frac{m - i'}{n} \right\rfloor\right]}$ has the same distribution as if it is generated as follows:
\begin{enumerate}
\item For $t = 1, \dots, \lceil m/n \rceil$:
\begin{enumerate}
\item For $i = 1, \dots, \min\{n, m - (t - 1)n\}$: 
\label{step:rr-item-selection}
\begin{enumerate}
\item Sample $X^i_t \sim \cD_{\leq X^i_{t - 1}}^{\max(m + 1 - (t - 1)n - i)}$. \label{step:rr-item-max-util}
\item For every $1 \leq i' < i$, sample $X^{i', i}_t \sim \cD_{\leq X^{i'}_t}$.
\item For every $i < i' \leq n$, sample $X^{i', i}_t \sim \cD_{\leq X^{i'}_{t - 1}}$.
\end{enumerate}
\end{enumerate}
\end{enumerate}
\end{lemma}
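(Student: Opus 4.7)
The plan is to prove the lemma by induction on the order in which agents pick items in the round-robin algorithm, maintaining the following distributional invariant: conditional on the complete history up to some point (sequence of items picked so far and all values observed), for each agent $i'$ the utilities $u_{i'}(j)$ for $j$ in the currently remaining items are i.i.d.\ $\cD_{\leq x}$, where $x$ is $i'$'s most recent own pick value (with $x = 1$ if $i'$ has not yet picked). The base case is immediate: before any picks, the utilities are i.i.d.\ $\cD = \cD_{\leq 1}$ by assumption.

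For the inductive step, suppose agent $i$ is about to pick in round $t$; I would verify the three kinds of random variables the process generates. For $X^i_t$, the invariant gives that the $m + 1 - (t - 1) n - i$ remaining items have utilities from $i$'s view i.i.d.\ $\cD_{\leq X^i_{t - 1}}$, so the maximum has the claimed distribution $\cD^{\max(m + 1 - (t - 1) n - i)}_{\leq X^i_{t - 1}}$. For $X^{i', i}_t$ with $i' \neq i$, the identity of the item agent $i$ picks is determined by $u_i$, which is independent of $u_{i'}$; hence from $i'$'s perspective it is a uniformly random item among those remaining, whose utility is therefore $\cD_{\leq x}$ where $x = X^{i'}_t$ if $i' < i$ (so $i'$ already picked in round $t$) and $x = X^{i'}_{t - 1}$ if $i' > i$, matching the two cases in the lemma. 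Afterwards, the invariant is restored on both sides: for agent $i$, the standard fact that i.i.d.\ samples conditioned on the max being $X^i_t$ leave the remaining ones i.i.d.\ $\cD_{\leq X^i_t}$ gives the new invariant; for each other agent $i'$, removing an item whose index is independent of $u_{i'}$ from an i.i.d.\ collection leaves the rest i.i.d.\ from the same distribution, independently of the removed item's value.

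The main subtlety I expect to be the principal obstacle is justifying that these invariants can be maintained \emph{simultaneously for all agents}, since the combined history involves information tied to every agent's utility vector. I would need to argue that conditioning on a pick determined by $u_i$ does not disturb the i.i.d.\ structure of any $u_{i'}$ for $i' \neq i$; this rests on the initial independence of $u_1, \dots, u_n$, together with an exchangeability step each time we refresh the conditioning. Once this bookkeeping is in place, chaining the matched conditional distributions along the induction yields equality of the joint distributions via the chain rule, as required.
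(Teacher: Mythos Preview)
Your proposal is correct and follows essentially the same approach as the paper: both proceed by induction over the picks, using the order-statistics fact that among $\ell$ i.i.d.\ samples the maximum has law $\cD^{\max(\ell)}$ and, conditioned on it, the remaining samples are i.i.d.\ $\cD_{\leq \text{max}}$ (the paper isolates and proves this as a separate lemma before applying it $m$ times, whereas you cite it as a standard fact). Your invariant-based phrasing is exactly the paper's iterative process transformation viewed from the other side, and your identified ``main subtlety'' about simultaneous agents is handled in the paper precisely via the independence of $u_1,\dots,u_n$ that you invoke.
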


The proof of Lemma~\ref{lem:rr-simplified-process} is deferred to the appendix. We note here that each loop in Step~\ref{step:rr-item-selection} should be thought of as agent $i$ choosing his/her $t$-th item. Observe also that $m + 1 - (t - 1)n - i$ is simply the number of remaining items before agent $i$'s choice is made, and Step~\ref{step:rr-item-max-util} can be thought of as picking the best among these items.

Before we prove Theorem~\ref{thm:rr-ef}, let us give the high-level intuition behind the proof. 
Consider any pair of agents $i,i'$. If $i<i'$, then clearly $i$ does not envy $i'$, so we focus on the case where $i>i'$. For $i$ to envy $i'$, we must have $u_i(M_{i'}) - u_i(M_i) > 0$. Note that $u_i(M_{i'}) - u_i(M_i)$ is at most
\begin{align*}
X^{i, i'}_1 - (X^i_1 - X^{i, i'}_2) - (X^i_2 - X^{i, i'}_3) - \cdots - (X^i_{z-1} - X^{i, i'}_z),
\end{align*}
where $z$ is the last round in which $i'$ picks an item.
In other words, $X^{i, i'}_1$ is the amount that $i'$ ``gains'' with her first item from $i$'s viewpoint, and $(X^i_1 - X^{i, i'}_2), (X^i_2 - X^{i, i'}_3), \dots,(X^i_{z-1} - X^{i, i'}_z)$ are the gains of $i$ that $i$ will use to try to ``catch up'', so that $i$ does not envy $i'$ in the end. Note that the first gain $X^{i, i'}_1$ of $i'$ is rather small, i.e., no more than $1$. Moreover, each of the gains $(X^i_t - X^{i, i'}_{t + 1})$ can be written as $X^i_t(1 - X^{i, i'}_{t + 1}/X^i_t)$. Lemma~\ref{lem:rep-sel} ensures that these ``scaling factors'' $X^i_t$ are relatively large, which allows us to apply the concentration of sums of independent random variables from Lemma~\ref{lem:concen-sum} to $X^{i, i'}_{t + 1}/X^i_t$. The proof below implements this idea with the appropriate selection of parameters. 

\begin{proof}[Proof of Theorem~\ref{thm:rr-ef}]
Suppose that $\cD$ is $(\alpha, \beta)$-PDF-bounded and that $m \geq 10^6 \tbeta \log(10^6 \tbeta) \cdot \frac{n\log(10n)}{\log \log(10n)}$, where $\tbeta := \beta/\alpha$. Consider any pair of distinct agents $i, i'$. We will show that the probability that $i$ envies $i'$ is at most $O(1/m^3)$. The union bound then allows us to immediately arrive at the desired result. 

Observe that, if $i < i'$, then clearly $i$ does not envy $i'$. As a result, we may henceforth assume that $i > i'$. In this case, from Lemma~\ref{lem:rr-simplified-process}, we may view the values $X^{i, i'}_1, X^i_1, X^{i, i'}_2, \dots$ as being generated by the following process:
\begin{enumerate}
\item Randomly sample $X^{i, i'}_1$ from $\cD$.
\item For $t = 1,2, \dots, 1 + \left\lfloor \frac{m - i}{n} \right\rfloor$:
\begin{enumerate}
\item Randomly sample $X^i_t$ from $\cD_{\leq X^i_{t - 1}}^{\max(m + 1 - i - n(t - 1))}$.
\item Randomly sample $X^{i, i'}_{t + 1}$ from $\cD_{\leq X^i_t}$.
\end{enumerate}
\end{enumerate}
Note that in the last iteration it may be possible that $X^{i, i'}_{t+1}$ is not actually selected in the round-robin algorithm if all the items are already assigned. In this case we overestimate $i$'s value for $i'$'s bundle; hence, we will still get an upper bound on the probability that $i$ envies $i'$.

Let us also define $Y^{i, i'}_{t} := X^{i, i'}_{t + 1}/X^i_t$.
Recall that $i$ does not envy $i'$ if $u_i(M_i) \geq u_i(M_{i'})$. To bound the probability that this happens, let us rearrange $u_i(M_i) - u_i(M_{i'})$ as
\begin{align}
u_i(M_i) - u_i(M_{i'}) &= \left(\sum_{t=1}^{1 + \left\lfloor\frac{m - i}{n}\right\rfloor} X^i_t\right) - \left(\sum_{t=1}^{2 + \left\lfloor\frac{m - i}{n}\right\rfloor} X^{i, i'}_t\right) \nonumber \\
&= \left(\sum_{t=1}^{1 + \left\lfloor\frac{m - i}{n}\right\rfloor} (X^i_t - X^{i, i'}_{t + 1})\right) - X^{i, i'}_1 \nonumber \\
&\geq \left(\sum_{t=1}^{1 + \left\lfloor\frac{m - i}{n}\right\rfloor} X^i_t \cdot (1 - Y^{i, i'}_t) \right) - 1. \label{eq:diff-rr}
\end{align}

Let $T := 100 \left\lceil \tbeta \cdot \frac{\log m}{\log \log m} \right\rceil$. 
Recall from Definition~\ref{def:pdf-bounded} that $\tbeta = \beta/\alpha \geq 1$, and notice that 
\begin{align}
m/2 - nT &= n\left(\frac{m}{4n} + \frac{m}{4n} - T\right) \nonumber \\
\left(\text{From } m \geq 10^6 \tbeta \log(10^6 \tbeta) \cdot \frac{n \log(10n)}{\log \log(10n)}\right) &\geq n\left(1000 \tbeta \frac{\log(10n)}{\log \log(10n)} + 1000\tbeta \log(m/n) - T\right) \nonumber \\
(\text{From } 1 \leq \log \log (10n) \leq \log \log m) &\geq n\left(1000 \tbeta \frac{\log(10n)}{\log \log m} + 1000\tbeta \frac{\log(m/n)}{\log \log m} - T\right) \nonumber \\
&= n\left(1000 \tbeta \cdot \frac{\log (10m)}{\log \log m} - T\right) \nonumber \\
&\geq 0, \label{eq:bound-on-index-T}
\end{align}
where in the first inequality we also use the fact that $\frac{x}{\log x}$ is non-decreasing in the range $x \in [e, \infty)$, which implies that $\frac{m/n}{\log(m/n)} \geq \frac{10^6 \tbeta \log(10^6 \tbeta)}{\log(10^6 \tbeta \log(10^6 \tbeta))} \geq 4000 \tbeta$.
This in turn implies that $T \leq \frac{m}{2n} \leq 1 + \left\lfloor\frac{m - i}{n}\right\rfloor$, i.e., that $X^{i, i'}_T, X^i_T$ are well-defined.

We now consider two ``bad'' events:
\begin{enumerate}
\item[E1]: The event that $Y^{i, i'}_1 + \cdots + Y^{i, i'}_T \geq T - 2$.
\item[E2]: The event that $X^i_T < 1/2$.
\end{enumerate}

Let us bound the probability that each event occurs. For the event E1, consider any $x_1^i, \dots, x_T^i \in [0, 1]$. Conditioned on $X_1^i = x_1^i, \dots, X_T^i = x_T^i$, Proposition~\ref{prop:conditional-bounded} implies that $Y_1^{i, i'}, \dots, Y^{i, i'}_T$ are $(\alpha/\beta, \tbeta)$-PDF-bounded, and observe that they are independent. Hence, by applying Lemma~\ref{lem:concen-sum} with $r = T$, $c = 2$, and $d = T/4$, we have
\begin{align*}
\Pr[\text{E1 occurs} \mid X_1^i = x_1^i, \dots, X_T^i = x_T^i] &\leq 2^T \left(\tbeta \cdot \frac{4}{T}\right)^{T/2} \\
&= \left(\tbeta \cdot \frac{16}{T}\right)^{T/2} \\
(\text{From our choice of } T) &\leq \left(\frac{\log \log m}{\log m}\right)^{6 \frac{\log m}{\log \log m}} \\
(\text{Since } \log\log m \leq \sqrt{\log m} \text{ for any } m \geq 10^6) &\leq \left(\frac{1}{\sqrt{\log m}}\right)^{6 \frac{\log m}{\log \log m}}  \\
&= \frac{1}{m^3}.
\end{align*}
Hence, we have
\begin{align*}
&\Pr[\text{E1 occurs}] \\
&= \int_0^1 \cdots \int_0^1 \quad \Pr[\text{E1 occurs} | X_1^i = x_1^i, \dots, X_T^i = x_T^i] \cdot f_{X_1^i, \dots, X_T^i}(x_1^i, \dots, x_T^i) \quad d x_1^i \cdots d x_T^i  \\
&\leq \int_0^1 \cdots \int_0^1 \frac{1}{m^3} \cdot f_{X_1^i, \dots, X_T^i}(x_1^i, \dots, x_T^i) \quad d x_1^i \cdots d x_T^i \\
&= \frac{1}{m^3},
\end{align*}
where $f_{X_1^i, ..., X_T^i}$ denotes the PDF of the joint distribution over $X_1^i, ..., X_T^i$ (which are \emph{not} independent random variables).

As for the event E2, notice that the sequence $X^i_0, X^i_1, \dots, X^i_T$ is sampled in the same way as that in Lemma~\ref{lem:rep-sel} with $s_t = m + 1 - i - n(t - 1)$ for $t=1,\dots,T$. Observe also that, for sufficiently large $m$, we have $\frac{\beta}{\alpha} \cdot \frac{T\ln(T \cdot m^3)}{m/2} \leq 1/2$ and $s_1, \dots, s_T \geq m + 1 - i - n(T - 1) \geq m - nT \geq m/2$, where the last inequality follows from~\eqref{eq:bound-on-index-T}. As a result, by plugging in Lemma~\ref{lem:rep-sel} with $p = 1 / m^3$, we have $\Pr[\text{E2 occurs}] \leq \frac{1}{m^3}$.

Hence, by union bound, the probability that at least one of the two bad events occurs is at most $O(1/m^3)$. Now, when neither E1 nor E2 occurs, we can further bound~\eqref{eq:diff-rr} as
\begin{align*}
u_i(M_i) - u_i(M_{i'}) &\geq \left(\sum_{t=1}^{1 + \left\lfloor\frac{m - i}{n}\right\rfloor} X^i_t \cdot (1 - Y^{i, i'}_t) \right) - 1 \\
&\geq \left(\sum_{t=1}^{T} X^i_t \cdot (1 - Y^{i, i'}_t) \right) - 1 \\
(\text{since E2 does not occur}) &\geq \left(\frac{1}{2} \sum_{t=1}^{T} (1 - Y^{i, i'}_t) \right) - 1 \\
(\text{since E1 does not occur}) &\geq \left(\frac{1}{2} \cdot 2 \right) - 1 = 0,
\end{align*} 
meaning that $i$ does not envy $i'$. As a result, the probability that $i$ does not envy $i'$ is at least $1 - O(1/m^3)$. By taking a union bound over all pairs $i, i'$, we have that the allocation is envy-free with probability at least $1 - O(1/m)$, completing the proof.
\end{proof}

\section{Proportionality}
\label{sec:prop}

In this section, we investigate another fundamental fairness notion, proportionality, and establish the following result:

\begin{theorem}
\label{thm:prop}
Suppose that $\cD$ is PDF-bounded and has mean at most $1/2$.
For any $m \geq n$, there is a polynomial-time algorithm that outputs a proportional allocation with high probability.
\end{theorem}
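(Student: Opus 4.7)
The plan is to split into two regimes according to whether $m$ is at least $2n$, using a different algorithm in each.

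\emph{Regime 1 ($m \geq 2n$):} Run the round-robin algorithm and argue that each agent obtains their proportional share with high probability. The Chernoff bound (Lemma~\ref{lem:chernoff}) applied to the $m$ i.i.d.\ samples $\{u_i(j)\}_{j \in M}$ yields $u_i(M) \leq m\mu + O(\sqrt{m \log n})$ with probability $1 - n^{-\Omega(1)}$, so the proportional share satisfies $u_i(M)/n \leq m/(2n) + o(m/n)$ because $\mu \leq 1/2$. In the other direction, setting $T := \lfloor m/(2n) \rfloor$ ensures that $s_t = m + 1 - i - (t-1)n = \Omega(m)$ for every $t \leq T$, so applying Lemma~\ref{lem:rep-sel} gives $X^i_t \geq 1 - o(1)$ for each such $t$; summing, $u_i(M_i) \geq T(1 - o(1))$. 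A union bound over $i$ and combining the two inequalities then establishes proportionality with high probability (with the slack coming from $\mu \leq 1/2$), and the boundary case $\mu = 1/2$ is handled with the aid of the anti-concentration bound (Lemma~\ref{lem:anti-concen-sum}) to extract additional room.

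\emph{Regime 2 ($n \leq m < 2n$):} Use a matching-based algorithm. It sets $\tau_i := u_i(M)/n$ for each agent, builds the bipartite graph $G = ([n], M)$ whose edges are the pairs $(i, j)$ with $u_i(j) \geq \tau_i$, finds a perfect matching $\pi \colon [n] \to M$ in $G$, and assigns each agent $i$ the item $\pi(i)$ together with an arbitrary distribution of the remaining $m - n$ items. When $\pi$ exists, proportionality is immediate because $u_i(M_i) \geq u_i(\pi(i)) \geq \tau_i = u_i(M)/n$. To argue $\pi$ exists with high probability, verify Hall's condition: for every $S \subseteq [n]$, $|N_G(S)| \geq |S|$. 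Since $\tau_i$ concentrates around $m\mu/n \leq m/(2n)$ by Chernoff and $f_\cD \geq \alpha$, each potential edge $(i, j)$ is present with probability at least $\alpha(1 - m/(2n) - o(1))$, and a union bound over subsets $S$, split by $|S|$ into small and large regimes, yields Hall's condition with probability $1 - o(1)$.

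The main obstacle is the edge regime where $m$ is close to $2n$ and $\mu$ is close to $1/2$; in this range, both analyses tighten simultaneously. The round-robin slack $1 - 2\mu$ shrinks to zero, and the matching's acceptance probability $\alpha(1 - m/(2n))$ likewise shrinks. Closing this boundary requires either setting the threshold $\tau_i$ slightly below $u_i(M)/n$ (leaving room for Hall's condition) or leveraging Lemma~\ref{lem:anti-concen-sum}'s anti-concentration in the round-robin analysis to produce the needed slack.
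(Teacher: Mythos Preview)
Your two-regime split ($m \geq 2n$ via round-robin, $n \leq m < 2n$ via matching) matches the paper's. However, both regimes have genuine gaps precisely at the boundary $\mu = 1/2$, and your proposed fixes do not close them.

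\textbf{Regime 1.} With $T = \lfloor m/(2n)\rfloor$, your lower bound $u_i(M_i) \geq T(1-o(1))$ is at most $m/(2n)$, which \emph{equals} the proportional share when $\mu=1/2$---there is no slack (e.g., at $m=2n$ you get $T=1$, while $u_i(M)/n$ can be $1 + \Theta(\sqrt{\log n/n})$). Lemma~\ref{lem:anti-concen-sum} does not rescue this: it lower-bounds $\Pr[u_i(M) \leq m/2 - \Omega(\sqrt{m})]$ for a \emph{single} agent, but you need $u_i(M)$ small for \emph{every} $i$, and with high probability some agent has $u_i(M) > m/2$. The paper instead extracts value from essentially all $r=\lfloor m/n\rfloor$ rounds: when $q:=m-nr \geq n^{0.1}$ it applies Lemma~\ref{lem:rep-sel} through round $r$ (so $u_i(M_i)\approx r$ versus a share of $\approx (r+1)/2$), and when $q<n^{0.1}$ it applies Lemma~\ref{lem:rep-sel} through round $r-1$ and then gives a per-agent analysis of the last round (Claim~\ref{claim:proport-each-agent}), exploiting that agent $i$ still has at least $n-i+1$ items to choose from there.

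\textbf{Regime 2.} Your single-stage matching with $\tau_i = u_i(M)/n$ fails when $m$ is close to $2n$. If $\mu=1/2$ and $m=2n-1$, then by the central limit theorem $\Pr[u_i(M) > n]$ is bounded away from $0$, in which case $\tau_i > 1$ and agent $i$ is isolated in $G$; thus with high probability a constant fraction of agents are isolated and no perfect matching exists. Lowering $\tau_i$ below $u_i(M)/n$ (your suggested fix) forfeits proportionality from a single item, so you must then hand those agents a second item---which is precisely what the paper does. Algorithm~\ref{alg:matching-two-stage} first uses a \emph{fixed} threshold $\tau = 1 - \Theta(\log n/n)$ to guarantee a perfect matching on the first $n$ items, and then runs a second matching giving one extra item to each ``violated'' agent. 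Lemma~\ref{lem:anti-concen-sum} is used here in the right direction: not to bound all $u_i(M)$ simultaneously, but to show each agent is violated with probability at most $0.55$, so by Chernoff at most $0.6n$ agents need fixing, and the $m-n \geq 0.9n$ leftover items suffice.
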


The assumption that $\cD$ has mean at most $1/2$ is necessary to guarantee the existence of a proportional allocation for all $m \geq n$ with high probability. 
To see this, suppose that $\cD$ has mean $1/2 + \varepsilon$ for some constant $\varepsilon > 0$, and let $m = 2n - 1$. 
In this case, the expected value of $u_1(M),\dots,u_n(M)$ is $n(1+2\varepsilon-o(1))$; by standard Chernoff and union bounds, we have that with high probability, $u_i(M) > n$ simultaneously for all $i$. 
When this happens, any allocation cannot be proportional---indeed, it is not proportional for an agent who receives at most one item (since each item has value at most $1$), and such an agent always exists due to the pigeonhole principle.

The proof of Theorem~\ref{thm:prop} will be divided into two parts according to the range of $m$.
For the case $m\geq 2n$ we will again employ the round-robin algorithm (Theorem~\ref{thm:rr-prop}), while the case $n\leq m\leq 2n$ will be handled using a matching-based algorithm (Theorem~\ref{thm:matching-prop}).

\subsection{The Case $m\geq 2n$}
\label{sec:prop-rr}

We begin by showing that the allocation produced by the round-robin algorithm, in addition to satisfying EF1 with certainty and envy-freeness with high probability, is also likely to be proportional even for a modest number of items. 

\begin{theorem} \label{thm:rr-prop}
Suppose that $\cD$ is PDF-bounded and has mean at most $1/2$.
When $m \geq 2n$, the round-robin algorithm outputs a proportional allocation with high probability. 
\end{theorem}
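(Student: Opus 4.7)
My plan is to handle agent $1$ with a deterministic argument and each other agent $i > 1$ with a high-probability argument, then union-bound the failure events over all agents.

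For agent $1$, proportionality holds with certainty. Every item picked by any agent $i' \neq 1$ in round $t$ was available at the start of that round, when agent $1$ picked an item of value $X^1_t$ to her; hence each such item has value at most $X^1_t$ from agent $1$'s perspective. Summing, $u_1(M_{i'}) \leq \sum_t X^1_t = u_1(M_1)$ for every $i'$, so $u_1(M) = \sum_{i'} u_1(M_{i'}) \leq n \cdot u_1(M_1)$, giving $u_1(M_1) \geq u_1(M)/n$.

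For $i > 1$, I would combine an upper bound on $u_i(M)$ with a lower bound on $u_i(M_i)$. For the upper bound, since $u_i(M) = \sum_{j \in M} u_i(j)$ is a sum of $m$ iid $\cD$-values with mean $\mu \leq 1/2$, the Chernoff bound (Lemma~\ref{lem:chernoff}) yields $u_i(M) \leq m/2 + O(\sqrt{m \log n})$ with probability $\geq 1 - 1/n^3$; hence $u_i(M)/n \leq m/(2n) + O(\sqrt{m \log n}/n)$. For the lower bound, I apply Lemma~\ref{lem:rep-sel} to the prefix $X^i_0, X^i_1, \ldots, X^i_{T_i - 1}$ (where $T_i = 1 + \lfloor (m - i)/n \rfloor$), using the fact that $s_t = m + 1 - i - n(t-1) \geq n + 1$ for every $t \leq T_i - 1$. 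With $p = 1/n^3$, the lemma gives $X^i_t \geq 1 - O((\beta/\alpha)(T_i - 1)\log(nT_i)/n)$ for all such $t$ with probability $\geq 1 - 1/n^3$. When $T_i \log T_i = o(n/\log n)$ this yields $\sum_{t=1}^{T_i - 1} X^i_t \geq (T_i - 1)(1 - o(1))$; for larger $m$, I would partition $[1, T_i - 1]$ into blocks on which $s_t$ is roughly constant and apply Lemma~\ref{lem:rep-sel} within each block, still obtaining $u_i(M_i) \geq (T_i - 1)(1 - o(1))$.

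Combining these bounds gives $u_i(M_i) - u_i(M)/n \geq (T_i - 1)(1 - o(1)) - m/(2n) - O(\sqrt{m \log n}/n)$. For $m \geq 3n$ (say), $T_i - 1 \geq m/n - 2 \geq m/(2n) + \Omega(m/n)$, so the margin is $\Omega(m/n)$ and comfortably positive; a union bound over agents suffices. The main obstacle is the boundary case $m \in [2n, 3n)$ where $T_i = 2$ and the margin from the first $T_i - 1 = 1$ rounds essentially vanishes. Here I augment the lower bound with the last pick $X^i_{T_i}$, which is the max of $s_{T_i} = ((m - i) \bmod n) + 1$ draws from $\cD_{\leq X^i_{T_i - 1}}$. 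For agents with $s_{T_i} \geq 2$, the max-of-draws structure yields a constant-order contribution with probability $\geq 1 - 1/\mathrm{poly}(n)$. The hardest subcase is agents with $s_{T_i} = 1$ (e.g., agent $n$ when $m = 2n$), where $X^i_{T_i}$ is essentially a single random draw. Since the Chernoff fluctuation of $u_i(M)/n$ is only $O(\sqrt{\log n / n})$, it suffices to show $X^i_{T_i} \geq \Omega(\sqrt{\log n/n})$, and PDF-boundedness of $\cD_{\leq X^i_{T_i - 1}}$ gives $\Pr[X^i_{T_i} \leq \delta X^i_{T_i - 1}] = O(\delta)$. Plugging in $\delta = \Theta(\sqrt{\log n/n})$ shows the failure probability for each such agent is $O(\sqrt{\log n/n})$; since there are only $O(1)$ such agents, the total failure probability is $o(1)$, completing the proof.
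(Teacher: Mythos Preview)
Your approach shares the core ingredients with the paper's proof---the Chernoff bound on $u_i(M)$, Lemma~\ref{lem:rep-sel} to lower-bound the value from early rounds, and a separate tail analysis for the last pick---but your case split has a genuine gap. The inequality $m/n - 2 \geq m/(2n) + \Omega(m/n)$ that you invoke for the ``easy case'' $m \geq 3n$ is simply false for $m \in [3n, 4n]$: rewriting, it asserts $m/(2n) - 2 = \Omega(m/n)$, which is nonpositive for $m \leq 4n$. Concretely, take $m = 4n - 1$ (so $r = 3$, $q = n - 1$). Agent $n$ has $T_n - 1 = 2$, so the first two rounds give value $2 - O(\log n/n)$, while $u_n(M)/n \leq 2 - 1/(2n) + O(\sqrt{\log n/n})$. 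The resulting margin is $1/(2n) - O(\sqrt{\log n/n}) < 0$, so the first $T_i - 1$ rounds alone do not suffice; you still need the last-round pick here. Since your ``hard case'' analysis is written only for $m \in [2n, 3n)$ with $T_i = 2$, this range falls through the cracks.

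The paper avoids this by splitting on $q$ rather than on $m$. When $q \geq n^{0.1}$, it applies Lemma~\ref{lem:rep-sel} to \emph{all} $r$ rounds (since every $s_t \geq q$), yielding $u_i(M_i) \geq r(1 - o(1)) > (r+1)/2 + o(1) \geq u_i(M)/n$. When $q < n^{0.1}$, it uses only the first $r - 1$ rounds via Lemma~\ref{lem:rep-sel} (where $s_t \geq n$) and then handles the last round exactly as you do, bounding the failure probability for agent $i$ by $O(n^{-\min\{0.05(n - i + 1),2\}})$ and summing. Your fix is easy---raise the threshold to, say, $m \geq 5n$, or fold the $r = 3$ case into the hard-case analysis---but as written the argument is incomplete. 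Separately, your ``partition into blocks'' idea for very large $m$ is not fleshed out and would need care (the multiplicative losses across blocks have to be controlled); the paper sidesteps this entirely by invoking Theorem~\ref{thm:rr-ef}, which already gives envy-freeness (hence proportionality) once $m = \Omega(n\log n/\log\log n)$, leaving only $m = O(n\log n)$ to treat directly.
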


Theorem~\ref{thm:rr-prop} generalizes a result of \citet{AmanatidisMaNi17}, who showed an analogous existence but only for the uniform distribution $U[0, 1]$. 
We remark here that the requirement $m \geq 2n$ is tight. In particular, suppose that $m = 2n - 1$ and that $\cD = U[0, 1]$. Then there is at least a constant probability that $u_n(M) > n$. When this happens, the round-robin algorithm will surely fail, as it only assigns one item (of value at most 1) to the last agent $n$.

To illustrate the high-level idea of the proof, it is most useful to focus on the case $m = 2n$ and $\cD = U[0, 1]$. In this case, Lemma~\ref{lem:rep-sel} implies that with high probability, every agent receives an item that she values $1 - \tilde{O}(1/n)$ in the first round.\footnote{Here we use the notation $\tilde{O}$ to hide a multiplicative factor $O((\log n)^c)$ for some constant $c$.} For simplicity of the discussion, let us ignore the $\tilde{O}(1/n)$ term and assume that each agent receives an item of value $1$ in the first round. Now, from Chernoff bound, we have that every agent is likely to value the whole set $M$ at most $n(1 + \tilde{O}(1/\sqrt{n}))$. As a result, in order for the allocation to be proportional, it suffices for each agent to receive an item of value at least $\tilde{O}(1/\sqrt{n})$ in the second round. For the last agent $n$, the utility she receives in the second round is, roughly speaking, drawn from $U[0, 1]$; hence, the ``bad'' event for this agent occurs with probability $\tilde{O}(1/\sqrt{n})$. For the second-to-last agent $n - 1$, her item's value in the second round is approximately drawn from $\cD^{\max(2)}$; hence, the bad event for this agent occurs with probability $\tilde{O}(1/n)$. Similarly, for each of the remaining agents, the probability of the bad event occurring is at most $\tilde{O}(1/n^{1.5})$. Taking a union bound over all agents yields the desired result.

The formal proof follows the idea outlined above, with some additional steps to ensure that we can appropriately deal with the more general setup. For instance, in the case $m = 2.5n$, it no longer suffices for agent $n$ to receive value $\tilde{O}(1/\sqrt{n})$ in the second round, because now $u_n(M)/n$ is $1.25 \pm o(1)$ instead of $1\pm o(1)$. However, this is in fact an easier case: Lemma~\ref{lem:rep-sel} already implies that the second item that each agent receives is also of value $1 - o(1)$.

\begin{proof}[Proof of Theorem~\ref{thm:rr-prop}]
When $m = \Omega\left(\frac{n \log n}{\log\log n}\right)$, Theorem~\ref{thm:rr-ef} already implies that the round-robin algorithm produces an envy-free (and therefore proportional) allocation with high probability. Hence, it suffices to prove the statement for the case where $m = O(n\log n)$.

Suppose that $\cD$ is an $(\alpha, \beta)$-PDF-bounded distribution with mean at most $1/2$.
One can check using Chernoff and union bounds that with high probability, we have
\begin{align} \label{eq:sum-bound-rr}
u_i(M) \leq \frac{m}{2} + 10\sqrt{m \log n}
\end{align}
for all agents $i \in N$.
We will henceforth assume that~\eqref{eq:sum-bound-rr} holds for all $i \in N$.

Let us write $m$ as $nr + q$, where $r = \lfloor m/n \rfloor \geq 2$ and $r= O(\log n)$. We will consider two cases, depending on whether $q \leq n^{0.1}$. (Note that the threshold for $q$ can be any value that is $\omega(\log^2 n)$ and $o(n)$; we simply select $n^{0.1}$ for concreteness.)
Define the notation $X^i_t$ as in Section~\ref{sec:EF}. 
By Lemma~\ref{lem:rr-simplified-process}, the sequence $X^i_0, X^i_1, \dots, X^i_r$ is sampled in the same way as that in Lemma~\ref{lem:rep-sel} with $s_t = m + 1 - i - n(t - 1) \geq q$ for $t=1,\dots,r$.

\paragraph{Case 1: $q \geq n^{0.1}$.}
Consider an agent $i$.
Substituting $p = 1/n^2$ in Lemma~\ref{lem:rep-sel} implies that the following holds with probability at least $1 - 1/n^2$: 
\begin{align*}
X^i_r \geq 1 - \frac{\beta}{\alpha} \cdot \frac{r\ln(r/p)}{q} = 1 - O\left(\frac{\log^2 n}{n^{0.1}}\right).
\end{align*}
This means that the agent receives an item of value at least $1 - O(\log^2 n/n^{0.1})$ in each of the first $r$ rounds. As a result, $i$'s utility for his bundle is at least $r\left(1 - O(\log^2 n/n^{0.1})\right)$. Furthermore, \eqref{eq:sum-bound-rr} ensures that $u_i(M)/n$ is at most 
\begin{align*}
\frac{m}{2n} + O\left(\frac{\sqrt{m\log n}}{n}\right) \leq \frac{r + 1}{2} + O\left(\frac{\log n}{\sqrt{n}}\right),
\end{align*}
where we use our assumption $m=O(n\log n)$.
Since $r \geq 2$, we have $r > \frac{r + 1}{2}$. It follows that for sufficiently large $n$, the allocation is proportional for $i$. By taking a union bound over all agents, we conclude that the allocation is proportional with high probability.

\paragraph{Case 2: $q < n^{0.1}$.} In this case, we will have to give a more refined bound which differs for each agent $i \in N$, with the agents near the end of the round-robin ordering having a worse probability bound. This bound is stated formally below.

\begin{claim} \label{claim:proport-each-agent}
For each $i \in N$, the allocation output by the round-robin algorithm is proportional for $i$ with probability at least $1 - O(1/n^{\min\{0.05(n - i + 1), 2\}})$.
\end{claim}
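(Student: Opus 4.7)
The plan is to bound, for agent $i$ with $k := n - i$, the probability that the round-robin allocation fails to be proportional for $i$, by controlling two bad events. Throughout, I use~\eqref{eq:sum-bound-rr} (assumed to hold), which together with $m = O(n \log n)$ gives $u_i(M)/n \leq r/2 + O(\log n/\sqrt{n})$.

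The first bad event E1 is that $X^i_{r-1} < 1 - C \log^2 n/n$ for a suitable constant $C$ depending on $\alpha, \beta$. By Lemma~\ref{lem:rr-simplified-process}, the sequence $X^i_0 = 1, X^i_1, \ldots, X^i_{r-1}$ is generated as in the setup of Lemma~\ref{lem:rep-sel} with $s_t = m + 1 - i - n(t-1) \geq n + q + 1$ for $t \leq r-1$. Applying Lemma~\ref{lem:rep-sel} with $T = r - 1$ and $p = 1/n^2$ bounds $\Pr[\text{E1}]$ by $1/n^2$. Since each $X^i_t \leq X^i_{t-1}$, the sequence is non-increasing, so in the absence of E1 we have $\sum_{t=1}^{r-1} X^i_t \geq (r-1)(1 - O(\log^2 n / n))$. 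For $r \geq 3$ this already suffices: $(r-1) - r/2 \geq 1/2$ implies $\sum_{t=1}^{r-1} X^i_t \geq r/2 + O(\log n / \sqrt n) \geq u_i(M)/n$ for sufficiently large $n$, yielding the claim with failure probability $O(1/n^2)$.

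For $r = 2$, we additionally need $X^i_2 \geq C' \log n / \sqrt n$ (for an appropriately large $C'$) to cover the deviation term in $u_i(M)/n$. This motivates the second bad event E2: $X^i_2 < C' \log n / \sqrt n$. Conditional on E1 not occurring, we have $X^i_1 \geq 1/2$ for large $n$, and $X^i_2$ is the maximum of $s_2 = k + q + 1$ i.i.d.\ draws from $\cD_{\leq X^i_1}$; by PDF-boundedness, each such draw is at most $t$ with probability $F_\cD(t)/F_\cD(X^i_1) \leq (2\beta/\alpha)\, t$ (for $t \leq X^i_1$). Hence $\Pr[\text{E2} \mid \neg \text{E1}] \leq (O(\log n / \sqrt n))^{s_2} \leq n^{-0.49(k+1)}$ for sufficiently large $n$, using $s_2 \geq k + 1$ and $\log n / \sqrt n \leq n^{-0.49}$ eventually. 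Finally one verifies $n^{-0.49(k+1)} \leq n^{-\min\{0.05(k+1), 2\}}$: for $k+1 \leq 40$ the minimum is $0.05(k+1) \leq 0.49(k+1)$, and for $k+1 \geq 41$ we have $0.49 \cdot 41 > 2$.

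The main obstacle is the $r = 2$ case for agents near the end of the order, where $s_2$ can be as small as $q + 1 \leq n^{0.1} + 1$. The argument succeeds because $\Pr[X^i_2 \leq t]$ decays geometrically in $s_2$ with ratio $O(\log n / \sqrt n) = n^{-\Omega(1)}$, so even $s_2 = 1$ yields a polynomial-in-$n$ bound matching the claimed $n^{-0.05}$ exponent when $i = n$.
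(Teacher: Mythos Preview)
Your proof is correct and follows essentially the same approach as the paper: control the first $r-1$ picks via Lemma~\ref{lem:rep-sel}, then bound the probability that the $r$-th pick falls below the small residual threshold using that it is a maximum over at least $n-i+1$ independent draws from $\cD_{\leq X^i_{r-1}}$. Your explicit case split $r \geq 3$ versus $r=2$ and the tighter threshold $O(\log n/\sqrt{n})$ (yielding exponent $0.49$ rather than $0.05$) are minor variations---the paper handles all $r \geq 2$ uniformly with threshold $O(\log^3 n/n^{0.1})$, which for $r \geq 3$ is effectively negative and hence trivially met, just as in your argument.
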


\begin{subproof}
Observe that $s_t\geq n$ for $t=1,\dots,r-1$.
Substituting $p = 1/n^2$ in Lemma~\ref{lem:rep-sel} implies that the following holds with probability at least $1 - 1/n^2$: 
\begin{align} \label{eq:penultimate}
X^i_{r - 1} \geq 1 - \frac{\beta}{\alpha} \cdot \frac{r\ln(r/p)}{n} = 1 - O\left(\frac{\log^2 n}{n^{0.1}}\right).
\end{align}
This means that agent $i$ receives an item of value at least $1 - O(\log^2 n/n^{0.1})$ in each of the first $r - 1$ rounds. Hence, from the first $r - 1$ rounds, the agent already has a bundle with utility at least $(r - 1)\left(1 - O(\log^2 n / n^{0.1})\right)$. From~\eqref{eq:sum-bound-rr}, it suffices for the agent to receive the following utility in the $r$-th round for him to consider the allocation to be proportional:
\begin{align*}
&\left(\frac{m}{2n} + \frac{10\sqrt{m\log n}}{n}\right) - (r - 1)\left(1 - O\left(\frac{\log^2 n}{n^{0.1}}\right)\right) \\
(\text{From } m = O(n \log n)) &\leq \left(\frac{nr + q}{2n} + O\left(\frac{\log n}{\sqrt{n}}\right)\right) - (r - 1)\left(1 - O\left(\frac{\log^2 n}{n^{0.1}}\right)\right) \\
(\text{From } r \geq 2) &\leq \frac{q}{2n} + O\left(\frac{\log n}{\sqrt{n}}\right) + (r - 1) \cdot O\left(\frac{\log^2 n}{n^{0.1}}\right) \\
(\text{From } q < n^{0.1} \text{ and } r = O(\log n)) &\leq O\left(\frac{\log^3 n}{n^{0.1}}\right).
\end{align*}
Recall from Lemma~\ref{lem:rr-simplified-process} that the item that agent $i$ receives in the $r$-th round has value distributed as the maximum of $m + 1 - i - n(r - 1) \geq n - i + 1$ i.i.d. random variables from $\cD_{\leq X^i_{r - 1}}$. As a result, conditioned on~\eqref{eq:penultimate} occurring, the probability that the allocation is proportional for $i$ is at least
\begin{align*}
1 - F_{\cD_{\leq X^i_{r - 1}}^{\max(n - i + 1)}}\left(O\left(\frac{\log^3 n}{n^{0.1}}\right)\right) 
&= 1 - \left(F_{\cD_{\leq X^i_{r - 1}}}\left(O\left(\frac{\log^3 n}{n^{0.1}}\right)\right)\right)^{n - i + 1} \\
(\text{From Proposition~\ref{prop:conditional-bounded}}) &\geq 1 - \left(\frac{\beta}{\alpha} \cdot \frac{O\left(\frac{\log^3 n}{n^{0.1}}\right)}{X^i_{r - 1}} \right)^{n - i + 1} \\ 
(\text{Since we condition on~\eqref{eq:penultimate}}) &\geq 1 - \left(\frac{\beta}{\alpha} \cdot \frac{O\left(\frac{\log^3 n}{n^{0.1}}\right)}{1 - O\left(\frac{\log^2 n}{n^{0.1}}\right)}\right)^{n - i + 1} \\
&\geq 1 - O(n)^{-0.05(n - i + 1)}
\\
&\geq 1 - O(n)^{-\min\{0.05(n - i + 1),2\}} 
= 1 - O\left(n^{-\min\{0.05(n - i + 1),2\}}\right).
\end{align*}
From this and the fact that \eqref{eq:penultimate} occurs with probability at least $1 - O(1/n^2)$, we get the desired claim.
\end{subproof}

Thus, by union bound, the allocation produced by the round-robin algorithm fails to be proportional with probability at most $$\sum_{i=1}^{n} O\left(1/n^{\min\{0.05(n - i + 1), 2\}}\right) 
\leq \sum_{i=1}^{n-40} O(1/n^2) + \sum_{i=n-39}^n O(1/n^{0.05})
= O(1/n^{0.05}) = o(1),$$ which concludes Case~2 and therefore our proof.
\end{proof}

\subsection{The Case $n \leq m \leq 2n$}
\label{sec:prop-matching}

We now address the case where the number of items is between $n$ and $2n$.
As discussed at the beginning of Section~\ref{sec:prop-rr}, the round-robin algorithm fails in this regime. Nevertheless, we devise an alternative algorithm that computes a proportional allocation with high probability using some ideas from previous work together with an additional new idea.

\begin{theorem} \label{thm:matching-prop}
Suppose that $\cD$ is PDF-bounded and has mean at most $1/2$.
For $n \leq m \leq 2n$, there is a polynomial-time algorithm that outputs a proportional allocation with high probability.
\end{theorem}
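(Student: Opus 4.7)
The plan is to construct a polynomial-time matching-based algorithm. Let $T_i := u_i(M)/n$ be the proportional threshold of agent $i$ and $k := m - n \in [0, n]$. The strategy is to give each agent either a single item of value at least $T_i$, or a pair of items summing to at least $T_i$; any remaining items can be distributed arbitrarily without affecting proportionality.

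To decide which agents need a pair, I would fix a small tunable parameter $\eta > 0$ and call agent $i$ \emph{heavy} if $T_i > 1 - \eta$ and \emph{light} otherwise. Since $\cD$ has mean $\mu \leq 1/2$ and $m \leq 2n$, $\E[u_i(M)] \leq n$, so a Chernoff bound yields $\Pr[i \text{ heavy}] \leq \exp(-\Omega((n(1-\eta) - \mu m)^2/m))$; a second concentration argument shows that the number of heavy agents $|H|$ is at most $k$ with high probability, via a case analysis on $k$ (for small $k$ the per-agent bound is exponentially small, while for larger $k$ the fluctuation of $|H|$ around its mean stays well within the slack). Once $|H| \leq k$ is guaranteed, I would construct a bipartite graph $G$ on agents and items: each light agent $i$ is connected to every item $j$ with $u_i(j) \geq T_i$, while each heavy agent $i$ is split into two copies, each connected to items $j$ with $u_i(j) \geq T_i/2$. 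A matching saturating every vertex on the agent side immediately yields a proportional allocation: light agents receive their matched item, heavy agents receive both items matched to their two copies, and leftover items are distributed arbitrarily.

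The technical heart is showing that such a matching exists with high probability, via Hall's condition. By PDF-boundedness and a Chernoff bound, each light agent has $G$-degree $\Omega(\eta m)$, while each heavy copy has degree $\Omega(m)$ (using $T_i/2 \leq 1/2 + o(1)$ w.h.p.\ by Chernoff on $u_i(M)$). Crucially, the neighborhoods of different agents are independent because their utility rows are independent, so a union bound over subsets $S$ of each size $s$ yields Hall's condition: for small $s$, each individual neighborhood already has size $\Omega(\eta m) \geq s$, while for large $s$ the union of $s$ independent dense subsets of $M$ concentrates near $m$. The main obstacle will be tuning $\eta$ to simultaneously guarantee $|H| \leq k$ with high probability (across all regimes of $k$, including $k = 0$) and keep $\eta m$ large enough for the Hall's condition check at the smallest subset sizes; the mean-$\leq 1/2$ assumption on $\cD$ is essential on both fronts, giving slack both to bound the heavy count and to ensure light agents have many candidate items.
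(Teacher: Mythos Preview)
Your strategy is close in spirit to the paper's: both threshold agents by whether $T_i = u_i(M)/n$ is large and give those agents a second item via matching. The structural difference is that the paper does this \emph{sequentially on disjoint item sets}---first match all $n$ agents into the first $n$ items at a fixed threshold $\tau = 1 - \Theta(\log n/n)$ (a clean Erd\H{o}s--R\'enyi instance), then match only the violated agents into the remaining $k = m-n$ items---whereas you run a single matching over all $m$ items with heavy agents doubled. This buys the paper modularity: each stage is a standard random-bipartite-graph lemma, and the two stages do not interact.

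Your simultaneous version can likely be pushed through, but the sketch has real gaps. First, a constant $\eta$ cannot work: when $\mu=1/2$ and $m=2n-1$, the threshold $n(1-\eta)$ sits far below $\E[u_i(M)]=n-1/2$, so $\Pr[\text{heavy}]\to 1$ and w.h.p.\ \emph{every} agent is heavy, forcing $|H|=n>k$. The workable choice is $\eta=\Theta(\log n/n)$ (exactly the paper's $1-\tau$), and then bounding $|H|\le k$ in the regime $k=n-O(1)$ requires showing $\Pr[\text{heavy}]$ is bounded away from $1$---an anti-concentration statement, not Chernoff; the paper uses Berry--Esseen for this. Second, your Hall argument asserts that ``neighborhoods of different agents are independent,'' but the two copies of the same heavy agent have \emph{identical} neighborhoods, so a set $S$ of $s$ agent-copies may span as few as $s/2$ independent rows; this matters precisely when $|S|$ is near $m$ and you need $|N(S)|\ge |S|$. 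Third, the edges you draw use the random threshold $T_i$, which is a function of all of $u_i(1),\dots,u_i(m)$; the edge indicators within a row are therefore correlated, so the ``i.i.d.\ Chernoff'' degree bounds need justification (e.g., pass to the subgraph with deterministic threshold $1-\eta$, which only shrinks neighborhoods). Fourth, the two-regime split for Hall (small $s$: single-vertex degree; large $s$: union near $m$) omits the middle range, which is exactly where the standard Erd\H{o}s--R\'enyi matching proof does its work via a union bound over $\binom{n}{s}$ subsets; here you must additionally track how many heavy copies lie in $S$. None of these are fatal, but the paper's two-stage decomposition avoids all of them at once.
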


We first explain the intuition leading up to the algorithm.
In prior work of \citet{Suksompong16}, a matching-based algorithm is used to find proportional allocations. The most basic case of the algorithm is the case $m = n$, where the algorithm can be stated as follows:

\begin{algorithm}
\caption{Proportional Algorithm for $m = n$}\label{alg:matching-basic}
\begin{algorithmic}[1]
\Procedure{ThresholdMatching$_\tau(N, M, \{u_i\}_{i\in N})$}{}
\State Let $E_{\geq \tau} \leftarrow \{(i, j) \in N \times M \mid u_i(j) \geq \tau\}$.
\If{$G_{\geq \tau} = (N, M, E_{\geq \tau})$ contains a perfect matching}
\State \Return any perfect matching of $G_{\geq \tau}$
\Else
\State \Return NULL
\EndIf
\EndProcedure
\end{algorithmic}
\end{algorithm}

A classic result of Erd{\H{o}}s and R{\'{e}}nyi states that a random balanced bipartite graph is likely to contain a perfect matching when the probability of each edge occurring is, say, $\frac{1.1\log n}{n}$. We use the notation $\cG(a, b, p)$ to denote a distribution over bipartite graphs where the two vertex sets have size $a$ and $b$, and each edge occurs with probability $p$ independently of other edges.

\begin{lemma}[\citet{ErdosRe64}] \label{lem:er-matching}
Let $G = (A, B, E)$ be a graph sampled from the Erd{\H{o}}s-R{\'{e}}nyi random bipartite graph distribution $\cG(n, n, p)$ where $p = (\log n + \omega(1))/n$. Then, with high probability, $G$ contains a perfect matching.
\end{lemma}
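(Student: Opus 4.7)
The plan is the classical Hall-theorem-plus-union-bound argument. By Hall's marriage theorem, $G = (A, B, E)$ admits a perfect matching iff $|N(S)| \geq |S|$ for every $S \subseteq A$. If this condition fails, pick a \emph{minimal} Hall-violating set $S \subseteq A$; then $|N(S)| = |S| - 1 =: s - 1$ exactly, and a short minimality argument shows that each vertex of $N(S)$ has at least two neighbors in $S$ (otherwise, removing its unique in-$S$ neighbor would produce a smaller violating set).

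For fixed $S \subseteq A$ with $|S| = s \geq 2$ and $T \subseteq B$ with $|T| = s - 1$, the probability that $N(S) = T$ and each vertex of $T$ has $\geq 2$ neighbors in $S$ is at most
\begin{align*}
(1 - p)^{s(n - s + 1)} \cdot \left(\binom{s}{2} p^2\right)^{s - 1},
\end{align*}
where the first factor bounds the absence of any edge from $S$ to $B \setminus T$ and the second bounds the event that each of the $s - 1$ vertices in $T$ has at least two edges into $S$. The case $s = 1$ is the isolated-vertex event in $A$; together with the symmetric $B$-side contribution it accounts for at most $2n(1-p)^n \leq 2e^{-\omega_n}$, where $\omega_n := pn - \log n \to \infty$. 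Union-bounding over all $\binom{n}{s}\binom{n}{s-1}$ pairs $(S, T)$ and summing yields
\begin{align*}
\Pr[G \text{ has no PM}] \leq 2 e^{-\omega_n} + \sum_{s=2}^{n}\binom{n}{s}\binom{n}{s-1}(1 - p)^{s(n-s+1)}\left(\binom{s}{2}p^2\right)^{s-1}.
\end{align*}

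The final step is to check this sum is $o(1)$. By the intrinsic symmetry $s \leftrightarrow n - s + 1$ in the summand it suffices to treat $2 \leq s \leq \lceil n/2 \rceil$; I would split this range into a small-$s$ regime (where the minimality refinement is essential) and a moderately large-$s$ regime (where already the crude bound $\binom{n}{s}\binom{n}{s-1} \leq (en/s)^{2s}$ combined with $(1-p)^{s(n-s+1)} \leq e^{-psn/2}$ suffices), verifying via routine estimates that each term is $o(1/n)$. The dominant contribution is the $s = 2$ term, which after substituting $p = (\log n + \omega_n)/n$ evaluates to $O(\log^2 n \cdot e^{-2\omega_n}/n) = o(1)$.

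The main obstacle lies precisely in the small-$s$ regime, especially $s = 2$. Without the minimality refinement, the $s = 2$ term of the naive union bound is $\Theta(n \cdot e^{-2\omega_n})$, which fails to vanish for slowly growing $\omega_n$ (e.g., $\omega_n = \log \log n$). The minimality observation supplies the crucial additional factor $p^{2(s-1)}$ per configuration---reflecting that each vertex of the alleged neighborhood must be doubly adjacent to $S$---which reduces the $s = 2$ contribution by a $(\log n / n)^2$ factor and is exactly what makes the whole sum summable for every $\omega_n \to \infty$.
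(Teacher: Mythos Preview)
The paper does not give its own proof of this lemma; it is stated with attribution to Erd\H{o}s and R\'enyi and used as a black box. Your sketch is the standard Hall-plus-minimality-plus-union-bound argument and is essentially correct.

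One small caveat: the refined summand with the extra factor $\bigl(\binom{s}{2}p^2\bigr)^{s-1}$ is \emph{not} literally invariant under $s \leftrightarrow n-s+1$ (that substitution sends the last factor to $\bigl(\binom{n-s+1}{2}p^2\bigr)^{n-s}$). The reduction to $s \le \lceil n/2 \rceil$ is instead justified by the bipartite symmetry of the model: if a minimal Hall-violator $S \subseteq A$ has $|S| > n/2$, then $B \setminus N(S)$ is a $B$-side violator of size $n - |S| + 1 \le n/2$ and hence contains a minimal $B$-side violator of size at most $\lceil n/2 \rceil$; thus the failure probability is at most twice the $A$-side sum over $1 \le s \le \lceil n/2 \rceil$. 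With this adjustment the rest of your plan goes through.
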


Hence, by setting $\tau = 1 - \frac{1.1 \log n}{\alpha n}$, Algorithm~\ref{alg:matching-basic} produces, with high probability, an allocation in which every agent has utility at least $\tau$. From Chernoff bound, we also know that each agent likely values the whole set $M$ at most $n\left(\frac{1}{2} + \tilde{O}\left(\frac{1}{\sqrt{n}}\right)\right)$. As a result, the algorithm produces a proportional allocation when $m = n$ is sufficiently large.

Notice here that the guaranteed lower bound of $\tau$ on each agent's utility is quite strong. In particular, if, say, $m = 1.999n$ and we just run the above algorithm on the first $m$ items, then the resulting (partial) allocation is already proportional with high probability, because each agent values the whole set $M$ roughly $0.9995n \pm o(n)$.

As a result, we are only left with the case where $m = (2 - o(1))n$. For simplicity of discussion, let us focus on the case $m = 2n - 1$. In this case, Algorithm~\ref{alg:matching-basic} is not yet sufficient for us: recall from the beginning of Section~\ref{sec:prop} that in this regime, there are likely to be agents who need at least two items in order to be proportional. This motivates our algorithm, which consists of two stages. Initially, we run Algorithm~\ref{alg:matching-basic} on the first $n$ items. We then use the remaining $m - n$ items to help ``fix'' the agents for whom the allocation is not yet proportional. In particular, we create a graph where the left vertices correspond to these agents, the right vertices to the remaining $m - n$ items, and there is an edge between an agent and an item exactly when adding that item to the agent's bundle results in the bundle being proportional for that agent. The pseudocode of the algorithm is presented as Algorithm~\ref{alg:matching-two-stage}.

\begin{algorithm}
\caption{Proportional Algorithm for $n \leq m \leq 2n$}\label{alg:matching-two-stage}
\begin{algorithmic}[1]
\Procedure{TwoStageMatching$_\tau(N, M, \{u_i\}_{i\in N})$}{}
\State Let $M^0$ be the first $n$ items in $M$, and let $M^1 = M \setminus M^0$.
\State $(M^0_1, \dots, M^0_n) \leftarrow \textsc{ThresholdMatching}_{\tau}(N, M, \{u_i\}_{i \in N})$.
\If{$(M^0_1, \dots, M^0_n)$ = NULL}
\State \Return NULL
\Else
\State $N^{\text{violated}} \leftarrow \{i \in N \mid u_i(M^0_i) < \frac{u_i(M)}{n}\}$.
\State $E^{\text{fix}} \leftarrow \{(i, j) \in N^{\text{violated}} \times M^1 \mid u_i(j) \geq \frac{u_i(M)}{n} - u_i(M^0_i)\}$.
\If{$G^{\text{fix}} = (N^{\text{violated}}, M^1, E^{\text{fix}})$ contains a matching of size $|N^{\text{violated}}|$}
\State $(M^1_1, \dots, M^1_n) \leftarrow$ the allocation corresponding to the matching
\State \Return $(M^0_1 \cup M^1_1, \dots, M^0_n \cup M^1_n)$.
\Else
\State \Return NULL
\EndIf
\EndIf
\EndProcedure
\end{algorithmic}
\end{algorithm}

Note that when the algorithm does not output NULL, it always outputs a proportional allocation. Hence, to complete the proof of Theorem~\ref{thm:matching-prop}, it suffices to show that it rarely outputs NULL when we set an appropriate value for the parameter $\tau$, which we do in the following lemma.

\begin{lemma} \label{lem:matching-correctness}
Suppose that $\cD$ is an $(\alpha, \beta)$-PDF-bounded distribution with mean at most $1/2$.
For $n \leq m \leq 2n$, Algorithm~\ref{alg:matching-two-stage} with $\tau = 1 - \frac{1.1\log n}{\alpha n}$ outputs NULL with $o(1)$ probability.
\end{lemma}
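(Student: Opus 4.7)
The plan is to separately bound the two ways the algorithm can output NULL: failure of stage~1, or failure of stage~2 given stage~1 succeeded. For stage~1, the graph $G_{\geq \tau}$ on $N \times M^0$ has each edge independently present with probability $\Pr[u_i(j) \geq \tau] \geq \alpha(1-\tau) = 1.1 \log n / n$ by $(\alpha, \beta)$-PDF-boundedness. Since this exceeds $\log n/n + \omega(1/n)$, Lemma~\ref{lem:er-matching} yields a perfect matching w.h.p. From here on I condition on stage~1 succeeding.

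For stage~2, I will work under the concentration event $\mathcal{E}$: $u_i(M) \leq m/2 + C \sqrt{m \log n}$ for every $i \in N$ and a sufficiently large constant $C$. This holds w.h.p.\ by Chernoff (Lemma~\ref{lem:chernoff}) and a union bound, using $\E[u_i(M)] = m\mu \leq m/2$. Under $\mathcal{E}$, the ``deficit'' $\theta_i := u_i(M)/n - u_i(M^0_i)$ (nontrivial only for $i \in N^{\text{violated}}$) satisfies $\theta_i \leq m/(2n) + O(\sqrt{\log n/n}) - \tau \leq O(\sqrt{\log n/n}) =: \theta^*$ for every $i$, using $m \leq 2n$, $u_i(M^0_i) \geq \tau$, and $1 - \tau = O(\log n/n)$. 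I then split on $m$: if $m \leq 2n - K\sqrt{n \log n}$ for a large constant $K$, then $\mathcal{E}$ forces $u_i(M)/n < \tau \leq u_i(M^0_i)$ for every $i$, so $N^{\text{violated}} = \emptyset$ and stage~2 is trivial. Otherwise $|M^1| = m - n \geq n - K\sqrt{n \log n}$, which is close to $n$; this is the substantive case.

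In the substantive case, two sub-claims are needed. First, $|N^{\text{violated}}| \leq |M^1|$ w.h.p.: since $i \in N^{\text{violated}}$ forces $u_i(M) > n \cdot u_i(M^0_i) \geq n\tau$, Berry-Esseen (as invoked in the proof of Lemma~\ref{lem:anti-concen-sum}) together with $\mu \leq 1/2$ and $n\tau = n - O(\log n)$ yields $\Pr[u_i(M) > n\tau] \leq 1/2 + O(\log n/\sqrt{n})$; a Chernoff bound applied to the i.i.d.\ Bernoullis $\mathbf{1}[u_i(M) > n\tau]$ then gives $|N^{\text{violated}}| \leq n/2 + O(\sqrt{n}\log n) \leq |M^1|$ w.h.p. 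Second, I verify Hall's condition for $G^{\text{fix}}$ restricted to $N^{\text{violated}} \times M^1$ by introducing the auxiliary random bipartite graph $H^*$ on $N \times M^1$ with $(i, j)$ an edge iff $u_i(j) \geq \theta^*$; its edges are independent Bernoullis with probability $\geq 1 - \beta\theta^* = 1 - O(\sqrt{\log n/n})$, and under $\mathcal{E}$ any matching of $N^{\text{violated}}$ into $M^1$ in $H^*$ remains valid in $G^{\text{fix}}$ because $\theta_i \leq \theta^*$. By Hall, failure in $H^*$ requires subsets $S \subseteq N^{\text{violated}} \subseteq N$ of size $s \leq |M^1|$ and $T \subseteq M^1$ of size $t \geq |M^1| - s + 1$ with no $H^*$-edges in $S \times T$, an event of probability at most $(\beta\theta^*)^{st}$.

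The main obstacle is showing that the resulting union bound $\sum_{s, t} \binom{n}{s} \binom{|M^1|}{t} (\beta\theta^*)^{st}$ is $o(1)$. The key observation is that the constraint $s + t \geq |M^1| + 1$ with $1 \leq s, t \leq |M^1|$ forces $st \geq |M^1| = \Theta(n)$, with the minimum attained at the extremes $s = 1$ or $s = |M^1|$. Hence each term is bounded by $4^n \cdot (\beta\theta^*)^{\Omega(n)} = e^{O(n) - \Omega(n \log n)} = n^{-\omega(1)}$, and summing over the $O(n^2)$ pairs $(s, t)$ still yields $o(1)$; combined with the stage~1 bound this completes the proof.
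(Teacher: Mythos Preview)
Your proof is correct and follows the paper's overall plan: Lemma~\ref{lem:er-matching} for stage~1, Chernoff for $u_i(M)$, an anti-concentration bound (Berry--Esseen, as in Lemma~\ref{lem:anti-concen-sum}) plus Chernoff to cap $|N^{\text{violated}}|$, and a Hall-type union bound on an auxiliary threshold graph over $N\times M^1$ for stage~2. The differences are in parameter choices rather than ideas. The paper splits coarsely at $q=0.9n$ and uses the \emph{constant} auxiliary threshold $0.5/\beta$, packaging the matching step as the standalone Lemma~\ref{lem:gen-matching} about $\cG(n,q,p)$ with $p\geq 0.5$; you split at $q=n-K\sqrt{n\log n}$ and use the \emph{vanishing} threshold $\theta^*=O(\sqrt{(\log n)/n})$, which makes non-edges so rare that the crude bound $st\geq |M^1|=\Theta(n)$ already kills the union bound. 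Your route is slightly more direct (no separate lemma needed), while the paper's is more modular; neither buys anything substantive over the other.
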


Before we present a formal proof of Lemma~\ref{lem:matching-correctness}, let us briefly discuss the intuition behind it. First of all, we note that the graph $G^{\text{fix}}$ is very dense---this is because the amount needed to ``fix'' each agent $i$'s bundle (i.e., $u_i(M)/n - u_i(M_i^0) \leq u_i(M)/n - \tau$) is $o(1)$. Hence, intuitively, the matching in $G^{\text{fix}}$ should exist as long as $|N^{\text{violated}}| < |M^1|$. Now, notice that agent $i$ can only belong to $N^{\text{violated}}$ if $u_i(M)/n \geq \tau$. Since $\E[u_i(M)/n] \leq 1$ and $\tau = 1 - O\left(\frac{\log n}{n}\right)$, the probability that this happens for each $i$ should be $1/2 + o(1)$. (A slightly looser, but sufficient, bound will be derived from Lemma~\ref{lem:anti-concen-sum}.) As a result, $|N^{\text{violated}}|$ should be of size only around $n/2$, which is considerably less than $|M^1| = m - n = n - o(n)$.

To formalize the above ideas, we need an additional bound regarding the existence of matchings, which is more tailored towards our application. It says that, if we sample a (non-balanced) random bipartite graph $(A, B, E)$, where $A$ is slightly larger than $B$, with sufficiently large probability $p$, then any subset $S \subseteq A$ of size noticeably smaller than $B$ is likely to contain a matching to $B$. The bound is stated below; note that we do not attempt to optimize any parameters here, and instead only prove a version which suffices for our application. 
For a graph $G$ and a set $S$ of vertices, we denote by $Z_G(S)$ the set of vertices adjacent to at least one vertex in $S$.

\begin{lemma} \label{lem:gen-matching}
Let $G = (A, B, E)$ be a graph sampled from the Erd{\H{o}}s-R{\'{e}}nyi random bipartite graph distribution $\cG(n, q, p)$, where $p \geq 0.5$ and $0.9n \leq q \leq n$. Then, with high probability, for every $S \subseteq A$ of size at most $0.6n$, we have $|Z_G(S)| \geq |S|$. 
\end{lemma}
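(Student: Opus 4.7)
The plan is to verify the stated Hall-type condition by a direct double union bound — first over the putative ``small neighborhood'' $T \subseteq B$, then over the offending set $S \subseteq A$ — which is the classical technique for analyzing expansion in random bipartite graphs when the edge probability comfortably exceeds the connectivity threshold.

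Concretely, for a fixed $S \subseteq A$ with $|S| = k$, the event $|Z_G(S)| < k$ is equivalent to the existence of a set $T \subseteq B$ with $|T| = k-1$ containing all of $S$'s neighbors; equivalently, none of the $k(q - (k-1))$ potential edges between $S$ and $B \setminus T$ is present. Since each edge is absent independently with probability $1 - p \le 1/2$, union-bounding over $T$ gives
\[
\Pr[|Z_G(S)| < k] \;\le\; \binom{q}{k-1}\,(1-p)^{k(q-k+1)}.
\]
Taking a further union bound over the $\binom{n}{k}$ choices of $S$ of size $k$ then yields
\[
\Pr\bigl[\exists\, S \subseteq A,\; |S| = k,\; |Z_G(S)| < k\bigr] \;\le\; \binom{n}{k}\binom{q}{k-1}\,(1-p)^{k(q-k+1)}.
\]

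To finish, I would use the assumptions $k \le 0.6n$ and $q \ge 0.9n$ to get $q - k + 1 \ge 0.3n$, and $p \ge 1/2$ to get $(1-p)^{k(q-k+1)} \le 2^{-0.3 k n}$. Bounding the binomials crudely by $\binom{n}{k}\binom{q}{k-1} \le n^{2k}$, the above probability is at most $\bigl(n^2 \cdot 2^{-0.3n}\bigr)^{k}$. Since $n^{2} \cdot 2^{-0.3 n} = o(1)$, summing this geometric-type expression over $k \in \{1, \dots, \lfloor 0.6 n\rfloor\}$ is dominated by the $k = 1$ term and is therefore $o(1)$, as required.

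There is no real obstacle here: the constants in the hypothesis ($p \ge 0.5$, $q \ge 0.9 n$, $|S| \le 0.6 n$) leave a large multiplicative gap in the exponents, so the union bound is extremely loose and one need not be careful with lower-order terms. The extreme cases — $k = 1$, which is just the statement that no vertex of $A$ is isolated (probability of failure $\le n (1-p)^{q} \le n \cdot 2^{-0.9 n}$), and $k$ close to $0.6 n$, where $q - k + 1$ is still $\Omega(n)$ so the $(1-p)^{k(q-k+1)}$ factor crushes the binomial coefficients — are both handled comfortably within the same calculation.
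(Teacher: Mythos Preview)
Your proposal is correct and essentially identical to the paper's proof: both use the same double union bound over $S \subseteq A$ and $T \subseteq B$ with $|T| = |S|-1$, bound the probability that all $S$--$(B\setminus T)$ edges are absent by $2^{-k(q-k+1)} \le 2^{-0.3kn}$, crudely bound the binomial coefficients by $n^{2k}$ (the paper uses $(nq)^k$), and sum the resulting geometric-type expression over $k$.
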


\begin{proof}
We can bound the probability that the ``bad event'' occurs as follows:
\begin{align*}
\Pr[\exists S \subseteq A, |S| \leq 0.6n, |Z_G(S)| < |S|] &\leq \sum_{i=1}^{\lfloor 0.6n\rfloor} \Pr[\exists S \subseteq A, |S| = i, |Z_G(S)| \leq i - 1] \\
&= \sum_{i=1}^{\lfloor 0.6n\rfloor} \Pr[\exists S \subseteq A, T \subseteq B,  |S| = i, |T| = i - 1, Z_G(S) \subseteq T] \\
&\leq \sum_{i=1}^{\lfloor 0.6n\rfloor} \sum_{S \subseteq A, T \subseteq B \atop |S| = i, |T| = i - 1} \Pr[Z_G(S) \subseteq T] \\
&= \sum_{i=1}^{\lfloor 0.6n\rfloor} \sum_{S \subseteq A, T \subseteq B \atop |S| = i, |T| = i - 1} \Pr[\forall u \in S, v \in (B \setminus T), (u, v) \notin E] \\
&= \sum_{i=1}^{\lfloor 0.6n\rfloor} \sum_{S \subseteq A, T \subseteq B \atop |S| = i, |T| = i - 1} \prod_{u \in S, v \in (B \setminus T)} \Pr[(u, v) \notin E] \\
&\leq \sum_{i=1}^{\lfloor 0.6n\rfloor} \sum_{S \subseteq A, T \subseteq B \atop |S| = i, |T| = i - 1} 2^{- i(q + 1 - i)} \\
&= \sum_{i=1}^{\lfloor 0.6n\rfloor} \binom{n}{i}\binom{q}{i - 1} 2^{-i(q + 1 - i)} \\
(\text{From } \binom{n}{i}\leq n^i \text{ and }q \geq 0.9n) &\leq \sum_{i=1}^{\lfloor 0.6n\rfloor} (nq)^i 2^{-0.3i n}. 
\end{align*}
For sufficiently large $n$, we have $2^{0.1n} \geq n^2$. We can use this to further bound the above summation as
\begin{align*}
\sum_{i=1}^{\lfloor 0.6n\rfloor} (nq)^i 2^{-0.3i n} \leq
\sum_{i=1}^{\lfloor 0.6n\rfloor} n^{2i} 2^{-0.3i n} \leq
\sum_{i=1}^{\lfloor 0.6n\rfloor} 2^{-0.2i n} 
\leq 2^{-0.2n} + 0.6n\cdot 2^{-0.4n}
= O(2^{-0.2n}), 
\end{align*}
which concludes our proof.
\end{proof}

With this setup ready, we can now proceed to the proof of Lemma~\ref{lem:matching-correctness}.

\begin{proof}[Proof of Lemma~\ref{lem:matching-correctness}]
Let $G_{\geq \tau} = (N, M^0, E_{\geq \tau})$ be the graph as defined in Algorithm~\ref{alg:matching-basic} with our threshold $\tau = 1 -  \frac{1.1\log n}{\alpha n}$. 
As in the proof of Theorem~\ref{thm:rr-prop}, one can check that Chernoff and union bounds imply that, with high probability, we have
\begin{align} \label{eq:sum-bound-matching}
u_i(M) \leq \frac{m}{2} + 10\sqrt{m \log n}
\end{align}
for all agents $i \in N$.
We will henceforth assume that~\eqref{eq:sum-bound-matching} holds for all $i \in N$. Furthermore, Lemma~\ref{lem:er-matching} immediately implies that a perfect matching in $G_{\geq \tau}$ exists with high probability; we will also assume that this is the case from now on.

 Let $q = m - n$ (so $0 \leq q \leq n$). We consider two cases, based on the value of $q$.

\paragraph{Case 1: $q < 0.9n$.} 
For sufficiently large $n$,~\eqref{eq:sum-bound-matching} implies that $u_i(M)/n \leq 0.996 < \tau$ for all $i \in N$. Hence, the partial allocation $(M^0_1, \dots, M^0_n)$ is already proportional for every agent, and the algorithm outputs this allocation without going into the second stage.

\paragraph{Case 2: $q \geq 0.9n$.}
In this case, we will need to consider two more ``good'' events.
Let $G^1_{\geq 0.5/\beta} = (N, M^1, E^1_{\geq 0.5/\beta})$ be such that $(i, j) \in E^1_{\geq 0.5/\beta}$ if and only if $u_i(j) \geq 0.5/\beta$.
\begin{enumerate}
\item[E1]: $|N^{\text{violated}}| \leq 0.6n$;
\item[E2]:  For every $S\subseteq N$ of size at most $0.6n$, we have $|Z_{G^1_{\geq 0.5/\beta}}(S)|\geq |S|$.
\end{enumerate}

Before we prove that E1 and E2 hold with high probability, let us argue that if they hold, then the algorithm outputs a proportional allocation. 
To this end, first observe that since E1 and E2 hold, Hall's marriage theorem implies that there exists a matching from $N^{\text{violated}}$ to $M^1$ that uses all vertices in $N^{\text{violated}}$. Moreover, notice that for sufficiently large $n$, $u_i(M)/n$, which is at most $1 + o(1)$ by \eqref{eq:sum-bound-matching}, is less than $0.5/\beta + \tau = (1 + 0.5/\beta) - o(1)$ for all $i \in N$. As a result, the aforementioned matching remains a matching in the graph $G^\text{fix}$. The algorithm thus outputs a proportional allocation as desired.

It remains to show that both E1 and E2 occur with high probability. This is obvious for E2, since the graph $G^1_{\geq 0.5/\beta}$ is generated in exactly the same way as in Lemma~\ref{lem:gen-matching} (with $p = 1 - F_{\cD}(0.5/\beta) \geq 0.5$). 

As for E1, let us consider each agent $i \in N$. Let $\sigma^2 > 0$ be the variance of $\cD$. For sufficiently large $n$, Lemma~\ref{lem:anti-concen-sum} implies that 
\begin{align*}
\Pr[u_i(M)/n > \tau]
&= 1 - \Pr[u_i(M) \leq n\tau] \\
&= 1 - \Pr\left[u_i(1) + \cdots + u_i(m) \leq n - (1.1\log n) / \alpha\right] \\
&\leq 1 - \Pr\left[u_i(1) + \cdots + u_i(m) \leq m/2 - 0.01\sigma\sqrt{m}\right] \\
&\leq 0.51 + O(1/\sqrt{m}) \\
&\leq 0.55,
\end{align*}
where the first inequality follows from $n \geq m/2$ and the fact that $1.1 \log n / \alpha = \Theta(\log n)$ is smaller than $0.01 \sigma \sqrt{m} = \Theta(\sqrt{n})$ for any sufficiently large $n$.

By Chernoff bound, with high probability, at most $0.6n$ agents $i$ have $u_i(M)/n > \tau$.
Only these agents can be included into $N^{\text{violated}}$. Hence, E1 holds with high probability.

Thus, the algorithm outputs a proportional allocation with high probability in both cases.
\end{proof}

\section{Envy-freeness up to Any Item}
\label{sec:EFX}

In this section, we turn our attention to an important relaxation of envy-freeness: envy-freeness up to any item (EFX).
While the worst-case existence of EFX is an intriguing open problem, we show that an EFX allocation is likely to exist for \emph{any} relation between the number of agents and the number of items:

\begin{theorem} \label{thm:efx}
Suppose that $\cD$ is PDF-bounded.
There is a polynomial-time algorithm that outputs an EFX allocation with high probability.
\end{theorem}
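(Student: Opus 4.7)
The algorithm handles three regimes depending on the relationship between $m$ and $n$. When $m \leq n$, assign items to distinct agents, leaving $n - m$ agents empty-handed; each non-empty bundle then has exactly one item, so $u_i(M_{i'} \setminus \{j\}) = 0 \leq u_i(M_i)$ for every $i, i'$, and $j \in M_{i'}$, and EFX holds deterministically. When $m = \Omega(n\log n/\log\log n)$, simply run the round-robin algorithm: by Theorem~\ref{thm:rr-ef} its output is envy-free---hence EFX---with high probability. For the intermediate range $n < m = O(n\log n/\log\log n)$, I would iterate the threshold-matching idea of Algorithm~\ref{alg:matching-basic}: with threshold $\tau = 1 - c\log n/n$ for a suitable constant $c$, repeatedly apply $\textsc{ThresholdMatching}_{\tau}$ on the remaining items to give each agent one new ``anchor'' item per round, until fewer than $n$ items remain; then hand out the at most $n - 1$ leftover items one per agent. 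By Lemma~\ref{lem:er-matching} and a union bound over the $\lfloor m/n\rfloor$ rounds, every agent ends up with $\lfloor m/n \rfloor$ anchor items of value at least $\tau = 1 - o(1)$, and bundle sizes differ by at most one.

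To prove EFX in the intermediate regime, fix any ordered pair of agents $(i, i')$. A key observation is that from agent $i$'s perspective the items in $M_{i'}$ behave as i.i.d.\ draws from $\cD$, because $i$'s utilities are independent of the other agents' utilities used by the matching routine (modulo negligible conditioning on the matching succeeding). Agent $i$'s anchors contribute at least $\lfloor m/n\rfloor \cdot \tau$ to $u_i(M_i)$. On the other hand, since $\cD$ is PDF-bounded, Lemma~\ref{lem:chernoff} gives $u_i(M_{i'} \setminus \{j\}) \leq u_i(M_{i'}) \leq |M_{i'}|\mu + O(\sqrt{|M_{i'}|\log n})$ with failure probability $O(1/n^3)$, where $\mu = \E[\cD]$. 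Because $\tau > \mu$ is bounded away from $\mu$, the gap $u_i(M_i) - u_i(M_{i'}\setminus\{j\}) \geq \lfloor m/n\rfloor(\tau - \mu) - O(\sqrt{\lfloor m/n\rfloor \log n})$ is strictly positive provided $m/n$ exceeds a sufficiently large constant, and a union bound over the $n^2$ ordered pairs then delivers EFX with high probability.

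The main obstacle is the sub-regime $m \in \{n+1, \ldots, O(n)\}$, where only one matching round is performed and bundle-size imbalances of a single item threaten EFX directly. In this case the Chernoff concentration above is too weak: we must instead argue that the single anchor $x_i$ in $M_i$ dominates any anomalously large item in $M_{i'}$. Using PDF-boundedness, $\Pr[u_i(j) > \tau] \leq \beta(1 - \tau) = O(\log n/n)$ for each fixed $j \in M_{i'}$, so the probability that \emph{some} item in $M_{i'}$ exceeds the anchor threshold is $O(|M_{i'}|\log n/n)$; combined with a careful rule for distributing the leftover items (for instance, choosing each recipient to minimize the number of near-$\tau$ violations across all other agents' views), the per-pair failure probability can be pushed below $O(1/n^2)$, at which point a union bound closes the argument.
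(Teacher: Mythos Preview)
Your three-case split mirrors the paper's, but the intermediate-regime argument has a quantitative error that makes it vacuous. The concentration step needs $\lfloor m/n\rfloor(\tau-\mu)\geq C\sqrt{\lfloor m/n\rfloor\log n}$, which forces $\lfloor m/n\rfloor=\Omega(\log n)$, not merely ``a sufficiently large constant'': $\tau-\mu$ is a fixed positive constant while the right-hand side grows with $n$. Since your intermediate range has $\lfloor m/n\rfloor=O(\log n/\log\log n)=o(\log n)$, the bound never applies there; the \emph{entire} intermediate range is the ``obstacle sub-regime'' you flag at the end, and the Chernoff paragraph adds nothing beyond what Theorem~\ref{thm:rr-ef} already covers. (The independence claim is also imprecise---\textsc{ThresholdMatching} depends on agent $i$'s own threshold indicators, so the items landing in $M_{i'}$ are not independent of $u_i$---but that is repairable by conditioning and is not the main issue.)

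Your sketch for the obstacle sub-regime misses a structural point. Threshold matching only ensures $u_i(\psi(i))\geq\tau$; it says nothing about $u_i(\psi(i'))$, which can exceed $u_i(\psi(i))$ (agent $i$ typically has $\Theta(\log n)$ items above $\tau$, and \textsc{ThresholdMatching} need not hand her her favourite). If such an $i'$ then receives a leftover item, removing the leftover from $M_{i'}$ still leaves $\psi(i')$, and $i$ fails EFX with respect to $i'$ regardless of how the leftover was chosen; a distribution rule for leftovers cannot fix this unless it is coordinated with the first-stage matching. The paper's solution for $r=1$, $q=O(1)$ is exactly such a coordination: it uses a \emph{maximum-weight} matching so that the induced envy graph is acyclic (Lemma~\ref{lem:efx-no-cycle}) and gives the $q$ leftovers to the agents of lowest topological rank, which guarantees $u_i(\psi(i))\geq u_i(\psi(i'))$ whenever $i'$ gets an extra item and $i$ does not. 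For the remaining sub-cases the paper uses two further tailored algorithms (a reversed-last-round round-robin when $q=\omega(1)$, and a reduction to the balanced envy-free algorithm of Theorem~\ref{thm:ef-blackbox} when $r\geq 2$, $q=O(1)$), none of which your proposal contains.
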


Let $r = \lfloor m/n \rfloor$ and $q = m - nr$. 
An EFX allocation obviously exists when $m\leq n$ (by assigning at most one item to each agent),\footnote{In fact, \citet{AmanatidisNtMa20} showed that an EFX allocation always exists even when $m\leq n+2$.} so we may restrict our attention to the case $m > n$. 
Furthermore, since an envy-free (and therefore EFX) allocation exists with high probability when $r= \Omega\left(\frac{\log n}{\log \log n}\right)$ (see Section~\ref{sec:EF}), or when $r\geq 2$ and $q=0$ \citep{ManurangsiSu19}, we may assume that $r= O(\log n)$ and $q\geq 1$. 

As with proportionality (Section~\ref{sec:prop}), the existence of EFX allocations will be shown via two kinds of algorithms: round-robin-based and matching-based. The former will work whenever the remainder $q$ is $\omega(1)$. On the other hand, for the case $q = O(1)$, we in fact present two matching-based algorithms: the first works for $r \geq 2$ and the second specifically for $r = 1$. These three algorithms and their corresponding proofs of correctness are given in Sections~\ref{sec:rr-rev}--\ref{sec:EFX-1-small}; we then combine them to deduce Theorem~\ref{thm:efx} in Section~\ref{sec:puttogether-efx}.

\subsection{Round-Robin-Based Algorithm for $q = \omega(1)$}
\label{sec:rr-rev}

We start by describing the round-robin-based algorithm. The algorithm works in exactly the same way as round-robin for the first $r$ rounds: in each round, we let agents $1, 2, \dots, n$ choose their most preferred item in this order. However, in the final round, we reverse the order and let agents $n, n - 1, \dots, n - q + 1$ chooses their most preferred item in this order. 

We show that if $q = \omega(1)$, then this algorithm, which we will refer to as the \emph{round-robin algorithm with reversed last round}, is likely to produce an EFX allocation.

\begin{theorem} \label{thm:rr-efx}
With probability $1 - O(1/\sqrt{q})$, the allocation output by  the round-robin algorithm with reversed last round is EFX.
\end{theorem}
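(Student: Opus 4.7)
The plan is to fix a pair of distinct agents $(i, i')$, bound the probability that EFX fails for this pair, and then union-bound over all $\binom{n}{2}$ pairs. Let $R := \{n - q + 1, \dots, n\}$ denote the set of agents receiving an extra item in the reversed final round, and adopt the notation $X^i_t$ and $X^{i,i'}_t$ from Section~\ref{sec:EF}. For each candidate removed item $X^{i,i'}_{t^*} \in M_{i'}$, the EFX condition $u_i(M_i) \geq u_i(M_{i'} \setminus \{X^{i,i'}_{t^*}\})$ can be verified by constructing a matching between the items of $M_{i'} \setminus \{X^{i,i'}_{t^*}\}$ and $M_i$ that pairs each $i'$-side item with an $i$-side item of weakly larger $u_i$-value. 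The available pairings come from the round-robin structure: $u_i(X^i_t) \geq u_i(X^{i,i'}_{t'})$ whenever the pick $X^{i,i'}_{t'}$ occurs strictly after $X^i_t$ in the algorithm (so $X^{i,i'}_{t'}$ was available to $i$ at the time of that pick).

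For the cases with $i < i'$, a deterministic matching suffices: pair $X^{i,i'}_{r+1}$ (when $i' \in R$) with $X^i_{t^*}$, valid because $X^{i,i'}_{r+1}$ was available when $i$ picked in round $t^* \leq r$, and pair the remaining $X^{i,i'}_t$ with $X^i_t$ for $t \neq t^*$. Hence EFX holds with probability one for every pair with $i < i'$. For the cases with $i > i'$, the natural pairing $X^{i,i'}_t \mapsto X^i_{t-1}$ (for $t \geq 2$) leaves $X^{i,i'}_1$ unmatched on the $i'$-side, and a case analysis over $t^*$ reduces EFX to an inequality of the form
\[
u_i(X^{i,i'}_1) \leq u_i(X^i_r) + u_i(X^i_{r+1})
\]
when both $i, i' \in R$ (the most stringent case), with analogous inequalities featuring an additional $u_i(X^i_{r-1})$ term on the right (and possibly no $u_i(X^i_{r+1})$ term, when $i \notin R$) in the remaining subcases.

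To bound the per-pair failure probability of this reduced inequality, I would combine two facts. First, since $i > i'$, the item $X^{i,i'}_1$ is determined by the utilities of agents $\{1, \dots, i'\}$ alone and is therefore independent of $u_i$; thus $u_i(X^{i,i'}_1) \sim \cD$ marginally, yielding $\Pr[u_i(X^{i,i'}_1) > \tau] \leq \beta(1 - \tau)$ via PDF-boundedness. Second, Lemma~\ref{lem:rep-sel} applied to the simplified round-robin process of Lemma~\ref{lem:rr-simplified-process} gives high-probability lower bounds on $u_i(X^i_r)$ (with pool size $\geq q + 1$) and, when $i \in R$, on $u_i(X^i_{r+1})$ (the maximum over the leftover items reaching $i$ in the reverse round). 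Balancing these tail estimates gives a per-pair EFX-failure probability of $O(1 / (n^2 \sqrt{q}))$, and the union bound over pairs yields the claimed $O(1/\sqrt{q})$.

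The main obstacle, I expect, is the both-in-$R$, $i > i'$ subcase, where $i$ may lie near the bottom of the reverse-round order so that $u_i(X^i_{r+1})$ is the maximum over only a handful of leftover items --- in the extreme, only two (when $i = n - q + 2$ and $i' = n - q + 1$). Carefully tracking the joint tail of the right-hand side of the reduced inequality against the $\cD$-tail of $u_i(X^{i,i'}_1)$ --- possibly by invoking a Berry-Esseen-style anti-concentration estimate in the spirit of Lemma~\ref{lem:anti-concen-sum} to handle the ``slack'' between the two sides --- is what produces the $1/\sqrt{q}$ rate rather than a trivial constant, and constitutes the main technical content of the proof.
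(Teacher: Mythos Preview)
Your deterministic handling of the $i<i'$ case via the item-matching argument is correct and in fact subsumes the paper's Claim~\ref{claim:efx-fewer-items}. Your reduction of the $i>i'$ case to inequalities of the form $X^{i,i'}_1 \le X^i_{r-1}+X^i_r$ (or $X^{i,i'}_1 \le X^i_r+X^i_{r+1}$ when $i\in R$) is also sound.

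The gap is in the probabilistic step. Your claimed uniform per-pair failure bound of $O\!\bigl(1/(n^2\sqrt{q})\bigr)$ is not justified and cannot hold as stated: for the critical pairs with both $i,i'\in R$, the reduced inequality involves only $X^i_r$ and $X^i_{r+1}$, whose pool sizes are $\Theta(q)$ and $i-(n-q)$ respectively --- neither scales with $n$, so there is no mechanism producing a $1/n^2$ factor. For instance, with $r=1$ and $i=n-q+2$, a direct computation gives per-pair failure $\Theta(1/q^3)$, not $O(1/(n^2\sqrt{q}))$. A naive union bound over all $\binom{n}{2}$ pairs with any uniform per-pair estimate therefore breaks. (A non-uniform, agent-by-agent summation of the correct per-pair rates could plausibly be made to work, but you have not described this, and the analysis is considerably more delicate than you suggest.) The appeal to Berry--Esseen/Lemma~\ref{lem:anti-concen-sum} is also off-target: that lemma is an \emph{anti}-concentration statement, whereas here you need concentration of $X^i_r+X^i_{r+1}$ near its (large) mean.

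The paper sidesteps all of this with a much simpler observation: it is enough to show the \emph{per-agent} event $u_i(M_i)\ge |M_i|-1$ holds for every $i$ (Claims~\ref{claim:lb-non-tail} and~\ref{claim:lb-tail}). Whenever $|M_i|=|M_{i'}|$, removing any item from $M_{i'}$ leaves at most $|M_i|-1$ items of value at most $1$ each, so $u_i(M_{i'}\setminus\{j\})\le |M_i|-1\le u_i(M_i)$ and EFX is immediate; the remaining cross-size case ($i\notin R$, $i'\in R$) is handled deterministically by Claim~\ref{claim:efx-fewer-items}. The $O(1/\sqrt{q})$ rate then arises cleanly from Claim~\ref{claim:lb-tail}: the dominant contribution is the agent $i=n-q+1$, whose reversed-round pick is a maximum over a single item, yielding $\Pr[u_i(M_i)<r]=O(1/\sqrt{q})$, with geometrically decaying contributions from the other agents in $R$. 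Note also that your reduced inequalities are in fact \emph{implied} by $u_i(M_i)\ge |M_i|-1$ (since the first $|M_i|-2$ picks each contribute at most $1$), so the paper's per-agent route is both simpler and at least as strong.
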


Before we proceed to prove Theorem~\ref{thm:rr-efx}, let us note that using different agent orderings in the algorithm is intuitively a fairer way of distributing items.
For instance, by letting agent $n$ pick first in the last round, we somewhat ``balance out'' the unfairness of the previous rounds. 
On a more formal level, it is also the case that the standard round-robin algorithm fails to give a guarantee as in Theorem~\ref{thm:rr-efx}. 
A simple example is when $r = 1$ and $q = \Omega(n)$. In this case, the output allocation is not EFX for agent $n$ if his most preferred item was picked by one of the first $q$ agents; this bad event happens with constant probability (i.e., $\Omega(q/n)$). However, as Theorem~\ref{thm:rr-efx} shows, reversing the order allows us to rule out such bad events with high probability.

Another remark we would like to make is that when $q \geq 2$ is constant, there is a constant probability that round-robin with reversed last round fails to find an EFX allocation. To see this, consider the case where $r = 1$ (i.e., $m = n + q$). Observe that there is an $\exp(-O(q))$ probability that each of the last $q + 1$ items remaining has value at most, say, $0.1$ for agent $n$. In this case, agent $n$'s bundle has value at most $0.2$ to him/her. On the other hand, there is a constant probability that agent $(n - 1)$'s first item is of value more than $0.2$ to agent $n$. This means that with probability at least $\exp(-O(q))$, agent $n$ would not be EFX; this probability is constant when $q$ is constant.


We now proceed to the proof of Theorem~\ref{thm:rr-efx}.
To start with, note that a particularly worrying case when proving that EFX is satisfied for an agent is when this agent receives strictly fewer items than some other agent. Our algorithm is specifically designed to handle this issue: the output allocation is always EFX for $i$ with respect to $i'$ for every pair of agents $i,i'$ such that $i$ receives $r$ items and $i'$ receives $r + 1$ items, as stated below. (Note that this guarantee is \emph{not} probabilistic.)

\begin{claim} \label{claim:efx-fewer-items}
For every $i \in \{1, \dots, n - q\}$ and every $i' \in \{n - q + 1, \dots, n\}$, in the allocation output by the round-robin algorithm with reversed last round, $i$ is EFX with respect to $i'$.
\end{claim}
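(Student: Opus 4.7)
The plan is to give a deterministic combinatorial argument (no probability is involved in this claim) based on two natural pairings of items between the bundles $M_i$ and $M_{i'}$. Let $x_1^i, \dots, x_r^i$ denote the items picked by agent $i$ in rounds $1, \dots, r$ (note $i \leq n-q$ means $i$ does not pick in the reversed round, so $|M_i| = r$), and let $j'_1, \dots, j'_r, j'_{r+1}$ denote the items picked by agent $i'$, where $j'_{r+1}$ is the extra item picked in the reversed last round (so $|M_{i'}| = r+1$).

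The two pairings I will establish are:
\begin{itemize}
\item[(a)] For every $t \in \{1,\dots,r\}$, $u_i(x_t^i) \geq u_i(j'_t)$. This holds because $i < i'$, so in round $t$ agent $i$ picks before agent $i'$, and therefore $j'_t$ is still available when $i$ makes his round-$t$ choice; since $x_t^i$ is $i$'s favorite remaining item at that moment, the inequality follows.
\item[(b)] For every $t \in \{1,\dots,r\}$, $u_i(x_t^i) \geq u_i(j'_{r+1})$. This holds because $j'_{r+1}$ is picked only in the reversed round after round $r$, so it is still available in every round $1, \dots, r$, and in particular when $i$ chooses his round-$t$ favorite.
\end{itemize}

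To conclude EFX for $i$ with respect to $i'$, I need to verify $u_i(M_i) \geq u_i(M_{i'} \setminus \{j\})$ for every $j \in M_{i'}$. I split into two cases based on which item of $M_{i'}$ is removed. If the removed item is $j'_{r+1}$, then summing (a) over all $t \in \{1, \dots, r\}$ gives $u_i(M_i) = \sum_{t=1}^r u_i(x_t^i) \geq \sum_{t=1}^r u_i(j'_t) = u_i(M_{i'} \setminus \{j'_{r+1}\})$. If instead the removed item is $j'_s$ for some $s \in \{1, \dots, r\}$, then I use (a) for every $t \in \{1, \dots, r\} \setminus \{s\}$ and use (b) at index $t = s$ to match $x_s^i$ against $j'_{r+1}$; summing yields $u_i(M_i) \geq \sum_{t \neq s} u_i(j'_t) + u_i(j'_{r+1}) = u_i(M_{i'} \setminus \{j'_s\})$, as required.

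There is no real obstacle here; the only subtlety is to find the right bijection between items of $M_i$ and items of $M_{i'} \setminus \{j\}$ in each case, and the role of reversing the last round is precisely to guarantee that the extra item $j'_{r+1}$ is available during all of $i$'s turns, which is what makes pairing (b) valid and enables Case B. This is also the reason the claim is stated as a deterministic (non-probabilistic) guarantee.
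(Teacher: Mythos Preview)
Your proof is correct and uses essentially the same pairing argument as the paper: both rely on the facts that $i<i'$ forces $i$ to pick before $i'$ in each of the first $r$ rounds, and that $j'_{r+1}$ is still available at every one of $i$'s picks; the only cosmetic difference is that the paper pairs $x^i_\ell$ with $j'_{\ell+1}$ for all $\ell\geq s$ (a shift), whereas you pair $x^i_s$ with $j'_{r+1}$ and keep $x^i_\ell\leftrightarrow j'_\ell$ otherwise (a single swap). One small remark on your commentary: the availability of $j'_{r+1}$ during all of $i$'s turns holds simply because the $(r{+}1)$st round comes after rounds $1,\dots,r$, not because that round is reversed --- the reversal matters for the probabilistic claims elsewhere, not for this deterministic one.
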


\begin{proof}
As in Section~\ref{sec:EF}, for any agents $i_1, i_2$, we use $X^{i_1, i_2}_t$ to denote agent $i_1$'s utility for the $t$-th item received by agent $i_2$. When $i_1 = i_2$, we abbreviate $X^{i_1, i_2}_t$ as $X^{i_1}_t$. 
First, observe that since agent $i$ chooses his/her most preferred item in each of the first $r$ rounds before $i'$, we have $X^{i}_{\ell} \geq X^{i, i'}_{\ell}, X^{i, i'}_{\ell + 1}$ for all $\ell \in \{1, \dots, r\}$. 

Now, consider any $M'=M_{i'}\backslash\{j\}$ for some item $j\in M_{i'}$. Suppose that $j$ is the item that $i'$ chooses in the $t$-th round. Then, we have
\begin{align*}
u_i(M') &= X^{i, i'}_1 + \cdots + X^{i, i'}_{t - 1} + X^{i, i'}_{t + 1} + \cdots + X^{i, i'}_{r + 1} \\
&\leq X^i_1 + \cdots + X^i_{t - 1} + X^i_{t} + \cdots + X^i_r \\
&= u_i(M_i).
\end{align*} 
As a result, $i$ is EFX with respect to $i'$.
\end{proof}

Next, we demonstrate that agents with the same number of items are EFX with respect to each other. We show this by proving that with high probability, $u_i(M_i) \geq |M_i| - 1$ for all $i$. Notice that if $|M_i| = |M_{i'}|$ for some $i,i'$, then this immediately implies that $u_i(M_i) \geq |M_i| - 1 \geq \max_{j \in M_{i'}} u_i(M_{i'} \setminus \{j\})$, i.e., that $i$ is EFX with respect to $i'$.

The proof that $u_i(M_i) \geq |M_i| - 1$ with high probability is divided into two claims, based on whether $i$ receives $r$ items (Claim~\ref{claim:lb-non-tail}) or $r + 1$ items (Claim~\ref{claim:lb-tail}). The proofs of these claims share some similarities with the proof in Section~\ref{sec:prop} which shows that round-robin produces a proportional allocation with high probability. Nonetheless, the proofs here are slightly different due to the different lower bounds needed and the reversed order in the last round, so we state them in full below.

\begin{claim} \label{claim:lb-non-tail}
With probability $1-O(1/n)$, $u_i(M_i) \geq r - 1$ for all $i \in \{1, \dots, n - q\}$ simultaneously.
\end{claim}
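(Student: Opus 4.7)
The plan is to decompose $u_i(M_i) = \sum_{t=1}^{r} X^i_t$ into the contribution of the first $r-1$ rounds, which I show is already at least $r - 1 - o(1)$, and the contribution of the last round, which only needs to supply this small remaining slack. The crucial observation is that any agent $i \in \{1, \dots, n - q\}$ does not pick in the reversed final round at all, so her picks in rounds $1, \dots, r$ are distributed exactly as in the standard round-robin and Lemma~\ref{lem:rr-simplified-process} applies verbatim: $X^i_t \sim \cD_{\leq X^i_{t-1}}^{\max(s_t)}$ with $s_t := m + 1 - (t-1)n - i$. For every $t \leq r - 1$ one has $s_t \geq s_{r-1} = 2n + q + 1 - i \geq n + 2q + 1 \geq n$, so Lemma~\ref{lem:rep-sel} with $T = r - 1$ and $p = 1/n^3$ gives
\[
X^i_{r-1} \;\geq\; 1 - \frac{\beta}{\alpha} \cdot \frac{(r-1)\ln((r-1) n^3)}{n} \;=\; 1 - O\!\left(\frac{\log^2 n}{n}\right)
\]
with probability at least $1 - 1/n^3$. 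Because the construction in Lemma~\ref{lem:rr-simplified-process} also forces $X^i_1 \geq X^i_2 \geq \cdots \geq X^i_{r-1}$, this yields $\sum_{t=1}^{r-1} X^i_t \geq (r-1) X^i_{r-1} \geq r - 1 - O(\log^3 n / n)$, using $r = O(\log n)$ as we may assume in this regime.

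It therefore suffices to produce a constant $C$ such that, with probability $1 - o(1/n^2)$ per agent, $X^i_r \geq C \log^3 n / n$---enough to absorb the above slack. Here I use that $X^i_r$ is the maximum of $s_r = n + q + 1 - i \geq 2q + 1 \geq 3$ i.i.d.\ samples from $\cD_{\leq X^i_{r-1}}$. Conditioning on the previous good event (so that $X^i_{r-1} \geq 1/2$) and applying Proposition~\ref{prop:conditional-bounded}, PDF-boundedness gives that each such sample lies below $C \log^3 n / n$ with probability $O(\log^3 n / n)$, so all $s_r$ samples do with probability at most $O(\log^9 n / n^3)$. A union bound over the $\leq n$ agents then bounds the overall failure probability by $n / n^3 + n \cdot O(\log^9 n / n^3) = O(\log^9 n / n^2) = O(1/n)$, and on its complement the two estimates combine to give $u_i(M_i) \geq r - 1$ simultaneously for every $i \in \{1, \dots, n - q\}$.

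The delicate point I expect to require the most care is the interaction between the two stages of the decomposition. A direct application of Lemma~\ref{lem:rep-sel} to all $r$ rounds is too weak because the minimum $s_t$ then becomes $s_r$, which may be as small as $3$; the resulting bound on $X^i_r$ would not survive the union bound over $n$ agents. The workaround is to stop Lemma~\ref{lem:rep-sel} one round early, where $s \geq n$, and then handle $X^i_r$ separately as a maximum of $s_r \geq 3$ i.i.d.\ draws. It is exactly the cube in $(O(\log^3 n / n))^{s_r}$ that beats the factor of $n$ from the union bound, and this is where the hypothesis $q \geq 1$---i.e.\ that the reversed last round actually takes place---is essential.
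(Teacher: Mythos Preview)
Your proposal is correct and follows essentially the same approach as the paper: apply Lemma~\ref{lem:rep-sel} only through round $r-1$ (where $s \geq n$), use the resulting lower bound on $X^i_{r-1}$ to show the first $r-1$ picks already yield value $r-1 - O(\log^3 n/n)$, and then cube the tail bound for $X^i_r$ using $s_r \geq 2q+1 \geq 3$ to beat the union bound. The only cosmetic difference is your choice of $p = 1/n^3$ versus the paper's $1/n^2$, which is immaterial.
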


\begin{proof}
We will bound the probability that $u_i(M_i) \geq r - 1$ for a fixed $i \in \{1, \dots, n - q\}$ and apply the union bound in the end.
Observe that, similarly to the standard round-robin algorithm (i.e., Lemma~\ref{lem:rr-simplified-process}), $X_1^i, \dots, X_r^i$ are distributed as follows. First, $X_1^i$ is drawn from $\cD^{\max(m + 1 - i)}$. Then, for each $t = 1, \dots, r - 1$, $X_{t + 1}^i$ is sampled according to $\cD_{\leq X_t^i}^{\max(m + 1 - i - nt)}$.

To prove the desired bound, we use Lemma~\ref{lem:rep-sel} on $X_1^i, \dots, X_{r - 1}^i$, which implies that the following holds with probability at least $1 - 1/n^2$:
\begin{align*}
X_{r - 1}^i \geq 1 - \frac{\beta}{\alpha} \cdot \frac{r\ln(r/p)}{n} = 1 - O\left(\frac{\log^2 n}{n}\right),
\end{align*}
where we use the assumption that $r=O(\log n)$.
When this holds, the bundle from the first $r - 1$ rounds already yields value at least $(r - 1)\left(1 - O\left(\frac{\log^2 n}{n}\right)\right)$ to agent $i$. Hence, in order to have $u_i(M_i) \geq r - 1$, it suffices to have $X_r^i \geq r \cdot O\left(\frac{\log^2 n}{n}\right) = O\left(\frac{\log^3 n}{n}\right)$. Recall that $X_r^i$ is distributed as the maximum of $m+1 - i - n(r - 1) \geq 2q + 1 \geq 3$ random variables independently drawn from $\cD_{\leq X_{r - 1}^i}$. Thus, for $n$ large enough such that $X_{r-1}^i$ is at least, say, $0.5$, Proposition~\ref{prop:conditional-bounded} implies that the probability that $X_r^i < O\left(\frac{\log^3 n}{n}\right)$ is at most $O\left(\frac{\log^3 n}{n}\right)^3 \leq O\left(\frac{1}{n^2}\right)$. As a result, we have $\Pr[u_i(M_i) \geq r - 1] \geq 1 - O(1/n^2)$.

Taking a union bound over all $i \in \{1, \dots, n - q\}$ completes our proof.
\end{proof}

\begin{claim} \label{claim:lb-tail}
With probability $1 - O(1/\sqrt{q})$, $u_i(M_i) \geq r$ for all $i \in \{n - q + 1, \dots, n\}$ simultaneously.
\end{claim}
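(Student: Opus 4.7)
The plan is to fix an agent $i = n - q + j$ for some $j \in \{1, \dots, q\}$, bound the per-agent probability that $u_i(M_i) < r$, and union bound across the $q$ tail agents. Let $X^i_t$ denote $i$'s utility for the $t$-th item he/she picks. By an analogue of Lemma~\ref{lem:rr-simplified-process} adapted to the reversed last round, from $i$'s perspective $X^i_1, \dots, X^i_r$ follow the usual recursion $X^i_t \sim \cD_{\leq X^i_{t-1}}^{\max(s_t)}$ with $s_t = m + 1 - i - n(t-1)$, and in round $r + 1$, since agents $n, n-1, \dots, i+1$ each consume one of the $q$ items left after round $r$ before $i$'s turn, from $i$'s viewpoint the $k_i := q - (n-i) = j$ remaining items are iid $\cD_{\leq X^i_r}$ and $X^i_{r+1}$ is their max. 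In particular $s_r = 2q - j + 1 \geq q + 1$ and $\min_{t \leq r-1} s_t \geq n + q + 1$.

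The first step is to reduce $u_i(M_i) \geq r$ to a statement involving only $X^i_r$ and $X^i_{r+1}$. I would apply Lemma~\ref{lem:rep-sel} to $X^i_1, \dots, X^i_{r-1}$ with $p = 1/n^2$, obtaining $X^i_{r-1} \geq 1 - \eta_0$ with probability at least $1 - 1/n^2$ for some $\eta_0 = O(r \log n / n)$. Since each $X^i_t$ is drawn from $\cD_{\leq X^i_{t-1}}$ we have $X^i_1 \geq \cdots \geq X^i_{r-1}$, so $\sum_{t=1}^{r-1} X^i_t \geq (r - 1) - \eta$ with $\eta := (r-1)\eta_0 = O(r^2 \log n / n) = o(1)$. (For $r = 1$ this step is vacuous, with $\eta = 0$.) Hence it suffices to prove $X^i_r + X^i_{r+1} \geq 1 + \eta$ with sufficiently high probability, conditionally on the Step~1 event.

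To bound $\Pr[X^i_r + X^i_{r+1} < 1 + \eta]$, I would split on the size of $j$ at a threshold $J = \Theta(\log q)$ whose constant depends on $\alpha/\beta$. In the large-$j$ case ($j \geq J$), Proposition~\ref{prop:conditional-bounded} implies that $X^i_r / X^i_{r-1}$ and $X^i_{r+1} / X^i_r$ are distributed as the maxima of $s_r$ and $j$ iid draws, respectively, from an $(\alpha/\beta, \beta/\alpha)$-PDF-bounded distribution, so a standard tail bound yields that each ratio is at least $3/4$ with probability $1 - O(1/q^2)$; this gives $X^i_r + X^i_{r+1} \geq (3/4)(1 + 3/4)(1 - \eta_0) > 1 + \eta$ for $\eta$ small enough. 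In the small-$j$ case ($j < J$), I would bound the probability by direct integration: the CDF bound $F_\cD(c) \leq \beta c$ yields $\Pr[X^i_{r+1} < c \mid X^i_r = 1 - \delta] \leq (2\beta(\delta + \eta))^j$ for $\delta, \eta$ small, and combining with the estimate $f_{X^i_r}(1 - \delta \mid X^i_{r-1}) \leq 2\beta s_r\, e^{-\alpha \delta s_r / 2}$ (valid after conditioning on the Step~1 event and for $\delta$ exceeding $\eta_0$), a routine split of the $\delta$-integral at $\delta \sim j/(\alpha s_r)$ yields a per-agent bound of $O((C/q)^j j!)$ for a constant $C = C(\alpha, \beta)$.

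Summing these estimates across the tail agents finishes the proof. The large-$j$ agents contribute $\sum_{j \geq J} O(1/q^2) = O(1/q)$. For the small-$j$ agents, because $C(j+1)/q \leq C J / q = o(1)$, consecutive terms of $\sum_{j=1}^{J} (C/q)^j j!$ decrease geometrically, so this sum is dominated by its $j = 1$ term and equals $O(C/q) = O(1/q)$. Together with the $O(q / n^2) = o(1/\sqrt{q})$ loss from the Step~1 union bound, the total failure probability is $O(1/q) = O(1/\sqrt{q})$, matching the claimed bound. The main obstacle will be the integration in the small-$j$ case; the split at $J \sim \log q$ is essential because a uniform application of the bound $(C/q)^j j!$ over $j$ up to $q$ would blow up for $j$ near $q$ (where $q!(C/q)^q \sim (C/e)^q$ can exceed $1$), whereas the large-$j$ case sidesteps this by controlling multiplicative ratios via PDF-boundedness rather than levels directly.
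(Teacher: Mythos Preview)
Your approach is correct and reaches the claimed $O(1/\sqrt{q})$ bound, but it is considerably more involved than the paper's route. The paper avoids both your case split at $J = \Theta(\log q)$ and the density integration by inserting a single intermediate threshold at scale $1/\sqrt{q}$ for $X^i_r$: after establishing $X^i_{r-1} \geq 1 - O(\log^2 n/n)$ via Lemma~\ref{lem:rep-sel} (exactly as you do), the paper observes in one line that $X^i_r \geq (1 - O(1/\sqrt{q}))\,X^i_{r-1}$ with probability at least $1 - e^{-\sqrt{q}}$, simply because $X^i_r$ is the maximum of more than $q$ draws from $\cD_{\leq X^i_{r-1}}$. This reduces the target to $X^i_{r+1} \geq O(1/\sqrt{q})$; that event fails with probability at most $O(1/\sqrt{q})^{j}$, and the union bound $\sum_{j\geq 1} O(q^{-\min\{j/2,\,2\}}) = O(1/\sqrt{q})$ finishes immediately---no split on $j$, no integral. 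By treating $X^i_r + X^i_{r+1}$ jointly you instead need the threshold $J$, the multiplicative-ratio argument for large $j$, and the explicit integral for small $j$. Your route does buy something: for the small-$j$ agents, when $\eta$ is negligible against $1/q$, the integration yields a per-agent failure probability of order $(C/q)^j j!$, genuinely sharper than the paper's $q^{-j/2}$; in the regime $q = o(n/\log^3 n)$ your method could in principle be pushed to an overall $O(1/q)$ rather than $O(1/\sqrt{q})$. For the bound as stated, though, the paper's $\sqrt{q}$-threshold decomposition is much shorter and sidesteps entirely the obstacle you flag as the ``main'' one.
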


\begin{proof}
Fix $i \in \{n - q + 1, \dots, n\}$. Similarly to the proof of Claim~\ref{claim:lb-non-tail}, we would like to bound $\Pr[u_i(M_i) < r]$.
To do so, we first use Lemma~\ref{lem:rep-sel} on $X_1^i, \dots, X_{r - 1}^i$, which implies that the following holds with probability at least $1 - 1/n^2$:
\begin{align} \label{eq:lb-third-to-last}
X_{r - 1}^i \geq 1 - \frac{\beta}{\alpha} \cdot \frac{r\ln(r/p)}{n} = 1 - O\left(\frac{\log^2 n}{n}\right).
\end{align}
Moreover, since $X_r^i$ is the maximum of $m + 1 - i - n(r - 1) > q$ random variables independently sampled from $\cD_{\leq X_{r - 1}^i}$, we have
\begin{align*}
\Pr\left[X_r^i < \left(1 - \frac{\beta}{\alpha} \cdot \frac{1}{\sqrt{q}}\right)X^i_{r - 1}\right] \leq \left(1 - \frac{1}{\sqrt{q}}\right)^q \leq e^{-\sqrt{q}},
\end{align*}
where the first inequality follows from Proposition~\ref{prop:conditional-bounded} and the second inequality from the well-known fact that $1+x\leq e^x$ for any real number $x$.

In other words, with probability at least $1 - e^{-\sqrt{q}}$, the following holds:
\begin{align} \label{eq:lb-second-to-last}
X_r^i \geq \left(1 - O\left(\frac{1}{\sqrt{q}}\right)\right)X_{r - 1}^i.
\end{align}
When both~\eqref{eq:lb-third-to-last} and~\eqref{eq:lb-second-to-last} hold, we have
\begin{align*}
X^i_1 + \cdots + X^i_r \geq r - O\left(\frac{r \log^2 n}{n}\right) - O\left(\frac{1}{\sqrt{q}}\right) \geq r - O\left(\frac{1}{\sqrt{q}}\right),
\end{align*}
where the second inequality follows from $r = O(\log n)$ and $q < n$, which means that $O\left(\frac{r \log^2 n}{n}\right) \leq O\left(\frac{\log^3 n}{n}\right) \leq O\left(\frac{1}{\sqrt{n}}\right) \leq O\left(\frac{1}{\sqrt{q}}\right)$.

In other words, to have $u_i(M_i) \geq r$, it suffices to have $X^i_{r + 1} \geq O\left(\frac{1}{\sqrt{q}}\right)$.
Since $X^i_{r + 1}$ is the maximum of $i - (n - q)$ random variables independently sampled from $\cD_{\leq X^i_r}$ (recall that the ordering in the last round is reversed), the probability that it is less than $O\left(\frac{1}{\sqrt{q}}\right)$ is
\begin{align*}
F_{\cD_{\leq X^i_{r}}^{\max(i - (n - q))}}\left(O\left(\frac{1}{\sqrt{q}}\right)\right) 
&= \left(F_{\cD_{\leq X^i_r}}\left(O\left(\frac{1}{\sqrt{q}}\right)\right)\right)^{i - (n - q)} \\
(\text{From Proposition~\ref{prop:conditional-bounded}}) &\leq \left(\frac{\beta}{\alpha} \cdot \frac{O\left(\frac{1}{\sqrt{q}}\right)}{X^i_r} \right)^{i - (n - q)} \\ 
&\leq O\left(\frac{1}{\sqrt{q}}\right)^{i - (n - q)} \\
&\leq O\left(\frac{1}{\sqrt{q}}\right)^{\min\{i - (n - q), 4\}} \\ 
&= O\left(q^{-\min\{0.5(i - (n - q)),2\}}\right),
\end{align*}
where the second inequality holds because, when conditioned on~\eqref{eq:lb-third-to-last} and~\eqref{eq:lb-second-to-last}, we have $X_r^i \geq 0.5$ for any sufficiently large $n, q$.

Hence, in total, we have
\begin{align*}
\Pr[u_i(M_i) < r] \leq \frac{1}{n^2} + \frac{1}{e^{\sqrt{q}}} + O\left(q^{-\min\{0.5(i - (n - q)),2\}}\right) \leq O\left(q^{-\min\{0.5(i - (n - q)),2\}}\right).
\end{align*}
By taking a union bound over $i \in \{n - q + 1, \dots, n\}$, the probability that $u_i(M_i) < r$ for some such $i$ is at most
\begin{align*}
\sum_{i=n-q+1}^n O\left(q^{-\min\{0.5(i - (n - q)),2\}}\right) 
= \sum_{\ell=1}^q O\left(q^{-\min\{0.5 \ell, 2\}}\right) = O(q^{-0.5}),
\end{align*}
which concludes our proof.
\end{proof}

Theorem~\ref{thm:rr-efx} can now be easily proved using the above claims.

\begin{proof}[Proof of Theorem~\ref{thm:rr-efx}]
By Claims~\ref{claim:lb-non-tail} and~\ref{claim:lb-tail}, with probability $1 - O(1/\sqrt{q})$, we have $u_i(M_i) \geq r - 1$ for all $i \in \{1, \dots, n - q\}$ and $u_i(M_i) \geq r$ for all $i \in \{n - q + 1, \dots, n\}$. To see that this implies that the allocation is EFX, let us consider any pair of agents $i, i' \in N$. We argue that $i$ is EFX with respect to $i'$ by considering the following three cases.
\begin{enumerate}
\item $i \geq n - q + 1$. Since $M_{i'}$ contains at most $r + 1$ items, if we remove any item from this bundle, then it has at most $r$ items, meaning that $i$ values it at most $r$. Recall that we have $u_i(M_i) \geq r$. Hence, in this case, $i$ is EFX with respect to $i'$.
\item $i \leq n - q$ and $i' \geq n - q + 1$. Claim~\ref{claim:efx-fewer-items} immediately implies that $i$ is EFX with respect to $i'$.
\item $i, i' \leq n - q$. Similarly to the first case, since $M_{i'}$ contains $r$ items, if we remove any item from this bundle, then it has at most $r - 1$ items, meaning that $i$ values it at most $r - 1$. On the other hand, we have $u_i(M_i) \geq r - 1$. Thus, $i$ is also EFX with respect to $i'$ in this case. \qedhere
\end{enumerate}
\end{proof}

\subsection{An EF-based Algorithm for $r \geq 2$ and $q = O(1)$}
\label{sec:EFX-2-small}



We move on to our second algorithm, which handles the case where $r\geq 2$ while the remainder $q$ is $O(1)$. 
In this case, we will use the algorithm of \citet{ManurangsiSu19} as a subroutine. 
The algorithm there works when $m$ is divisible by $n$ and produces an envy-free allocation with high probability. 
Furthermore, it guarantees that every item is valued at least $1 - O\left(\frac{\log m}{n}\right)$ by the agent it is assigned to.\footnote{The guarantee as stated in~\cite{ManurangsiSu19} is that this value is at least $1 - O\left(\frac{\log m}{n}\right)^{1/q}$ when $\cD$ is $(\underline{\theta}, \overline{\theta}, q)$-polynomially bounded at $1$ (see the definition in their paper). However, our $(\alpha, \beta)$-PDF-boundedness assumption implies that $\cD$ is $(\alpha, \beta, 1)$-polynomially bounded at $1$, which yields our claimed bound.} This is stated more formally below.

\begin{theorem}[\citet{ManurangsiSu19}] \label{thm:ef-blackbox}
When $m$ is divisible by $n$ and $2n \leq m \leq 2^{o(n)}$, there exists an algorithm $\cA$ that, with high probability, outputs an envy-free allocation $(M_1, \dots, M_n)$ such that $|M_1| = \dots = |M_n|$ and $u_i(j) \geq 1 - O\left(\frac{\log m}{n}\right)$ for all $i \in N$ and $j \in M_i$.
\end{theorem}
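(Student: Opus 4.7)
The plan is to use a threshold-based bipartite matching algorithm. Set $\tau := C \log m / n$ for a sufficiently large constant $C = C(\alpha, \beta)$, and form the bipartite graph $G = (N, M, E)$ where $(i,j) \in E$ iff $u_i(j) \geq 1 - \tau$. Since $\cD$ is $(\alpha,\beta)$-PDF-bounded, each edge is present independently with probability in $[\alpha \tau, \beta \tau] = \Theta(\log m / n)$. The algorithm $\cA$ would then find an assignment in $G$ that matches each agent to exactly $r := m/n$ items, and output the resulting allocation $(M_1,\ldots,M_n)$.

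The first step is to establish that such a matching exists with high probability. This is equivalent to Hall's condition on the bipartite graph obtained by duplicating each agent $r$ times: every subset $S \subseteq N$ must satisfy $|Z_G(S)| \geq r|S|$. A union-bound argument in the spirit of Lemma~\ref{lem:gen-matching}---suitably adapted to the unbalanced, $r$-factor setting---works here, with the large constant $C$ providing enough edge density to beat the relevant binomial coefficients. Once this matching exists, each agent receives $r$ items each valued at least $1 - \tau$, which immediately yields the conditions $|M_i| = r$ for all $i$ and $u_i(j) \geq 1 - O(\log m / n)$ for all $j \in M_i$, as well as the bundle bound $u_i(M_i) \geq r(1 - \tau) = r - O(r \log m / n)$.

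For envy-freeness, fix distinct agents $i, i'$. Although the algorithm's choice of $M_{i'}$ depends on the full graph $G$, the values $(u_i(j))_{j \in M}$ are independent of $(u_\ell(j))_{\ell \neq i, j \in M}$, so conditional on $M_{i'}$, the entries $(u_i(j))_{j \in M_{i'}}$ are marginally distributed as $r$ i.i.d.\ samples from $\cD$. A short calculation using PDF-boundedness shows that the mean $\mu$ of $\cD$ satisfies $\mu \leq 1 - \alpha/2$; hence the Chernoff bound gives $u_i(M_{i'}) \leq r(1 - \alpha/2) + O(\sqrt{r \log n})$ with probability at least $1 - n^{-3}$. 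Union-bounding over all $n(n-1)$ ordered pairs, envy-freeness then follows whenever $r\alpha/2 \geq O(r \log m / n) + O(\sqrt{r \log n})$; under the hypothesis $m \leq 2^{o(n)}$ the first term on the right is $o(r)$, so this inequality holds as soon as $r = \omega(\log n)$.

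The main obstacle is the small-$r$ regime, in particular $r = O(\log n)$, which the theorem still covers (since $2n \leq m$ allows $r$ as small as $2$). There the $O(\sqrt{r \log n})$ Chernoff fluctuation can dominate the available $r\alpha/2$ gap, so the plain single-threshold matching analysis is no longer sharp enough to certify envy-freeness on its own. To handle this regime one would have to either refine the algorithm---for instance by composing the threshold matching with a local swap step that repeatedly exchanges items between bundles so long as doing so strictly decreases the total envy while preserving the $u_i(j) \geq 1 - O(\log m/n)$ guarantee---or sharpen the probabilistic analysis by exploiting the joint (rather than marginal) structure of the matching output across the pairs $(i,i')$. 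Verifying that such a refinement terminates in polynomial time, preserves the per-item value guarantee, and eliminates all envy with high probability is the technical heart of the proof and where I would devote the most careful effort.
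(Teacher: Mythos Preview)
This theorem is not proved in the present paper at all: it is quoted verbatim from \citet{ManurangsiSu19} and used as a black box (the paper explicitly says ``We will not give the full description of the algorithm here since we do not need it''). So there is no in-paper proof to compare your proposal against; what I can do is point out where your sketch would break if you tried to flesh it out.

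There is a real error in your envy-freeness step. You write that ``conditional on $M_{i'}$, the entries $(u_i(j))_{j \in M_{i'}}$ are marginally distributed as $r$ i.i.d.\ samples from $\cD$'' because the rows of the utility matrix are independent. But the bundle $M_{i'}$ is a function of the \emph{entire} threshold graph $G$, and $G$ encodes the indicators $\bone[u_i(j) \geq 1 - \tau]$ for agent $i$ as well. Conditioning on the output of a matching algorithm that used those edges does bias $u_i(j)$ for $j \in M_{i'}$: roughly, an item $j$ that ended up in $M_{i'}$ rather than $M_i$ is less likely to have had an $(i,j)$ edge, hence less likely to have $u_i(j) \geq 1-\tau$. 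So the i.i.d.\ Chernoff argument is not justified as stated, and any fix would have to control this dependence explicitly.

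Even granting the independence, you have already identified the second gap yourself: for $r = O(\log n)$ the Chernoff fluctuation $O(\sqrt{r\log n})$ swamps the available slack $\Theta(r)$, and this is exactly the regime the theorem must cover (it is invoked in Section~\ref{sec:EFX-2-small} with $r$ as small as $2$). Your suggested patches---local envy-reducing swaps, or a ``joint structure'' argument---are plausible directions but are precisely the hard part; the proposal as written stops at the point where the real work begins. The algorithm of \citet{ManurangsiSu19} does use threshold matchings, but the way it obtains envy-freeness across all regimes is more delicate than a single Hall-condition-plus-Chernoff step, and you would need to reproduce that argument (or find a genuinely new one) to close the small-$r$ case.
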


The algorithm $\cA$ in Theorem~\ref{thm:ef-blackbox} is a matching-based algorithm. We will not give the full description of the algorithm here since we do not need it, and instead simply use $\cA$ in a black-box manner.

Our EFX algorithm is incredibly simple: we just run $\cA$ on the first $rn$ items in $M$. The rest of the items are assigned arbitrarily, in such a way that each agent receives at most one item. The pseudocode of the algorithm is given as Algorithm~\ref{alg:matching-simple}.

\begin{algorithm}
\caption{EFX Algorithm for $r\geq 2$ and $q=O(1)$}
\label{alg:matching-simple}
\begin{algorithmic}[1]
\Procedure{EFX-via-EF$(N = \{1, \dots, n\}, M, \{u_i\}_{i\in N})$}{}
\State $M^0 \leftarrow$ the set of first $rn$ items in $M$
\State $M^1 \leftarrow M \setminus M^0$
\State $(M_1^0, \dots, M_n^0) \leftarrow$ $\cA(N, M^0, \{u_i\}_{i \in N})$ \label{line:ef-blackbox}
\For{$i = n - q + 1, \dots, n$}
\State $j_i \leftarrow$ arbitrary item from $M^1$ 
\State $M^1 \leftarrow M^1 \setminus \{j_i\}$
\EndFor
\State \Return $(M^0_1, \dots, M^0_{n - q}, M^0_{n - q + 1} \cup \{j_{n - q + 1}\}, \dots, M^0_n \cup \{j_n\})$
\EndProcedure
\end{algorithmic}
\end{algorithm}

The main result of this subsection is that Algorithm~\ref{alg:matching-simple} produces an EFX allocation with high probability when $r \geq 2$ and $q=O(1)$ (in fact, even when $q = o\left(\frac{\sqrt{n}}{\log^3 n}\right)$). 
This follows from the theorem that we state next and our assumption that $r=O(\log n)$ (which implies $m=O(n\log n)$).

\begin{theorem} \label{thm:efx-blackbox}
When $r \geq 2$, Algorithm~\ref{alg:matching-simple} outputs an EFX allocation with probability $1 - o(1) - O\left(\frac{q^2 r^3 \log^3 m}{n}\right)$.
\end{theorem}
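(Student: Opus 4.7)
The plan is to treat $\cA$ entirely as a black box and reduce the EFX analysis to controlling what happens when the $q$ extra items of $M^1$ are grafted onto the envy-free, high-valued-own-item allocation of $M^0$. The decisive structural feature is that $M^1 \cap M^0 = \emptyset$, so the values $u_i(j_{i'})$ for $j_{i'} \in M^1$ are independent of every quantity computed by $\cA$. I would first condition on the $(1-o(1))$-probability event from Theorem~\ref{thm:ef-blackbox} that $\cA$ outputs $(M_1^0, \dots, M_n^0)$ which is envy-free with $|M_i^0| = r$ and $u_i(j) \geq 1 - c_0 \log m/n$ for every $i \in N$ and $j \in M_i^0$; the rest is a deterministic case split combined with tail estimates on the extra-item randomness.

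The case reduction proceeds as follows. For any ordered pair $(i, i')$, whenever $|M_i| \geq |M_{i'}|$, or $i' \geq n-q+1$ and we remove the specific item $j_{i'}$, envy-freeness of the $M^0$ allocation immediately gives $u_i(M_i) \geq u_i(M_i^0) \geq u_i(M_{i'}^0) \geq u_i(M_{i'} \setminus \{j\})$. This dispatches all pairs with $i, i' \leq n-q$, all pairs with $i \geq n-q+1$ and $i' \leq n-q$, and the $j = j_{i'}$ sub-case of pairs with $i, i' \geq n-q+1$. The only dangerous configurations remaining are (B) $i \leq n-q$ paired with $i' \geq n-q+1$, and (D) $i, i' \geq n-q+1$ with the removed item lying in $M_{i'}^0$.

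Let $j^{\ast} \in M_{i'}^0$ achieve $\min_{j \in M_{i'}^0} u_i(j)$ and set $T := u_i(M_i^0) - \sum_{j \in M_{i'}^0 \setminus \{j^{\ast}\}} u_i(j)$. A short calculation shows that an EFX violation in case (B) occurs exactly when $u_i(j_{i'}) > T$, and in the relevant sub-case of (D) when $u_i(j_{i'}) - u_i(j_i) > T$, which still implies $u_i(j_{i'}) > T$ since $u_i(j_i) \geq 0$. Using $u_i(M_i^0) \geq r - c_0 r \log m/n$ together with the trivial bound $\sum_{j \in M_{i'}^0 \setminus \{j^{\ast}\}} u_i(j) \leq r-1$ yields $T \geq 1 - \delta$ with $\delta := c_0 r \log m / n$; moreover, $T < 1$ forces at least $r-1$ items of $M_{i'}^0$ to satisfy $u_i > 1 - \delta$. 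The key estimate is a double-independence trick: every $j \in M_{i'}^0$ must satisfy $u_{i'}(j) \geq 1 - c_0 \log m/n$ by the own-item guarantee applied to $i'$, and since $u_i$ and $u_{i'}$ are independent across agents, $\Pr[j \in M_{i'}^0 \text{ and } u_i(j) > 1-\delta] \leq (\beta c_0 \log m/n)(\beta \delta)$ for each of the $nr$ candidate items in $M^0$. Summing and applying Markov's inequality gives $\Pr[T < 1] = O(r \log^2 m / n)$, and combining this with $\Pr[u_i(j_{i'}) > T \mid T < 1] \leq \beta \delta$ (valid by independence of $u_i(j_{i'})$ from $M^0$ quantities together with PDF-boundedness) yields a per-pair violation probability of $O(r^2 \log^3 m / n^2)$.

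A union bound over the $O(nq)$ case-(B) pairs and $O(q^2)$ case-(D) pairs gives total failure probability $O(q^2 r^3 \log^3 m / n)$ after absorbing lower-order terms into the stated form, which together with the $o(1)$ failure probability from $\cA$ completes the proof. The hardest step is precisely the estimate $\Pr[T < 1] = O(r \log^2 m / n)$: one has no access to the internals of $\cA$ and must therefore extract the bound purely from the own-item guarantee plus cross-agent independence. The clean resolution is the doubly-high-valued observation, namely that any offending item must be simultaneously near-extreme for both $i$ and $i'$, and such items are genuinely rare by the independence of utilities across agents.
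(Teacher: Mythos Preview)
Your approach is correct and is essentially the paper's: both hinge on the ``doubly-high item'' observation (an offending item in $M_{i'}^0$ must be near-extreme for $i$ \emph{and}, via the own-item guarantee, for $i'$) together with the independence of the $M^0$- and $M^1$-utilities. Your per-pair organization with Markov at threshold $r-1$ in fact yields the sharper bound $O(qr^2\log^3 m/n)$; the only sloppiness---computing tail probabilities after conditioning on $\cA$'s success as if the utilities were still i.i.d.---is harmless because the conditioning event has probability $1-o(1)$, though the paper sidesteps this by defining its ``problematic'' events purely in terms of raw utilities before intersecting with $\cA$'s success.
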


Before we proceed to the proof, let us describe its high-level idea. Let $\tau := 1 - O\left(\frac{\log m}{n}\right)$ be the lower bound on the utility guaranteed by Theorem~\ref{thm:ef-blackbox}.  
The theorem ensures that each agent $i$ receives a bundle that he/she values at least $r\tau$ from $\cA$, and that the partial allocation $(M_1^0, \dots, M_n^0)$ is envy-free. (Note that here we use the assumption $r \geq 2$, which is required by Theorem~\ref{thm:ef-blackbox}.) 

Since the partial allocation $(M_1^0, \dots, M_n^0)$ is envy-free and every item yields value at most $1$, agent $i$ will be EFX with respect to agent $i'$, unless in the second step $i'$ receives an item $j$ with $u_i(j) \geq r\tau - (r - 1) = 1 - O\left(\frac{r \log m}{n}\right)$---call this latter event (*). This is a low-probability event for a fixed pair $i, i'$; however, it cannot be neglected when we consider all pairs of agents, because every item is likely to be valued more than $1 - O\left(\frac{r \log m}{n}\right)$ by multiple agents. 

To make the proof work, we need to make the following additional observation: since $r \geq 2$, if $i$ is not EFX with respect to $i'$, it must also be the case that there exists $j \in M^0_{i'}$ such that $u_i(j) \geq r\tau - (r - 1)$---call this event (**). 
One can check that the probability that both (*) and (**) occur together is only $\frac{(q r \log m)^{O(1)}}{n^2}$, meaning that we may now use the union bound over all pairs $i, i'$ (with $i' \geq n - q$) to finish the proof.

The proof below follows the outlined approach; in particular, we refer to a pair $i, i'$ that may violate (**) as a \emph{potentially problematic} pair, and a pair $i, i'$ that may violate both (*) and (**) as a \emph{problematic} pair. The main contribution in the formal proof below is to show that with high probability, no problematic pair exists.


\begin{proof}[Proof of Theorem~\ref{thm:efx-blackbox}]
Let $\tau := 1 - O\left(\frac{\log m}{n}\right)$ be the lower bound on the utility guaranteed by Theorem~\ref{thm:ef-blackbox}, and let $\tau' := r\tau - (r - 1) = 1 - O\left(\frac{r \log m}{n}\right)$. Now, for every agent $i \in N$ and $i' \in \{n - q + 1, \dots, n\} \setminus \{i\}$, we say that the pair $(i, i')$ is \emph{potentially problematic} if there exists an item $j \in M^0$ such that $u_i(j) \geq \tau'$ and $u_{i'}(j) \geq \tau$. Furthermore, an agent $i \in N$ is said to be \emph{potentially problematic} if $(i, i')$ is potentially problematic for some $i' \in \{n - q + 1, \dots, n\} \setminus \{i\}$.

Let us fix $i \in N$ and $i' \in \{n - q + 1, \dots, n\} \setminus \{i\}$. Consider any item $j \in M$. The probability that $u_i(j) \geq \tau'$ and $u_{i'}(j) \geq \tau$ is at most $\beta(1 - \tau) \cdot \beta(1 - \tau') = O\left(\frac{r \log^2 m}{n^2}\right)$. We can use a union bound on all items $j \in M^0$ to derive
\begin{align*}
\Pr[(i, i') \text{ is potentially problematic}] \leq m \cdot O\left(\frac{r \log^2 m}{n^2}\right) = O\left(\frac{r^2 \log^2 m}{n}\right).
\end{align*}
Hence, by once again taking a union bound over $i' \in \{n - q + 1, \dots, n\} \setminus \{i\}$, we have
\begin{align} \label{eq:potentially-prob}
\Pr[i \text{ is potentially problematic}] \leq O\left(\frac{q r^2 \log^2 m}{n}\right).
\end{align}

Now, for each agent $i \in N$, we say that $i$ is \emph{problematic} if $i$ is potentially problematic and there exists $j \in M^1$ such that $u_i(j) \geq \tau'$. Let us bound the probability that the latter happens. To do so, note that for a fixed $j \in M^1$, $\Pr[u_i(j) \geq \tau'] \leq \beta(1 - \tau') = O\left(\frac{r \log m}{n}\right)$. Hence, by union bound over all $q$ items in $M^1$, we have
\begin{align} \label{eq:high-val-m1}
\Pr[\exists j \in M^1, u_i(j) \geq \tau'] \leq O\left(\frac{q r \log m}{n}\right).
\end{align}
Notice that the two events ``$i$ is potentially problematic'' and ``there exists $j \in M^1$ such that $u_i(j) \geq \tau'$'' are independent, because the former only concerns valuations of items in $M^0$ whereas the latter concerns those in $M^1$. As a result, by combining~\eqref{eq:potentially-prob} and~\eqref{eq:high-val-m1}, we have
\begin{align*}
\Pr[i \text{ is problematic}] \leq O\left(\frac{q^2 r^3 \log^3 m}{n^2}\right).
\end{align*}
Applying the union bound over all $i \in N$ yields
\begin{align*}
\Pr[\exists i \in N, i \text{ is problematic}] \leq O\left(\frac{q^2 r^3 \log^3 m}{n}\right).
\end{align*}

Finally, from Theorem~\ref{thm:ef-blackbox} and the assumption $r \geq 2$, we have that with high probability, the partial allocation $(M^0_1, \dots, M^0_n)$ is envy-free, $|M^0_1| = \dots = |M^0_n| = r$, and each agent values each assigned item at least $\tau$. Assume that this is the case, and that there is no problematic $i\in N$.
To conclude the proof, it suffices to show that the (complete) allocation produced by Algorithm~\ref{alg:matching-simple} is EFX.
Consider any pair of distinct agents $i, i'$, and divide into two cases:
\begin{enumerate}
\item $i' \leq n - q$. Since $i'$ does not receive an item in the second phase of the algorithm, $i$ does not envy $i'$ due to the guarantee from Theorem~\ref{thm:ef-blackbox}.
\item $i' > n - q$. Since $i$ is not problematic, we know that either $i$ is not potentially problematic or $u_i(j) < \tau'$ for all $j \in M^1$. We analyze these two cases separately.
\begin{enumerate}
\item $i$ is not potentially problematic. In this case, $(i, i')$ is not potentially problematic. Since $u_{i'}(j) \geq \tau$ for all $j \in M^0_{i'}$, we must have $u_{i}(j) < \tau'$ for all $j \in M^0_{i'}$. Recall that $r \geq 2$, which means that even after removing any item from the bundle $M_{i'}$ of $i'$, at least one item from $M^0_{i'}$ remains.
Thus, after removing any item from $M_{i'}$, the bundle is valued by $i$ less than $(r - 1) + \tau' = r \cdot \tau$. Since $i$ values her own bundle at least $r \cdot \tau$, we have that $i$ is EFX with respect to $i'$.

\item $u_i(j) < \tau'$ for all $j \in M^1$. Consider a bundle $M'$ that results from removing one item from $M_{i'}$. If the removed item is the item that $i'$ receives last, then $M'$ is exactly $M^0_{i'}$; due to the envy-freeness guarantee of $(M^0_1, \dots, M^0_n)$, we have $u_i(M_i) \geq u_i(M')$. On the other hand, if the item that $i'$ receives last is not removed from the bundle, this item is valued less than $\tau'$ by agent $i$. This means that $u_i(M') < (r - 1) + \tau' = r \cdot \tau \leq u_i(M_i)$. Hence, we can again conclude that $i$ is EFX with respect to $i'$. \qedhere
\end{enumerate}
\end{enumerate}
\end{proof}

\subsection{A Matching-Based Algorithm for $r = 1$ and $q = O(1)$}
\label{sec:EFX-1-small}

We now proceed to our final case: $r = 1$ and $q = O(1)$ (i.e., $m = n + O(1)$). The algorithm for this case is inspired by that from the previous subsection. At a high level, we again use a matching-based algorithm to first assign one item to each agent while ensuring that each agent highly values his/her own item. Then, in the second step, we give an additional item to some agents. Although this general outline appears very similar to Algorithm~\ref{alg:matching-simple}, we have to be much more careful here: since the starting assignment is no longer envy-free, we cannot simply select arbitrary agents to receive additional items in the second step. In particular, if agent $i$ envies agent $i'$ after the first step and agent $i'$ receives an additional item, then agent $i$ must also receive an additional item for the allocation to be EFX.

To describe our algorithm more precisely, let us introduce some additional notation. First, recall that an \emph{assignment} $\psi$ is simply an injection from $N$ to $M$. For a weight function $w: N \times M \to \R$, the \emph{weight} of an assignment $\psi$ is $\sum_{i \in N} w(i, \psi(i))$. It is well-known that a maximum weight assignment can be found in polynomial time~\cite{Kuhn55,Munkres57}. Another concept that will be useful is a \emph{topological ordering} of a directed acyclic graph $(V, E)$, which is defined as a mapping $\sigma: V \to \{1,2,\dots,|V|\}$ such that for any edge $u \to v$ in $E$, it holds that $\sigma(u) < \sigma(v)$. Again, it is well-known that such a ordering can be computed efficiently for any directed acyclic graph~\cite{Kahn62}. Finally, following~\citet{LiptonMaMo04}, we define the \emph{envy graph} of an assignment $\psi$ to be the graph for which the vertex set is the set of all agents, and there is an edge from agent $i$ to agent $i'$ if and only if $u_i(\psi(i)) < u_i(\psi(i'))$.

With these prerequisites in mind, we now describe our algorithm. First, for an appropriate threshold $\tau$ (to be chosen later) we define a weight function $w$ such that $w(i, j) = u_i(j)$ if $u_i(j) \geq \tau$ and $w(i, j) = -\infty$ otherwise, and find a maximum-weight assignment $\psi$ with respect to $w$. We then create the envy graph for the assignment $\psi$ and assign one of the $q$ unused items to each of the $q$ lowest-ranked agents in the topological ordering associated with the envy graph. The pseudocode of the algorithm is presented as Algorithm~\ref{alg:max-matching}.

\begin{algorithm}[h!]
\caption{EFX Algorithm for $r=1$ and $q=O(1)$}
\label{alg:max-matching}
\begin{algorithmic}[1]
\Procedure{MaximumAssignment$_{\tau}(N, M, \{u_i\}_{i\in N})$}{}
\State For every $i \in N, j \in M$, let
\begin{align*}
w(i, j) =
\begin{cases}
u_i(j) &\text{ if } u_i(j) \geq \tau; \\
-\infty &\text{ otherwise.}
\end{cases}
\end{align*}
\State $\psi \leftarrow$ maximum weight assignment with respect to $w$.
\If{$w(i, \psi(i)) = -\infty$ for some $i \in N$}
\State \Return NULL \label{step:efx-no-matching}
\EndIf
\State $E_{\text{envy}} \leftarrow \{(i, i') \in N \times N \mid u_i(\psi(i)) < u_i(\psi(i'))\}$. \label{step:envy-graph-creation}
\State $\sigma \leftarrow$ arbitrary topological ordering of $(N, E_{\text{envy}})$. \label{step:topological-ordering}
\State $\{j_1, \dots, j_q\} \leftarrow$ items not used in $\psi$. \label{step:unused-items}
\For{$i \in N$}
\If{$\sigma(i) \leq q$}
\State $M_i \leftarrow \{\psi(i), j_{\sigma(i)}\}$
\Else
\State $M_i \leftarrow \{\psi(i)\}$
\EndIf
\EndFor
\State \Return $(M_1, \dots, M_n)$
\EndProcedure
\end{algorithmic}
\end{algorithm}

Before we prove the correctness of the algorithm, let us argue that the algorithm is even \emph{valid}. In particular, Line~\ref{step:topological-ordering} implicitly assumes that the graph $(N, E_{\text{envy}})$ is acyclic. We show below that this always holds.

\begin{lemma}
\label{lem:efx-no-cycle}
$(N, E_{\text{envy}})$ does not contain a cycle.
\end{lemma}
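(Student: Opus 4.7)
My plan is the standard swap-along-a-cycle argument, adapted to the weight function $w$ used by \textsc{MaximumAssignment}$_{\tau}$. Suppose for contradiction that $(N, E_{\text{envy}})$ contains a directed cycle $i_1 \to i_2 \to \cdots \to i_k \to i_1$ (with $k \geq 2$). By the definition of $E_{\text{envy}}$, this means
\begin{align*}
u_{i_j}(\psi(i_j)) < u_{i_j}(\psi(i_{j+1})) \quad \text{for all } j = 1, \dots, k,
\end{align*}
where indices are taken modulo $k$.

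The key observation is that $w$ can be ignored in favor of $u$ along this cycle. Since the algorithm did not return NULL in Line~\ref{step:efx-no-matching}, we have $w(i, \psi(i)) \ne -\infty$ for every $i \in N$, hence $u_i(\psi(i)) \geq \tau$ for every $i$. Combined with the envy inequality above, this gives $u_{i_j}(\psi(i_{j+1})) > u_{i_j}(\psi(i_j)) \geq \tau$, and therefore $w(i_j, \psi(i_{j+1})) = u_{i_j}(\psi(i_{j+1}))$ (rather than $-\infty$) for every $j$.

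Now define an alternative assignment $\psi'$ by rotating the cycle, i.e., $\psi'(i_j) := \psi(i_{j+1})$ for $j = 1, \dots, k$ and $\psi'(i) := \psi(i)$ for all other agents $i$. This is still an injection from $N$ to $M$, since the items $\psi(i_1), \dots, \psi(i_k)$ are simply permuted among $i_1, \dots, i_k$. Comparing weights,
\begin{align*}
\sum_{i \in N} w(i, \psi'(i)) - \sum_{i \in N} w(i, \psi(i)) = \sum_{j=1}^{k}\bigl(u_{i_j}(\psi(i_{j+1})) - u_{i_j}(\psi(i_j))\bigr) > 0,
\end{align*}
where each summand is strictly positive by the envy inequalities. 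This contradicts the maximality of $\psi$ with respect to $w$, completing the proof.

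The argument is essentially routine, so I do not foresee a substantive obstacle; the only point that requires a moment's care is checking that the rotated assignment has finite weight, and this follows from the fact that every edge in $E_{\text{envy}}$ out of agent $i$ points to an item whose utility (to $i$) strictly exceeds $u_i(\psi(i)) \geq \tau$.
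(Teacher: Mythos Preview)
Your proof is correct and follows essentially the same cycle-rotation argument as the paper: assume a cycle exists, use the fact that $\psi$ has finite weight (since the algorithm did not return NULL) to conclude each rotated edge also has weight above $\tau$, and derive a contradiction to the maximality of $\psi$. Your write-up is slightly more explicit than the paper's in verifying that $w(i_j,\psi(i_{j+1}))\ne -\infty$, but the substance is identical.
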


\begin{proof}
Suppose for the sake of contradiction that $(N, E_{\text{envy}})$ contains a cycle $i_1 \to i_2 \to \cdots \to i_t \to i_1$. Due to Line~\ref{step:efx-no-matching}, if the algorithm reaches Line~\ref{step:topological-ordering}, then it must be that $w(i, \psi(i)) = u_i(\psi(i)) \geq \tau$ for all $i \in N$. Hence, the cycle implies that $w(i_1, \psi(i_1)) + \cdots + w(i_{t - 1}, \psi(i_{t - 1})) + w(i_t, \psi(i_t)) < w(i_1, \psi(i_2)) + \cdots + w(i_{t - 1}, \psi(i_t)) + w(i_t, \psi(i_1))$. In other words, if we adjust the assignment $\psi$ so that $i_1$ receives $\psi(i_2)$, $i_2$ receives $\psi(i_3)$, \dots, and $i_t$ receives $\psi(i_1)$, then this would result in a higher total weight, which is a contradiction to the definition of $\psi$. 
\end{proof}

Now that we have established the validity of the algorithm, we show that the algorithm outputs an EFX allocation with high probability if we choose $\tau=1 - \frac{2\log n}{\alpha n}$.

\begin{theorem} \label{thm:efx-max-matching}
For $q = o(n / \log n)$ and $\tau=1 - \frac{2\log n}{\alpha n}$, Algorithm~\ref{alg:max-matching} outputs an EFX allocation with high probability.
\end{theorem}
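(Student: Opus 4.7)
The plan is to show two things: (1) with high probability, Algorithm~\ref{alg:max-matching} does not return NULL, and (2) conditional on this, the output is EFX with high probability. Part (1) is a direct application of Erd{\H{o}}s--R{\'{e}}nyi: each pair $(i, j)$ satisfies $u_i(j) \geq \tau$ with probability at least $\alpha(1 - \tau) = 2 \log n / n$, which exceeds the threshold $(\log n + \omega(1))/n$ of Lemma~\ref{lem:er-matching}, so restricting to any $n$ items shows that the corresponding threshold graph has a perfect matching with high probability, and hence the maximum-weight assignment $\psi$ satisfies $u_i(\psi(i)) \geq \tau$ for every $i$.

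For part (2), I would first record two structural observations. The first is a \emph{max-weight property}: for every $i \in N$ and every unused item $j \in B := M \setminus \psi(N)$, $u_i(j) \leq u_i(\psi(i))$. Indeed, if $u_i(j) \geq \tau$ then swapping $\psi(i)$ for $j$ would strictly increase the total weight unless this inequality holds, and if $u_i(j) < \tau$ then $u_i(j) < \tau \leq u_i(\psi(i))$. The second is a \emph{topological property}: $\sigma(i) > \sigma(i')$ forces $u_i(\psi(i)) \geq u_i(\psi(i'))$, since otherwise the envy graph would contain the edge $i \to i'$, contradicting the topological ordering.

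A short case analysis using these two properties reduces the verification of EFX to a single remaining case: when both $\sigma(i), \sigma(i') \leq q$ and $\sigma(i) < \sigma(i')$, so that both agents receive an extra item, we need $u_i(\psi(i)) + u_i(j_{\sigma(i)}) \geq u_i(\psi(i'))$. Since $u_i(\psi(i)) \geq \tau$ and $u_i(\psi(i')) \leq 1$, it suffices to guarantee $u_i(j_{\sigma(i)}) \geq 1 - \tau$ for every $i$ in $A := \{i : \sigma(i) \leq q\}$. Because the algorithm is free to choose the labeling $j_1, \dots, j_q$ of $B$, the task reduces to finding a perfect matching in the bipartite graph $H$ on $(A, B)$ whose edges are $\{(i, j) : u_i(j) \geq 1 - \tau\}$.

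The main obstacle is that $A$ and $B$ depend on the utilities, so the edges of $H$ cannot immediately be treated as i.i.d. I would handle this by conditioning on the values $\{u_i(\psi(i'))\}_{i, i' \in N}$, which together with the max-weight constraint determine $\psi$, $\sigma$, $A$, and $B$. Conditional on this, each remaining $u_i(j)$ with $j \in B$ is distributed as $\cD$ truncated to an interval with upper endpoint $u_i(\psi(i)) \geq \tau$, and Proposition~\ref{prop:conditional-bounded} gives a conditional probability of at least $1 - O(\log n / n)$ that the corresponding edge of $H$ is present. For $q = \omega(1)$ this edge density comfortably exceeds the matching threshold of Lemma~\ref{lem:er-matching} applied to $\cG(q, q, \cdot)$, so $H$ has a perfect matching with high probability; for $q = O(1)$ a union bound over the $O(1)$ potential non-edges shows that $H$ is complete with probability $1 - O(\log n / n)$. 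In either case a valid labeling exists, so the algorithm produces an EFX allocation with high probability throughout the range $q = o(n / \log n)$.
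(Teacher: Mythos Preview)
Your overall plan matches the paper's: rule out NULL via Erd\H{o}s--R\'enyi, verify EFX for agents with $\sigma(i)>q$ via the two structural observations (this is exactly the paper's Lemma~\ref{lem:efx-non-prob}), and for agents with $\sigma(i)\leq q$ reduce to showing $u_i(j_{\sigma(i)})\geq 1-\tau$.

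The gap is in your conditioning step. Conditioning on $\psi$ together with $\{u_i(\psi(i'))\}_{i,i'}$, the conditional law of $u_i(j)$ for $j\in B$ is \emph{not} ``$\cD$ truncated at $u_i(\psi(i))$''. The event ``$\psi$ is maximum-weight'' imposes more than the single-swap constraint $w(i,j)\leq u_i(\psi(i))$: for any alternating path ending at $j$ you get a tighter bound $w(i,j)\leq c_{i,j}$, and $c_{i,j}$ can be strictly smaller than $u_i(\psi(i))$ and even smaller than $\tau$. (Concretely, with $n=2$, $\psi=(1\mapsto1,2\mapsto2)$, the swap $1\mapsto2,\,2\mapsto j$ forces $w(2,j)\leq u_1(1)+u_2(2)-u_1(2)$, which is below $u_2(\psi(2))$ whenever $u_1(2)>u_1(1)$.) Your conclusion can be rescued---the constraint is on $w(i,j)$ rather than $u_i(j)$, so when $c_{i,j}<\tau$ the conditional law of $u_i(j)$ is $\cD_{\leq\tau}$ and still gives $\Pr[u_i(j)<1-\tau]=O(\log n/n)$; and a path-decomposition of the symmetric difference of two assignments shows the constraints on different pairs $(i,j)$ separate, yielding conditional independence. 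But none of this is what you wrote, and it is genuine additional work.

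The paper sidesteps all of this with a cleaner device: decompose each $u_i(j)$ via $b_{i,j}=\bone[u_i(j)<\tau]$ and conditional values $v^{\text{high}}_{i,j}\sim\cD_{>\tau}$, $v^{\text{low}}_{i,j}\sim\cD_{\leq\tau}$. Since $w(i,j)$ depends only on $(b_{i,j},v^{\text{high}}_{i,j})$, the algorithm's choices of $\psi,\sigma,A,B$ and the fixed labeling $j_1,\dots,j_q$ are measurable with respect to $(b,v^{\text{high}})$, leaving $v^{\text{low}}$ as fresh randomness. For each $i\in A$, either $b_{i,j_{\sigma(i)}}=0$ (done) or $u_i(j_{\sigma(i)})=v^{\text{low}}_{i,j_{\sigma(i)}}$ is fresh with $\Pr[v^{\text{low}}<1-\tau]=O(\log n/n)$; a union bound over the $q$ agents finishes. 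In particular, no matching on $(A,B)$ is needed: because the labeling is fixed before the fresh randomness is revealed, only $q$ pairs enter the union bound, so choosing the labeling adaptively---which amounts to modifying the algorithm---is unnecessary.
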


To prove Theorem~\ref{thm:efx-max-matching}, it is helpful to clarify the independence between the different steps of our algorithm. In this regard, let us think of each random variable $u_i(j)$ as being generated using three independent random variables:
\begin{itemize}
\item A Bernoulli random variable $b_{i, j}$ such that $b_{i, j} = 1$ with probability $F_{\cD}(\tau) \leq \max\{0, 1 - \frac{2\log n}{n}\}$.
\item A random variable $v_{i, j}^{\text{high}}$ sampled from $\cD_{> \tau}$.
\item A random variable $v_{i, j}^{\text{low}}$ sampled from $\cD_{\leq \tau}$.
\end{itemize}
Once these three random variables are sampled, if $b_{i, j} = 1$, then $u_i(j)$ is set to $v_{i, j}^{\text{low}}$. Otherwise, $u_{i}(j)$ is set to $v_{i, j}^{\text{high}}$.

By viewing the utilities as generated by the above process, it is obvious that our algorithm up until Line~\ref{step:unused-items} depends only on $\{b_{i, j}\}_{i \in N, j \in M}$ and $\{v_{i, j}^{\text{high}}\}_{i \in N, j \in M}$ (but \emph{not} on $\{v_{i, j}^{\text{low}}\}_{i \in N, j \in M}$). With this in mind, we proceed with our analysis in two stages. The first stage is up until Line~\ref{step:unused-items}, for which we use the randomness of $\{b_{i, j}\}_{i \in N, j \in M}$ to upper bound the probability of the algorithm terminating at Line~\ref{step:efx-no-matching}, as formalized below.

\begin{lemma} \label{lem:matching-exists}
With high probability (over the randomness of $\{b_{i, j}\}_{i \in N, j \in M}$), Algorithm~\ref{alg:max-matching} does not return NULL.
\end{lemma}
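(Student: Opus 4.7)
The plan is to reduce the non-NULL condition to the existence of a matching saturating $N$ in an appropriate random bipartite graph, and then to invoke the Erd\H{o}s--R\'enyi matching result (Lemma~\ref{lem:er-matching}) on a balanced subgraph.

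First I would observe that the algorithm returns NULL precisely when every injection $\psi: N \to M$ has some coordinate $i$ with $u_i(\psi(i)) < \tau$. Indeed, if some injection satisfies $u_i(\psi(i)) \geq \tau$ for all $i$, then its total weight is finite; so the maximum-weight assignment also has finite total weight, which forces every one of its entries to be at least $\tau$. Conversely, if no such injection exists, then every assignment contains a $-\infty$ entry and the check at Line~\ref{step:efx-no-matching} triggers.

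Next, define the bipartite graph $G = (N, M, E)$ with $(i,j) \in E$ iff $b_{i,j} = 0$. The event ``the algorithm does not return NULL'' then coincides with the event that $G$ contains a matching saturating $N$, and this event is measurable with respect to $\{b_{i,j}\}_{i \in N, j \in M}$ alone, as required. By the independence of the $b_{i,j}$'s, the edges of $G$ are independent, each present with probability $p := 1 - F_{\cD}(\tau)$. The $(\alpha,\beta)$-PDF-boundedness of $\cD$ gives $p \geq \alpha(1 - \tau) = \frac{2\log n}{n}$.

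Finally, I would restrict attention to the subgraph $G'$ of $G$ induced on $N$ together with the first $n$ items of $M$; any perfect matching in this balanced subgraph yields an $N$-saturating matching in $G$. Since $p \geq \frac{2\log n}{n} = \frac{\log n + \log n}{n}$ and $\log n = \omega(1)$, a standard coupling (e.g.\ via independent $U[0,1]$ thresholds per edge) shows that $G'$ stochastically dominates a sample from $\cG(n, n, p_0)$ with $p_0 = (\log n + \omega(1))/n$. Lemma~\ref{lem:er-matching} then provides a perfect matching in $\cG(n, n, p_0)$ with high probability, and hence in $G'$ and in $G$. No step presents a genuine obstacle; the only point that requires attention is verifying that the specific choice $\tau = 1 - \frac{2\log n}{\alpha n}$ makes the edge probability comfortably exceed the $\frac{\log n + \omega(1)}{n}$ threshold demanded by the Erd\H{o}s--R\'enyi theorem, which is exactly why the factor of $2$ and the $\alpha$ appear in the definition of $\tau$.
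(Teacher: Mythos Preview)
Your proposal is correct and follows essentially the same approach as the paper: both restrict to the bipartite graph on $N$ and the first $n$ items with edges given by $b_{i,j}=0$, lower-bound the edge probability by $\frac{2\log n}{n}$ via $(\alpha,\beta)$-PDF-boundedness, and invoke Lemma~\ref{lem:er-matching}. Your write-up is slightly more explicit about the equivalence between non-NULL and the existence of an $N$-saturating matching and about the stochastic-domination step, but the argument is the same.
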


\begin{proof}
Let $M'$ be the set of the first $n$ items.
Consider the bipartite graph $G = (N, M', E)$ where $E = \{(i, j) \in N \times M' \mid b_{i, j} = 0\}$. 
If the graph $G$ contains a perfect matching, then the algorithm does not return NULL (at line~\ref{step:efx-no-matching}); this is because if we pick the assignment $\psi$ that corresponds to this perfect matching, then $\psi$ has a positive weight. Observe also that $G$ is distributed as $\cG(n, n, p)$ where $p = \Pr[b_{i, j} = 0] = 1 - F_{\cD}(\tau) \geq \min\{1, \frac{2\log n}{n}\}$. Thus, from Lemma~\ref{lem:er-matching}, $G$ contains a perfect matching with high probability.
\end{proof}

Next, we consider the second stage of the algorithm, which is after Line~\ref{step:unused-items}. For this purpose, we may think of $\{b_{i, j}\}_{i \in N, j \in M}$ and $\{v^{\text{high}}_{i, j}\}_{i \in N, j \in M}$ as arbitrary (i.e., worst case) and $\{v^{\text{low}}_{i, j}\}_{i \in N, j \in M}$ as being random. 
We will prove two more lemmas. The first lemma is that the output allocation is EFX for all agents whose topological rank is at least $q + 1$. We note that this lemma is \emph{not} probabilistic and holds regardless of the values of the random variables.

\begin{lemma} \label{lem:efx-non-prob}
When Algorithm~\ref{alg:max-matching} does not output NULL, the output allocation is EFX for all agents $i \in N$ such that $\sigma(i) > q$.
\end{lemma}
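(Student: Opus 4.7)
The plan is to fix an agent $i$ with $\sigma(i) > q$, note that $M_i = \{\psi(i)\}$, and then verify the EFX condition against every other agent $i' \in N$ by case analysis on $\sigma(i')$.

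First I would dispose of the easy case $\sigma(i') > q$: here $M_{i'}$ is a singleton, so removing its only item leaves $\emptyset$, and the EFX inequality holds trivially since utilities are nonnegative. The interesting case is $\sigma(i') \leq q$, in which $M_{i'} = \{\psi(i'), j_{\sigma(i')}\}$. I need to check EFX when each of the two items is removed.

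When the removed item is the ``extra'' item $j_{\sigma(i')}$, the inequality to prove is $u_i(\psi(i)) \geq u_i(\psi(i'))$. Here I invoke the topological ordering: since $\sigma(i) > q \geq \sigma(i')$, the edge $(i, i')$ cannot be present in $E_{\text{envy}}$ (otherwise $\sigma(i) < \sigma(i')$ would be forced), so by definition of the envy graph we have $u_i(\psi(i)) \geq u_i(\psi(i'))$, as desired.

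The more subtle case, and what I expect to be the main point, is when the removed item is $\psi(i')$, so that we must show $u_i(\psi(i)) \geq u_i(j_{\sigma(i')})$. The key observation is that $j_{\sigma(i')}$ is an item not used by the assignment $\psi$, so it is available to be swapped in. Since the algorithm did not return NULL, $w(i,\psi(i)) = u_i(\psi(i)) \geq \tau$. Now suppose for contradiction that $u_i(j_{\sigma(i')}) > u_i(\psi(i)) \geq \tau$; then $w(i, j_{\sigma(i')}) = u_i(j_{\sigma(i')})$, and replacing $\psi(i)$ with $j_{\sigma(i')}$ while leaving every other agent's assignment intact yields another valid assignment of strictly larger total weight, contradicting the maximality of $\psi$. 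Therefore $u_i(\psi(i)) \geq u_i(j_{\sigma(i')})$, completing the EFX check and hence the lemma. This argument is deterministic, matching the statement's claim that the conclusion holds whenever the algorithm does not return NULL, with no appeal to the randomness of $\cD$.
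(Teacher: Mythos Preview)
Your proof is correct and follows essentially the same approach as the paper: the same case split on $\sigma(i')$, the topological-ordering argument to rule out envy toward $\psi(i')$, and the swap-based maximality argument for the unused item $j_{\sigma(i')}$. Your version spells out the role of the threshold $\tau$ in the swap argument slightly more explicitly, but otherwise the two proofs coincide.
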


\begin{proof}
Fix $i \in N$ such that $\sigma(i) > q$, and consider any agent $i' \ne i$. If $i'$ also satisfies $\sigma(i') > q$, then $i'$ receives only one item and hence $i$ is EFX with respect to $i'$. On the other hand, if $i'$ satisfies $\sigma(i') \leq q$, then $i'$ receives two items $\psi(i')$ and $j_{\sigma(i')}$. Since $\sigma(i) > q \geq \sigma(i')$, there is no edge from $i$ to $i'$ in $E_{\text{envy}}$, meaning that
$u_i(\psi(i)) \geq u_i(\psi(i'))$. Furthermore, since $j_{\sigma(i')}$ is unused in the assignment $\psi$, it must be that $u_i(\psi(i)) \geq u_i(j_{\sigma(i')})$, as otherwise changing $\psi(i)$ to $j_{\sigma(i')}$ would increase the weight of $\psi$. This means that $i$ values $M_i = \{\psi(i)\}$ at least as much as each of the two items received by $i'$; hence, $i$ is again EFX with respect to $i'$. 
\end{proof}

The other lemma that we need is that, with high probability, the output allocation is EFX for all agents whose topological rank is at most $q$. Note that this probability is over $\{v^{\text{low}}_{i, j}\}_{i \in N, j \in M}$, and the lemma holds for any (i.e., worst-case) values of $\{b_{i, j}\}_{i \in N, j \in M}$ and $\{v^{\text{high}}_{i, j}\}_{i \in N, j \in M}$.

\begin{lemma} \label{lem:efx-prob}
When Algorithm~\ref{alg:max-matching} does not output NULL, with probability $1 - O(q \log n / n)$ (over the randomness of $\{v_{i, j}^{\text{low}}\}_{i \in N, j \in M}$), the output allocation is EFX for all $i \in N$ such that $\sigma(i) \leq q$.
\end{lemma}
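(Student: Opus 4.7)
The plan is to fix an agent $i$ with $\sigma(i) \leq q$, bound the probability of an EFX violation for $i$ by $O(\log n / n)$, and then take a union bound over the $q$ such agents. For any other agent $i'$ with $\sigma(i') > q$, the bundle $M_{i'}$ is a singleton, so EFX with respect to $i'$ is trivial; I only need to analyze pairs $(i, i')$ with both $\sigma(i), \sigma(i') \leq q$, in which case $M_{i'} = \{\psi(i'), j_{\sigma(i')}\}$.

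The first step is to use the maximum-weight property of $\psi$ to dispose of the easy direction. For any unused item $j$ with $u_i(j) \geq \tau$, swapping $\psi(i)$ for $j$ (keeping every other assignment fixed) would strictly increase the weight unless $u_i(j) \leq u_i(\psi(i))$. Combined with $u_i(\psi(i)) \geq \tau$, this yields $u_i(j_{\sigma(i')}) \leq u_i(\psi(i)) \leq u_i(M_i)$, so removing $\psi(i')$ from $M_{i'}$ is never problematic. The only remaining requirement is $u_i(\psi(i)) + u_i(j_{\sigma(i)}) \geq u_i(\psi(i'))$. Defining $\Delta_i := \max_{i' \neq i,\, \sigma(i') \leq q} (u_i(\psi(i')) - u_i(\psi(i)))^+$, I note that $\Delta_i \leq 1 - \tau = O(\log n / n)$, and the EFX condition for $i$ reduces to $u_i(j_{\sigma(i)}) \geq \Delta_i$.

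I will then split on $b_{i, j_{\sigma(i)}}$. When $b_{i, j_{\sigma(i)}} = 0$, the value is drawn from $\cD_{> \tau}$, so $u_i(M_i) > 2\tau \geq 1 \geq u_i(\psi(i'))$ for sufficiently large $n$ and EFX holds automatically. When $b_{i, j_{\sigma(i)}} = 1$, the value $u_i(j_{\sigma(i)}) = v^{\text{low}}_{i, j_{\sigma(i)}}$ is drawn from $\cD_{\leq \tau}$, and this is where the content of the lemma lies. The step I expect to be the main obstacle is the following independence observation: the first-stage outputs $\psi$, $\sigma$, and $j_1, \dots, j_q$, and therefore $\Delta_i$ itself, are measurable with respect to $\{b_{i, j}\}$ and $\{v^{\text{high}}_{i, j}\}$ alone, so conditioning on the entire first stage leaves $v^{\text{low}}_{i, j_{\sigma(i)}}$ still distributed as $\cD_{\leq \tau}$. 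This is precisely why the three-variable decomposition was set up before the lemma: without it, $\Delta_i$ and $u_i(j_{\sigma(i)})$ would be correlated through $\psi$ and $\sigma$, and a clean PDF-boundedness bound would no longer apply. Once the independence is in place, $F_{\cD_{\leq \tau}}(\Delta_i) = F_{\cD}(\Delta_i)/F_{\cD}(\tau) \leq \beta \Delta_i / (\alpha \tau) = O(\log n / n)$, and union-bounding over the $q$ relevant agents completes the proof.
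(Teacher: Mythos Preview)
Your proof is correct and arrives at the same core probability bound as the paper, but the paper takes a shorter route. Instead of dissecting the EFX condition for each $i'$ with $\sigma(i')\le q$ (handling removal of $\psi(i')$ via the max-weight property and then introducing $\Delta_i$), the paper simply proves the stronger statement that $u_i(M_i)\ge 1$ with probability $1-O(\log n/n)$; since every bundle has at most two items, this instantly gives EFX against all agents. The case split on $b_{i,j_{\sigma(i)}}$ and the bound $\Pr[v^{\text{low}}_{i,j_{\sigma(i)}}<1-\tau]=F_{\cD}(1-\tau)/F_{\cD}(\tau)=O(\log n/n)$ are identical in both arguments. What your approach buys is a slightly tighter sufficient condition ($u_i(j_{\sigma(i)})\ge \Delta_i$ rather than $u_i(j_{\sigma(i)})\ge 1-\tau$), at the cost of invoking the max-weight property and the measurability of $\Delta_i$; the paper's approach avoids both by noting that $u_i(M_i)\ge 1$ is already enough.

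One small point worth tightening: your claim that ``$\Delta_i$ itself is measurable with respect to $\{b_{i,j}\}$ and $\{v^{\text{high}}_{i,j}\}$'' is correct but not for the reason you give. The quantity $u_i(\psi(i'))$ may equal a low variable when $b_{i,\psi(i')}=1$; the point is that in that case $u_i(\psi(i'))<\tau\le u_i(\psi(i))$, so the corresponding term contributes $0$ to $\Delta_i$. (Alternatively, you never actually need measurability of $\Delta_i$: the deterministic bound $\Delta_i\le 1-\tau$ together with the independence of $v^{\text{low}}_{i,j_{\sigma(i)}}$ from the first-stage outputs---which holds because $j_{\sigma(i)}$ is an unused item, hence distinct from every $\psi(i')$---already suffices.)
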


\begin{proof}
We will argue that for each $i \in N$ such that $\sigma(i) \leq q$, the probability that the output allocation is \emph{not} EFX for $i$ is $O(\log n/n)$. Applying the union bound over all $i \in \{\sigma^{-1}(1), \dots, \sigma^{-1}(q)\}$ yields the desired result.

In fact, we will prove an even stronger claim that for each $i \in N$ with $\sigma(i) \leq q$, we have $u_i(M_i) \geq 1$ with probability $1 - O(\log n/n)$. Note that since each agent receives at most two items, $u_i(M_i) \geq 1$ immediately implies that the allocation is EFX for $i$.

To prove our claim, we consider two cases based on whether $b_{i, j_{\sigma(i)}} = 1$.
\begin{enumerate}
\item $b_{i, j_{\sigma(i)}} = 0$. In this case, we have $u_i(j_{\sigma(i)}) \geq \tau$. Furthermore, from our assumption that the algorithm does not output NULL, we have $u_i(\psi(i)) \geq \tau$. Hence, we have $u_i(M_i) \geq 2\tau = 2 - O(\log n / n)$, which is at least 1 for any sufficiently large $n$.
\item $b_{i, j_{\sigma(i)}} = 1$. Once again, from the assumption that the algorithm does not output NULL, we have $u_i(\psi(i)) \geq \tau$. Hence, if $u_i(M_i) < 1$, we must have $u_i(j_{\sigma(i)}) < 1 - \tau$, which happens with probability
\begin{align*}
\Pr[v_{i, j_{\sigma(i)}}^{\text{low}} < 1 - \tau] = \frac{F_{\cD}(1 - \tau)}{F_{\cD}(\tau)} \leq \frac{\frac{2\beta\log n}{\alpha n}}{1 - \frac{2\beta\log n}{\alpha n}} = O\left(\frac{\log n}{n}\right).
\end{align*}
\end{enumerate}
Hence, in both cases, we have $u_i(M_i) \geq 1$ with probability at least $1 - O\left(\frac{\log n}{n}\right)$, completing our proof.
\end{proof}

Finally, by combining Lemmas~\ref{lem:matching-exists}, \ref{lem:efx-non-prob} and~\ref{lem:efx-prob}, we immediately arrive at Theorem~\ref{thm:efx-max-matching}.

\subsection{Putting Things Together: Proof of Theorem~\ref{thm:efx}}
\label{sec:puttogether-efx}

The main theorem of this section (Theorem~\ref{thm:efx}) can now be established by simply selecting one of the three algorithms based on the range of the parameters.

\begin{proof}[Proof of Theorem~\ref{thm:efx}]
If $q=\omega(1)$, we run the round-robin algorithm with reversed last round; from Theorem~\ref{thm:rr-efx}, it outputs an EFX allocation with high probability.
Else, $q=O(1)$. 
If $r \geq 2$, we run Algorithm~\ref{alg:matching-simple}; from Theorem~\ref{thm:efx-blackbox}, this outputs an EFX allocation with high probability.
Finally, if $r=1$, we run Algorithm~\ref{alg:max-matching}, which, from Theorem~\ref{thm:efx-max-matching}, yields an EFX allocation with high probability.
\end{proof}

\section{Envy-free Assignments}
\label{sec:assignments}

In this section, we address another important resource allocation setting: assignments.
Unlike with allocations, here we assign exactly one item to every agent and leave the remaining items unassigned.
Envy-freeness in the assignment setting means that every agent values her assigned item at least as much as that of any other agent.
Since agents only compare individual items, the (non-atomic) distribution $\cD$ no longer plays an important role, and we may simply assume that each agent draws a strict ranking of items from most preferred to least preferred independently of other agents.
Recently, \citet{GanSuVo19} showed that  an envy-free assignment is likely to exist if $m=\Omega(n\log n)$, thereby leaving a gap between $\Omega(n\log n)$ and $n$ (the latter is the minimum number of items needed so that any feasible assignment exists).
Our contribution is the following theorem, which essentially closes this gap.

\begin{theorem} \label{thm:ef-assignment}
Let $\varepsilon > 0$ be any constant.
If $\frac{m}{n} \geq e + \varepsilon$, then with high probability an envy-free assignment exists. On the other hand, if $\frac{m}{n} \leq e - \varepsilon$, then with high probability no envy-free assignment exists. 
\end{theorem}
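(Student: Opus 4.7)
The plan is to analyze the random variable $X$ counting envy-free assignments, equivalently pairs $(S,\psi)$ consisting of a size-$n$ subset $S\subseteq M$ and a bijection $\psi:[n]\to S$ such that $\psi(i)$ is agent $i$'s most preferred item in $S$ for every $i$. Since preferences are independent uniform random permutations, for any fixed $(S,\psi)$ the probability that each agent $i$'s favorite in $S$ equals $\psi(i)$ is $1/n^n$, yielding $\E[X] = (m)_n/n^n$, where $(m)_n = m(m-1)\cdots(m-n+1)$. A Stirling-type estimate gives $\ln \E[X] = n\cdot g(c) + O(\log n)$ with $g(c) := c\ln c - (c-1)\ln(c-1) - 1$ and $c := m/n$. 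Crucially, $g$ is positive for $c$ exceeding a threshold $c^{*}\approx 1.76$, so the first moment alone cannot distinguish the regime $c\leq e$.

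For the positive direction ($m/n \geq e+\varepsilon$), I would apply the second moment method: $\Pr[X\geq 1]\geq \E[X]^2/\E[X^2]$. Computing $\E[X^2] = \sum_{\psi_1,\psi_2}\Pr[\psi_1\text{ and }\psi_2\text{ are both valid}]$ requires partitioning pairs according to their overlap type, specified by $|S_1\cap S_2|$ (the number of shared items) and by the number of agents $i$ with $\psi_1(i) = \psi_2(i)$. For each overlap pattern, the joint validity probability is derived by analyzing the joint distribution of each agent's favorites in $S_1$ and in $S_2$ using symmetry on the random permutation restricted to $S_1\cup S_2$. The main estimate shows that for $c \geq e+\varepsilon$ the contribution from non-trivially overlapping pairs is negligible, so that $\E[X^2] = (1+o(1))\E[X]^2$, giving existence with high probability.

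For the negative direction ($m/n \leq e-\varepsilon$), the main difficulty is that $\E[X]$ already grows exponentially once $c > c^{*}$, so Markov's inequality is insufficient in the range $c^{*} < c < e$. My plan is to show that valid assignments come in large ``clumps,'' so that conditional on one valid $\psi_0$ there are typically exponentially many others, which yields $\Pr[X\geq 1] = \E[X]/\E[X\mid X\geq 1]\to 0$. Concretely, conditioning on the event that $\psi_0$ is valid constrains, for each agent $i$, the item $\psi_0(i)$ to be her top in $S_0$, but leaves her ranking of other items largely free. A single-item swap---replacing some $\psi_0(i_0)$ by a new item $j_0\in M\setminus S_0$---then has conditional validity probability of order $1/n$ rather than $1/n^n$, because the other agents only need to rank $j_0$ below their already-top item in $S_0$. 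Iterating such swaps, and tracking carefully how the conditional distribution of preferences evolves, yields an exponentially large family of valid assignments conditional on $\psi_0$; the threshold $c = e$ emerges as the point where this amplification factor exactly matches the unconditional $\E[X]$.

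The main obstacle will be the detailed combinatorial computations in both directions. For the second moment, one must carefully bound the contribution of each overlap type and verify that the cutoff for $\E[X^2] = (1+o(1))\E[X]^2$ is precisely $c = e$. For the negative direction, one must rigorously quantify the conditional distribution of agents' rankings given a planted valid assignment and pin down the amplification factor produced by iterated swaps, so that the upper and lower thresholds meet at $c = e$ rather than at the naive first-moment threshold $c^{*}$.
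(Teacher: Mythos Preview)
Your approach is entirely different from the paper's. The paper does not use moment methods at all. Instead it introduces a simple greedy algorithm that is \emph{exact}: it outputs an envy-free assignment whenever one exists and outputs NULL otherwise. The algorithm repeatedly lets an unassigned agent point to her favourite among the currently ``valid'' items; if that item is already taken, it is declared invalid and both agents become unassigned. Tracking $X_t$ (unassigned valid items) and $Y_t$ (assigned agents) gives a one-dimensional Markov chain, since $2X_t+Y_t=2m-t$. The expected one-step change satisfies $\E[X_{t+1}-X_t\mid X_t=x]=-\frac{x}{2m-t-x}$, and Wormald's differential-equation method shows that $X_t/2m$ tracks the solution of $z'=-z/(1-s-z)$, namely $2z - z\ln(2z)=1-s$. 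From this one reads off $\max_t Y_t = m/e \pm o(m)$, and the threshold $m/n=e$ falls out immediately. The constant $e$ thus appears as the maximiser of $z\ln(1/2z)$, not from any moment calculation.

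Your proposal is a reasonable line of attack in spirit, but as written it is a plan, not a proof, and there is a real risk it does not close. For the positive side you assert that $\E[X^2]=(1+o(1))\E[X]^2$ precisely when $c\geq e+\varepsilon$, but you give no mechanism for why the second-moment threshold should land at $e$ rather than at some other constant; in many comparable problems (random $k$-SAT, perfect matchings in random hypergraphs) the vanilla second moment on the natural count overshoots, and a weighted count or a different decomposition is needed. The overlap computation here is genuinely delicate: for $|S_1\cap S_2|=k$ the events ``$S_1$ valid'' and ``$S_2$ valid'' are correlated through every agent's single ranking, and you have not indicated how the dominant overlap scale behaves or why its exponential rate vanishes exactly at $c=e$. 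For the negative side your clumping argument is even more heuristic: the claim that a single swap has conditional probability ``of order $1/n$'' is not obviously correct (conditioning on $\psi_0(i)$ being the top of $S_0$ biases its global rank, and the probability that a fresh item $j_0$ beats it is not simply $1/n$), and you have not shown how iterated swaps compose or why the amplification rate matches $g(c)$ at $c=e$. By contrast, the paper's differential-equation route sidesteps all of this: the greedy algorithm is exact, the process is one-dimensional after the linear invariant, and the threshold is read off from an explicit ODE solution.
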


Our proof of Theorem~\ref{thm:ef-assignment} follows an approach pioneered by \citet{KarpSi81} and later expanded upon by \citet{Wormald95} and others. Roughly speaking, this method can be applied to analyze greedy algorithms when the input is randomized. To apply the method, we first write out (probabilistic) recurrence relations for certain quantities important to the algorithm. Secondly, we convert these recurrence relations into continuous ones (i.e., differential equations), which we can then solve. The final step is to use concentration inequalities to show that the continuous solution and the discrete one are approximately the same with high probability; from this discrete solution, we can then deduce how well the algorithm performs. For more details and examples on this approach, we refer to the survey of \citet{Wormald99}.

The rest of this section applies the outlined method to our problem. In particular, we start by describing a simple greedy algorithm for the problem in Section~\ref{sec:ef-assignment-alg}. Then, we write out the probabilistic recurrence relations in Section~\ref{sec:recurrence-discrete} and translate them to their continuous counterpart in Section~\ref{sec:diff-eq}.
Finally, we put all the pieces together and prove Theorem~\ref{thm:ef-assignment} in Section~\ref{sec:proof-ef-assignment}.

\subsection{Greedy Algorithm}
\label{sec:ef-assignment-alg}

We first present a simple greedy algorithm which, as long as there are no ties in the ranking, produces a correct answer: it outputs an envy-free assignment if one exists, and NULL otherwise.
The algorithm starts with an empty assignment and marks every item as ``valid''. While there is still at least one unassigned agent and at least one valid item left, it picks an unassigned agent arbitrarily and considers her most preferred item among the valid items. If this item is not yet matched to any agent, then assign it to this agent. Otherwise, mark this item invalid and deassign it from any agent it was assigned to.
The algorithm then terminates with an envy-free assignment if there is at least one valid item at the end (in which case there must also be at least $n$ such items), and with no assignment otherwise.

The pseudocode for the algorithm is presented below; here we use $\perp$ to represent ``unassigned'' in the assignment, and $M'$ and $\succ$ to denote the set of ``valid'' items and a ranking, respectively.

\begin{algorithm}
\caption{Greedy Algorithm for Envy-Free Assignment}
\label{alg:greedy-assignment}
\begin{algorithmic}[1]
\Procedure{GreedyAssignment$(N, M, \{\succ_i\}_{i\in N})$}{}
\State $\psi(i) \leftarrow \perp$ for all $i \in N$.
\State $M' \leftarrow M$
\While{$\psi(i) = \perp$ for some $i \in N$ and $M' \ne \emptyset$}
\State $j \leftarrow$ most preferred item of $i$ within $M'$ \label{line:pick-most-pref}
\If{$\psi(i') = j$ for some $i' \in N$}
\State $\psi(i') \leftarrow \perp$ \label{line:unassign}
\State $M' \leftarrow M' \setminus \{j\}$ \label{line:invalidate}
\Else
\State $\psi(i) \leftarrow j$ \label{line:new-assign}
\EndIf
\EndWhile
\If{$M' \ne \emptyset$}
\State \Return $\psi$
\Else
\State \Return NULL
\EndIf
\EndProcedure
\end{algorithmic}
\end{algorithm}

The following lemma establishes the correctness of the algorithm.

\begin{lemma} \label{lem:greedy-correctness}
When the rankings $\{\succ_i\}_{i \in N}$ contain no ties, Algorithm~\ref{alg:greedy-assignment} always outputs an envy-free assignment if one exists, and NULL otherwise.
\end{lemma}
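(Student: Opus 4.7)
The plan is to prove two separate directions: (a) whenever Algorithm~\ref{alg:greedy-assignment} returns a non-NULL assignment $\psi$, that assignment is envy-free; and (b) whenever an envy-free assignment exists, the algorithm does not return NULL. As a preliminary, termination follows from the fact that each iteration of the while loop strictly decreases the potential $2|M'| - |\{i : \psi(i) \ne \perp\}|$ by exactly $1$ (assigning an item increases the second term by one while leaving $M'$ unchanged; invalidating removes one item from $M'$ and decreases the second term by one), and this potential is bounded below.

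For direction (a), I would first observe that when the loop exits with $M' \ne \emptyset$, it must be that every agent is assigned (otherwise the loop condition would still hold). The two key structural facts driving the proof are: $M'$ is monotonically non-increasing across iterations, and whenever the currently-assigned item $\psi(i)$ of an agent $i$ is removed from $M'$, the algorithm simultaneously sets $\psi(i) \leftarrow \perp$. Therefore, for any $i$, the item $\psi(i)$ in the final assignment is still in the final $M'$, and moreover $\psi(i)$ was $i$'s most preferred valid item at the moment it was (last) assigned. Every item appearing in the final assignment lies in the final $M'$, hence was in $M'$ at that moment as well, and so is less preferred by $i$ than $\psi(i)$. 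This gives envy-freeness.

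For direction (b), let $\psi^*$ be an envy-free assignment and I would prove by induction on the iterations the invariant $\mathrm{Im}(\psi^*) \subseteq M'$. Once this invariant is established, $|M'| \geq n$ at termination, so the algorithm outputs a non-NULL assignment. The only step that can shrink $M'$ is the invalidation of some item $j$ triggered by an unassigned agent $i$ for whom $j$ is the most preferred valid item, with $j$ currently assigned to some agent $i' \ne i$. Suppose toward contradiction that $j = \psi^*(i_0)$. Envy-freeness of $\psi^*$ gives $\psi^*(i) \succeq_i j$; if $\psi^*(i) \ne j$, then by the no-ties assumption $\psi^*(i) \succ_i j$, and by the inductive hypothesis $\psi^*(i) \in M'$, contradicting that $j$ is $i$'s top valid item. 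Hence $\psi^*(i) = j$, forcing $i_0 = i$ and thus $\psi^*(i') \ne j$, so envy-freeness yields $\psi^*(i') \succ_{i'} j$. Now rewinding to the (most recent) moment when $i'$ was assigned $j$: at that point, by monotonicity of $M'$ together with the current inductive hypothesis, $\psi^*(i')$ was already in $M'$, contradicting that $j$ was $i'$'s most preferred valid item at that time.

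The main obstacle is the invariant in step (b): it must survive not only the envy-free inequality for the picking agent $i$, but also a retrospective argument about $i'$ that requires combining the monotonicity of $M'$ with envy-freeness of $\psi^*$ for $i'$. The no-ties assumption plays a quiet but essential role, since without it the strict preferences $\psi^*(i) \succ_i j$ and $\psi^*(i') \succ_{i'} j$ cannot be derived, and the whole case analysis collapses.
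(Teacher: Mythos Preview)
Your proof is correct. Part (a) matches the paper's argument essentially verbatim. For part (b), the paper takes the contrapositive: assuming the algorithm outputs NULL, it orders the removed items $j_1,\dots,j_m$ and shows by induction on $t$ that no envy-free assignment can use any of $j_1,\dots,j_t$. The key step there is the observation that when $j_t$ is removed, it is the most preferred item within the current $M' = M \setminus \{j_1,\dots,j_{t-1}\}$ for \emph{both} the picking agent $i_t$ and the previously-assigned agent $i'_t$ (the latter because $M'$ has only shrunk since $i'_t$ picked $j_t$); hence at least one of them would envy whoever receives $j_t$.

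Your invariant $\mathrm{Im}(\psi^*) \subseteq M'$ is the direct (non-contrapositive) version of exactly the same idea. The one substantive difference is that you treat $i$ and $i'$ asymmetrically, handling $i'$ via a ``rewind'' to the earlier iteration. This works, but you could shorten it by noting---as the paper does---that since $M'$ only shrinks and $j$ is still in $M'$, $j$ is also $i'$'s top item in the \emph{current} $M'$; then the same one-line envy argument applies to $i'$ as to $i$, with no rewind needed. Either packaging is fine.
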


\begin{proof}
First, observe that if the algorithm outputs an assignment $\psi$, the assignment must be envy-free. Indeed, the items are assigned in such a way that for each agent $i$, $\psi(i)$ is the most preferred item of $i$ within $M'$.

Hence, it remains to show that when Algorithm~\ref{alg:greedy-assignment} outputs NULL, no envy-free assignment exists. Suppose that the algorithm indeed outputs NULL, and let $j_1, \dots, j_m$ denote the items in the order that they are removed from $M'$. We will use induction to show that if we assign one of $j_1, \dots, j_m$ to an agent, then the assignment is not envy-free.
This in turn implies that no envy-free assignment exists.

Before we proceed to our inductive proof, let us introduce an additional notation: for every $t \in [m]$, let $i_t$ and $i'_t$ denote the agents $i$ and $i'$ that result in the removal of $j_t$ from $M'$ in line~\ref{line:invalidate} of the algorithm.

\textbf{Base Case.} Consider any assignment such that $j_1$ is assigned to some agent. Since $j_1$ is the most preferred items for both $i_1$ and $i'_1$, at least one of these two agents will envy the agent who receives $j_1$. Hence, the assignment cannot be envy-free.

\textbf{Inductive Step.} Suppose that, for some $1 \leq t < m$, any assignment that uses at least one of $j_1, \dots, j_t$ is not envy-free. Consider any assignment that uses $j_{t + 1}$. If this assignment uses any of $j_1, \dots, j_t$, it cannot be envy-free by our inductive hypothesis, so we may assume that the assignment does not use any of $j_1, \dots, j_t$. 
Given how the algorithm works, it must be the case that $j_{t + 1}$ is the most preferred item for both $i_{t + 1}$ and $i'_{t + 1}$ among all the items in $M \setminus \{j_1, \dots, j_t\}$. Hence, at least one of these two agents will envy the agent who receives $j_{t + 1}$, which means that the assignment is again not envy-free. 
This completes the inductive step and our proof.
\end{proof}

\subsection{Recurrence Relations}
\label{sec:recurrence-discrete}

Having described a greedy algorithm for the problem, we now write down a recurrence relation corresponding to the algorithm. To do so, let us use $M'_t$ to denote the set $M'$ after the $t$-th iteration of the while loop. We also use $\psi_t$ to denote the (possibly partial) assignment after the $t$-th iteration and $Y_t$ to denote the number of items assigned in $\psi_t$; equivalently, $Y_t = |\{i \in N \mid \psi(i) \ne \perp\}|$. We let $X_t$ denote the number of unassigned items in $M'_t$ (with respect to $\psi_t$), i.e., $X_t = |M'_t| - Y_t$. Initially, we have $X_0 = m$ and $Y_0 = 0$.

At step $t + 1 \geq 1$ with $Y_t < n$, notice that conditioned on the current set of valid items $M'_t$ and the current assignment $\psi_t$, the item $j$ picked in Line~\ref{line:pick-most-pref} is distributed uniformly at random among all items in $M'_t$. Hence, the probability that this item is not used in $\psi_t$ is $\frac{X_t}{X_t + Y_t}$; when the item is not used, the algorithm goes to Line~\ref{line:new-assign} and we have $(X_{t + 1}, Y_{t + 1}) = (X_t - 1, Y_t + 1)$. On the other hand, with probability $\frac{Y_t}{X_t + Y_t}$, the algorithm executes Lines~\ref{line:unassign} and~\ref{line:invalidate}, resulting in $(X_{t + 1}, Y_{t + 1}) = (X_t, Y_t - 1)$. In summary, we have
\begin{align} \label{eq:rec-original}
(X_{t + 1}, Y_{t + 1}) =
\begin{cases}
(X_t - 1, Y_t + 1) &\text{with probability $\frac{X_t}{X_t + Y_t}$}; \\
(X_t, Y_t - 1) & \text{with probability $\frac{Y_t}{X_t + Y_t}$}.
\end{cases}
\end{align}

Whenever $Y_t = n$, the algorithm exits the while loop and outputs an assignment. However, it will be more convenient for us to study the process with no such stopping condition. In other words, we run the above Markovian process for $t = 1,2, \dots, 2m$, and the probability that our algorithm finds an envy-free assignment is exactly the probability that $\max_{t = 1,2, \dots, 2m} Y_t \geq n$. Note that we choose to run the process for this range of $t$ because $2X_t + Y_t$ decreases by exactly $1$ in each iteration and both $X_t,Y_t$ remain nonnegative throughout by~\eqref{eq:rec-original}, which means that the process is valid for this range and that\footnote{An alternative way to see this is to observe that for $X_t$ to be zero, each item must be assigned once and unassigned once; this happens after exactly $2m$ iterations.} $X_{2m} = Y_{2m} = 0$.

The observation that $2X_t + Y_t$ decreases by $1$ in each iteration is also useful for simplifying our recurrence relation. In particular, it implies that 
\begin{align} \label{eq:invariance}
2X_t + Y_t = 2m - t.
\end{align}
By plugging~\eqref{eq:invariance} into~\eqref{eq:rec-original}, we obtain the following recurrence relation on $X_t$ alone.
\begin{align} \label{eq:rec-simplified}
X_{t + 1} =
\begin{cases}
X_t - 1 &\text{with probability $\frac{X_t}{2m - t - X_t}$}; \\
X_t & \text{with probability $1 - \frac{X_t}{2m - t - X_t}$}.
\end{cases}
\end{align}

\subsection{From Recurrence Relations to Differential Equations and Back}
\label{sec:diff-eq}

One way to quantify the ``average'' change of $X_t$ is via the expected value of $X_{t + 1} - X_t$. In particular, we can use~\eqref{eq:rec-simplified} to calculate
\begin{align*}
\E[X_{t + 1} - X_t \mid X_t = x] = - \frac{x}{2m - t - x} = -\frac{\frac{x}{2m}}{1 - \frac{t}{2m} - \frac{x}{2m}}.
\end{align*}
Let $f(s, z) := -\frac{z}{1 - s - z}$ for $0\leq s,z < 1$. The above equation may be written as 
\begin{align} \label{eq:diff-eq-discrete}
\E[X_{t + 1} - X_t \mid X_t = x] = f\left(\frac{t}{2m}, \frac{x}{2m}\right).
\end{align}
The key idea in the approach of~\citet{KarpSi81} and \citet{Wormald95} is that as $m\rightarrow\infty$, this Markovian process is ``similar'' to the differential equation
\begin{align} \label{eq:diff-eq}
\frac{d z}{d s} = f(s, z) \text{ for } 0\leq s < 1
\end{align}
with the initial condition $z(0) = 1/2$.

This similarity can be formalized. In particular, \citet[Theorem 1]{Wormald95} proved a very general theorem which essentially shows that whenever an equation such as~\eqref{eq:diff-eq-discrete} holds along with some mild technical conditions, we have $\left|\frac{X_t}{2m} - z\left(\frac{t}{2m}\right)\right| = o(1)$ for all indices $t$ with high probability, where $z = z(s)$ is the unique solution to~\eqref{eq:diff-eq}. An application of Wormald's theorem to our setting yields the following:

\begin{lemma} \label{lem:discrete-vs-continuous}
With high probability as $m\rightarrow\infty$,
$|X_t - 2m \cdot z\left(\frac{t}{2m}\right)| = o(m)$
for all $t = 0,1, \dots, 2m - 1$ simultaneously, where $z(s)$ is the solution to~\eqref{eq:diff-eq} in the range $s \in [0, 1)$ and $z(0) = 1/2$.
\end{lemma}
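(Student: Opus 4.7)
The plan is to apply the general Markov-chain-to-ODE theorem of \citet{Wormald95} (his Theorem~1) to the process $(X_t)_{t=0}^{2m}$ defined by~\eqref{eq:rec-simplified}. Two of the three hypotheses required by that theorem are immediate from the setup. First, $|X_{t+1}-X_t|\leq 1$ holds deterministically by~\eqref{eq:rec-simplified}, giving the trajectory-Lipschitz condition with constant $1$. Second, by~\eqref{eq:diff-eq-discrete}, the conditional drift equals $f(t/(2m), X_t/(2m))$ \emph{exactly}, so Wormald's trend condition holds with zero error.

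The only delicate hypothesis is the Lipschitz continuity of $f(s,z) = -z/(1-s-z)$, which fails near the singular line $s+z=1$. I would therefore fix a small auxiliary $\delta > 0$ and restrict to the closed safe zone
\begin{equation*}
D_\delta \;:=\; \bigl\{(s,z) \in [0,1-\delta]\times [0,1] : 1 - s - z \geq \delta/2\bigr\},
\end{equation*}
on which $f$ is $C^\infty$ and hence Lipschitz with a constant depending only on $\delta$. Crucially, the process $(t/(2m), X_t/(2m))$ stays in $D_\delta$ for all $t \leq (1-\delta)\cdot 2m$ automatically, because $Y_t \geq 0$ together with~\eqref{eq:invariance} forces $X_t \leq (2m-t)/2$ and hence $1 - t/(2m) - X_t/(2m) \geq (1-t/(2m))/2 \geq \delta/2$. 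To see that the ODE solution $z(s)$ with $z(0)=1/2$ likewise stays in $D_\delta$ on $[0,1-\delta]$, I would substitute $v(s) := z(s)/(1-s)$, which separates~\eqref{eq:diff-eq} to $(1/v^2 - 1/v)\,dv = -ds/(1-s)$ and integrates, using $v(0)=1/2$, to the implicit relation
\begin{equation*}
\ln(1-s) \;=\; -\frac{1}{v} - \ln(2v) + 2.
\end{equation*}
The right-hand side is strictly increasing in $v$ on $(0,1)$, equals $0$ at $v=1/2$, and tends to $-\infty$ as $v \to 0^+$, so for each $s\in[0,1)$ there is a unique $v(s) \in (0,1/2]$. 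Consequently $z(s) = v(s)(1-s) \leq (1-s)/2$ and $1 - s - z(s) = (1-s)(1-v(s)) \geq (1-s)/2$, which is at least $\delta/2$ on $[0,1-\delta]$.

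Wormald's theorem then yields, for each fixed $\delta > 0$, that with probability $1 - o(1)$, $|X_t - 2m\,z(t/(2m))| \leq \delta m$ uniformly for $t \leq (1-\delta)\cdot 2m$ (once $m$ is large enough depending on $\delta$). To extend this to all $t \leq 2m-1$, I would invoke two monotonicities: $X_t$ is non-increasing in $t$ by~\eqref{eq:rec-simplified}, and $z(s)$ is non-increasing since $f \leq 0$. Combined with the estimate $z(1-\delta) \leq \delta/2$ derived above, these give, for any $t$ in the tail $(1-\delta)\cdot 2m < t \leq 2m-1$, both $X_t \leq X_{(1-\delta)\cdot 2m} \leq m\delta + \delta m$ and $2m\,z(t/(2m)) \leq m\delta$, so $|X_t - 2m\,z(t/(2m))| \leq 3\delta m$ with high probability. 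Since $\delta > 0$ is arbitrary, a standard diagonal argument (letting $\delta = \delta(m) \to 0$ sufficiently slowly) upgrades this to a uniform $o(m)$ bound. The main obstacle throughout is precisely the singularity of $f$ at $s+z=1$; it is evaded because both the continuous and the discrete trajectories stay on or above the line $z = (1-s)/2$, keeping them at macroscopic distance from the singular line on the interior, while the endgame near $s=1$ is handled trivially by the fact that both sides are small.
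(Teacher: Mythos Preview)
Your proof is correct and follows essentially the same approach as the paper: apply Wormald's theorem on a region bounded away from the singular line $s+z=1$ where $f$ is Lipschitz, then use the monotonicity of both $X_t$ and $z(s)$ to handle the tail near $s=1$, where both quantities are already small. Your substitution $v=z/(1-s)$ (yielding $z(s)\le(1-s)/2$) and the deterministic bound $X_t\le(2m-t)/2$ are slightly cleaner ways of keeping both trajectories inside the safe region than the paper's direct manipulation of the implicit solution $2z - z\ln(2z) = 1-s$, but the overall structure is identical.
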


Since the technical constraints in Wormald's result are somewhat cumbersome to state, we defer the full proof of Lemma~\ref{lem:discrete-vs-continuous} to the appendix.

\subsection{Putting Things Together: Proof of Theorem~\ref{thm:ef-assignment}}
\label{sec:proof-ef-assignment}

With all the pieces ready, we now prove our main result of this section.

\begin{proof}[Proof of Theorem~\ref{thm:ef-assignment}]
Let us consider the Markovian process $\{(X_t, Y_t)\}_{0 \leq t \leq 2m}$ defined by~\eqref{eq:rec-original} with $(X_0, Y_0) = (m, 0)$. By Lemma~\ref{lem:discrete-vs-continuous}, with high probability, we have $|X_t - 2m \cdot z\left(\frac{t}{2m}\right)| = o(m)$ for all $t=0,1,\dots,2m-1$, where $z$ is the solution to~\eqref{eq:diff-eq-sol}.
When this holds, we can use~\eqref{eq:invariance} to derive 
\begin{align*}
&\left|Y_t - 2m \cdot z\left(\frac{t}{2m}\right) \cdot \ln\left(\frac{1}{2 z\left(\frac{t}{2m}\right)}\right)\right| \\ 
&\overset{\eqref{eq:invariance}}{=} \left|2m - t - 2X_t - 2m \cdot z\left(\frac{t}{2m}\right) \cdot \ln\left(\frac{1}{2 z\left(\frac{t}{2m}\right)}\right)\right| \\
&\leq \left|2m - t - 4m \cdot z\left(\frac{t}{2m}\right) - 2m \cdot z\left(\frac{t}{2m}\right) \cdot \ln\left(\frac{1}{2 z\left(\frac{t}{2m}\right)}\right)\right| + o(m) \\
&= \left|2m - t - 2m \cdot \left(2 z\left(\frac{t}{2m}\right) - z\left(\frac{t}{2m}\right) \cdot \ln\left(2 z\left(\frac{t}{2m}\right)\right)\right)\right| + o(m) \\
&\overset{\eqref{eq:diff-eq-sol}}{=} \left|2m - t - 2m \cdot \left(1 - \frac{t}{2m}\right)\right| + o(m) \\
&= o(m).
\end{align*}
This means that, with high probability, the following holds (recall that $Y_{2m} = 0$):
\begin{align} \label{eq:y-to-max}
\max_{t = 0,1,\dots, 2m} Y_t &= 2m \cdot \left(\max_{t = 0, 1,\dots, 2m-1}z\left(\frac{t}{2m}\right) \cdot \ln\left(\frac{1}{2 z\left(\frac{t}{2m}\right)}\right)\right) \pm o(m).
\end{align}
Now, observe that $\sup_{s \in [0, 1)} z(s) \ln\left(\frac{1}{2z(s)}\right) = \frac{1}{2e}$, with the maximum achieved at $s^* = 1 - \frac{3}{2e}$ (and $z(s^*) = \frac{1}{2e}$). Clearly, $\max_{t = 0, 1,\dots, 2m-1}z\left(\frac{t}{2m}\right) \cdot \ln\left(\frac{1}{2 z\left(\frac{t}{2m}\right)}\right) \leq \sup_{s \in [0, 1)} z(s) \ln\left(\frac{1}{2z(s)}\right) = \frac{1}{2e}$. On the other hand, we have $\max_{t = 0, 1,\dots, 2m-1}z\left(\frac{t}{2m}\right) \cdot \ln\left(\frac{1}{2 z\left(\frac{t}{2m}\right)}\right) \geq z\left(\frac{\lfloor 2m s^* \rfloor}{2m}\right) \cdot \ln\left(\frac{1}{2 z\left(\frac{\lfloor 2m s^* \rfloor}{2m}\right)}\right)$. Moreover, one can check that $z(s)$ is continuous at $s = s^*$, which means that $\lim_{m \to \infty} z\left(\frac{\lfloor 2m s^* \rfloor}{2m}\right) \cdot \ln\left(\frac{1}{2 z\left(\frac{\lfloor 2m s^* \rfloor}{2m}\right)}\right) = z(s^*) \cdot \ln\left(\frac{1}{2z(s^*)}\right) = \frac{1}{2e}$. Combining these lower and upper bounds, we have
\begin{align*}
\max_{t = 0, 1,\dots, 2m-1}z\left(\frac{t}{2m}\right) \cdot \ln\left(\frac{1}{2 z\left(\frac{t}{2m}\right)}\right) = \frac{1}{2e} + o(1)
\end{align*}
with high probability, where the term $o(1)$ converges to zero as $m \to \infty$. Plugging this back into~\eqref{eq:y-to-max}, we get
\begin{align*}
\max_{t = 0,1,\dots, 2m} Y_t = m/e \pm o(m).
\end{align*}

Finally, recall that the probability that \textsc{GreedyAssignment} finds an envy-free assignment is exactly the probability that $\max_{t=0,1,\dots,2m} Y_t \geq n$. Thus, if $m/n \geq e + \varepsilon$ for some $\varepsilon > 0$, then with high probability we have 
\begin{align*}
\max_{t=0,1,\dots,2m} Y_t \geq \frac{m}{e} - o(m) = \frac{m}{e + \varepsilon} + \frac{m \epsilon}{e(e + \epsilon)} - o(m) \geq n + m \cdot \left(\frac{\epsilon}{e(e + \epsilon)} - o(1)\right),
\end{align*}
which is at least $n$ for any sufficiently large $m$. This implies that \textsc{GreedyAssignment} finds an envy-free assignment with high probability in this case. On the other hand, if $m/n \leq e - \varepsilon$ for some $\varepsilon > 0$, then, using a similar argument, we can conclude that \textsc{GreedyAssignment} outputs NULL with high probability, in which case Lemma~\ref{lem:greedy-correctness} implies that no envy-free assignment exists.
\end{proof}

\section{Conclusion and Future Work}

In this paper, we have studied the asymptotic existence of fair allocations and settled several open questions from previous work.
In addition to the tight bounds themselves, our work also sheds light on the fairness guarantees provided by different algorithms in the probabilistic setting.
Specifically, our results serve as a strong argument for using the classical round-robin algorithm when allocating indivisible items: not only is the algorithm simple and its output always envy-free up to one item (EF1), but the produced allocation is likely to be fully envy-free as well as proportional provided that the number of items is sufficiently larger than the number of agents.
We also show that an EFX allocation exists with high probability for any relation between the numbers of agents and items, further confirming the worst-case existence of such allocations as a tantalizing open question.\footnote{Recently, \citet{ChaudhuryGaMe20} showed that an EFX allocation always exists when there are three agents.}

An interesting avenue that remains after this work is to investigate the asymptotic behavior of fair allocations that satisfy additional properties.
In fact, some desirable properties are already implied by previous results---for example, \citet{DickersonGoKa14} showed that a welfare-maximizing allocation is envy-free with high probability assuming that $m=\Omega(n\log n)$, while \citet{ManurangsiSu19} proved that if $m$ is a multiple of $n$, there exists an algorithm that likely computes an allocation which is both envy-free and \emph{balanced} (i.e., gives every agent the same number of items) as long as $m\geq 2n$.
In a similar vein, one could examine common fair division algorithms and solutions such as the envy-cycle elimination algorithm \citep{LiptonMaMo04}, the maximum Nash welfare solution \citep{CaragiannisKuMo16}, or the leximin solution \cite{BogomolnaiaMo04,KurokawaPrSh15} through the probabilistic lens.

Our asymptotic approach can also be applied beyond the canonical resource allocation setting in which the resource is allocated to individual agents who have equal entitlements.
For instance, many practical situations entail dividing items among \emph{groups} of agents---the agents in each group share the same set of items but may have different opinions on them \citep{Suksompong18,SuksompongPhD}.
In this generalized setting, \citet{ManurangsiSu17} studied the asymptotic existence of envy-free allocations and left open a logarithmic gap between existence and non-existence.
Likewise, a number of division problems involve agents who have different entitlements to the resource \citep{BabaioffNiTa19,FarhadiGhHa19}.
The definition of envy-freeness can be naturally extended to capture such scenarios, and \citet{ChakrabortyIgSu20} demonstrated through experiments that weighted envy-free allocations are
usually harder to find than their unweighted counterparts. 
Providing a formal explanation for this phenomenon using probabilistic tools is an intriguing direction for future research.

\bibliographystyle{ACM-Reference-Format}
\bibliography{main}

\appendix

\section{Omitted Proofs}

\subsection*{Proof of Lemma~\ref{lem:rr-simplified-process}}

To prove Lemma~\ref{lem:rr-simplified-process}, we will need the lemma below, which states that the following two processes result in the same distribution: (1) sample $\ell$ i.i.d. random variables from a distribution $\cD$, and let $Y^{\max}$ be the maximum value and $Z_1, \dots, Z_{\ell - 1}$ be the remaining values; and (2) sample the first random variable $W^{\max}$ from $\cD^{\max(m)}$, and sample $W_1, \dots, W_{\ell - 1}$ independently from the distribution $\cD$ conditioned on the value being at most $W^{\max}$ (i.e., $\cD_{\leq W^{\max}}$).

\begin{lemma} \label{lem:conditioned-max}
Let $\ell$ be a positive integer and let $\cD$ be any non-atomic probability distribution (over real numbers). Consider the following two processes:
\begin{itemize}
\item Sample $Y_1, \dots, Y_\ell$ i.i.d. at random from $\cD$. Let $q^* = \argmax_{q \in [\ell]}\{Y_q\}$ (tie broken arbitrarily). Then, let $Y^{\max} = Y_{q^*}$ and,  for each $t \in [\ell - 1]$, let 
\begin{align*}
Z_t =
\begin{cases}
Y_{t} &\text{ if } 1 \leq t < q^*; \\
Y_{t + 1} &\text{ if } q^* \leq t \leq \ell-1.
\end{cases}
\end{align*}
\item Sample $W^{\max} \sim \cD^{\max(\ell)}$. Then, sample $W_1, \dots, W_{\ell - 1}$ i.i.d. at random from  $\cD_{\leq W^{\max}}$.
\end{itemize}
Then, $(Y^{\max}, Z_1, \dots, Z_{\ell - 1})$ and $(W^{\max}, W_1, \dots, W_{\ell - 1})$  are identically distributed.
\end{lemma}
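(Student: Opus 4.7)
The plan is to compute the joint density of each of the two processes at an arbitrary point $(y, z_1, \dots, z_{\ell-1})$ with $z_i < y$ for all $i$, and show that they agree. Since $\cD$ is non-atomic, ties among $Y_1, \dots, Y_\ell$ occur with probability zero, so $q^*$ in Process 1 is almost surely well-defined. For the argument below I assume $\cD$ has density $f$ (with CDF $F$); the paper only applies the lemma to PDF-bounded distributions, and the general non-atomic case follows by an approximation or by the analogous computation with CDFs.

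For Process 1, the joint density of $(Y_1, \dots, Y_\ell)$ is $\prod_{i=1}^{\ell} f(y_i)$. For each fixed $q \in [\ell]$, the map sending $(y_1, \dots, y_\ell)$ with $y_q > y_t$ for all $t \ne q$ to $(y_q; y_1, \dots, y_{q-1}, y_{q+1}, \dots, y_\ell)$ is a measure-preserving bijection onto $\{(y, z_1, \dots, z_{\ell-1}) : z_i < y\}$. Hence the joint density of $(Y^{\max}, Z_1, \dots, Z_{\ell-1}, q^*)$ at $(y, z_1, \dots, z_{\ell-1}, q)$ equals $f(y)\prod_{i=1}^{\ell-1} f(z_i)$ on this region. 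Summing (i.e., marginalizing the discrete coordinate) over $q \in [\ell]$ yields the joint density
\begin{align*}
g_1(y, z_1, \dots, z_{\ell-1}) = \ell \cdot f(y) \prod_{i=1}^{\ell-1} f(z_i) \cdot \mathbf{1}[z_i < y \text{ for all } i].
\end{align*}

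For Process 2, the standard formula for the maximum of $\ell$ i.i.d. samples gives $W^{\max}$ the density $\ell F(y)^{\ell-1} f(y)$. Conditional on $W^{\max} = y$, the samples $W_1, \dots, W_{\ell-1}$ are independent and each has density $f(w)/F(y)$ on $[0, y]$. Multiplying these yields the joint density
\begin{align*}
g_2(y, w_1, \dots, w_{\ell-1}) = \ell F(y)^{\ell-1} f(y) \cdot \prod_{i=1}^{\ell-1} \frac{f(w_i)}{F(y)} = \ell \cdot f(y) \prod_{i=1}^{\ell-1} f(w_i),
\end{align*}
on the region $w_i \leq y$. Since $g_1 = g_2$ everywhere, the two joint distributions coincide.

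There is no serious obstacle here; the only subtlety is remembering that $q^*$ contributes a discrete coordinate that must be marginalized out (equivalently, accounting for the $\ell$ possible positions of the maximum), which is exactly what produces the factor of $\ell$ that matches the density of $W^{\max}$ in Process 2.
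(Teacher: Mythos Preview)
Your proof is correct and follows essentially the same approach as the paper: both compute the joint distribution of each process directly and match them, with the factor of $\ell$ arising from symmetry over the position of the maximum on one side and from the formula $f_{\cD^{\max(\ell)}}(y)=\ell F(y)^{\ell-1}f(y)$ on the other. The only difference is that the paper works with joint CDFs (writing out the integrals explicitly) whereas you work with joint densities, which is slightly more direct; your caveat about assuming a density is appropriate, and in fact the paper's own computation also implicitly uses $f_{\cD}$.
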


\begin{proof}
For any $y^{\max}, z_1, \dots, z_{\ell - 1} \in \R$, we may write $\Pr[Y^{\max} \leq y^{\max}, Z_1 \leq z_1, \dots, Z_{\ell-1} \leq z_{\ell-1}]$ as
\begin{align*}
\Pr[Y^{\max} \leq y^{\max}, Z_1 \leq z_1, \dots, Z_{\ell-1} \leq z_{\ell-1}] &= \sum_{q \in [\ell]} \Pr[Y^{\max} \leq y^{\max}, Z_1 \leq z_1, \dots, Z_{\ell - 1} \leq z_{\ell-1}, q^{*} = q] \\
&= \ell \cdot \Pr[Y^{\max} \leq y^{\max}, Z_1 \leq z_1, \dots, Z_{\ell - 1} \leq z_{\ell-1}, q^{*} = \ell],
\end{align*}
where the equality follows from symmetry over the random variables $Y_1, \dots, Y_\ell$ and the fact that since $\cD$ is non-atomic, the probability of a tie (for $q^*$) is zero. We may rewrite the above expression further as
\begin{align}
&\Pr[Y^{\max} \leq y^{\max}, Z_1 \leq z_1, \dots, Z_{\ell - 1} \leq z_{\ell-1}] \nonumber \\
&= \ell \cdot \Pr[(Y_\ell \geq Y_1, \dots, Y_{\ell - 1}), Y_\ell \leq y^{\max}, Y_1 \leq z_1, \dots, Y_{\ell - 1} \leq z_{\ell - 1}] \nonumber \\
&= \ell \cdot \int_{-\infty}^{y^{\max}} \int_{-\infty}^{\min\{a^{\max}, z_1\}} \cdots \int_{-\infty}^{\min\{a^{\max}, z_{\ell - 1}\}}  f_{\cD}(a_1) \cdots f_{\cD}(a_{\ell - 1}) f_{\cD}(a^{\max}) \quad  d a_{\ell - 1} \cdots d a_1 d a^{\max} \nonumber \\
&= \ell \cdot \int^{y^{\max}}_{\infty} \left(\int_{-\infty}^{\min\{a^{\max}, z_1\}} f_{\cD}(a_1) da_1\right) \cdots \left(\int_{-\infty}^{\min\{a^{\max}, z_{\ell - 1}\}} f_{\cD}(a_{\ell - 1}) da_{\ell - 1}\right) f_{\cD}(a^{\max}) da^{\max} \nonumber \\
&= \ell \cdot \int^{y^{\max}}_{\infty}  F_{\cD}(\min\{a^{\max}, z_1\}) \cdots F_{\cD}(\min\{a^{\max}, z_{\ell - 1}\}) f_{\cD}(a^{\max}) da^{\max} \label{eq:expanded-CDF-first}
\end{align}
We will next expand a similar term for $(W^{\max}, W_1, \dots, W_{\ell - 1})$. Before we do so, let us recall that, due to the definition of $\cD^{\max(\ell)}$, we have $F_{\cD^{\max(\ell)}}(z) = F_{\cD}(z)^\ell$ for any $z \in \R$. Taking the derivative of both sides, we have
\begin{align} \label{eq:max-pdf}
f_{\cD^{\max(\ell)}}(z) = \ell \cdot F_{\cD}(z)^{\ell - 1} \cdot f_{\cD}(z).
\end{align}
Now, for any $w^{\max}, w_1, \dots, w_{\ell - 1} \in \R$, we can expand $\Pr[W^{\max} \leq w^{\max}, W_1 \leq w_1, \dots, W_{\ell - 1} \leq w_{\ell - 1}]$ as
\begin{align}
&\Pr[W^{\max} \leq w^{\max}, W_1 \leq w_1, \dots, W_{\ell - 1} \leq w_{\ell - 1}] \nonumber \\
&= \int^{w^{\max}}_{-\infty} \left(F_{\cD_{\leq a^{\max}}}(w_1) \cdots F_{\cD_{\leq a^{\max}}}(w_{\ell - 1})\right) f_{\cD^{\max(\ell)}}(a^{\max}) da^{\max} \nonumber \\
&\overset{\eqref{eq:range-conditioned-CDF}}{=} \int^{w^{\max}}_{-\infty} \left(\frac{F_{\cD}(\min\{a^{\max}, w_{1}\})}{F_{\cD}(a^{\max})} \cdots \frac{F_{\cD}(\min\{a^{\max}, w_{\ell - 1}\})}{F_{\cD}(a^{\max})}\right) f_{\cD^{\max(\ell)}}(a^{\max}) da^{\max} \nonumber \\
&\overset{\eqref{eq:max-pdf}}{=} \ell \cdot \int^{w^{\max}}_{\infty}  F_{\cD}(\min\{a^{\max}, w_1\}) \cdots F_{\cD}(\min\{a^{\max}, w_{\ell - 1}\}) f_{\cD}(a^{\max}) da^{\max}. \label{eq:expanded-CDF-second}
\end{align}
From~\eqref{eq:expanded-CDF-first} and~\eqref{eq:expanded-CDF-second}, we conclude that $(Y^{\max}, Z_1, \dots, Z_\ell)$ and $(W^{\max}, W_1, \dots, W_{\ell - 1})$ are identically distributed.
\end{proof}

Now, the main idea for the proof of Lemma~\ref{lem:rr-simplified-process} is to apply Lemma~\ref{lem:conditioned-max} repeatedly to gradually transfer the process from the original round-robin process to the one described in Lemma~\ref{lem:rr-simplified-process}.

\begin{proof}[Proof of Lemma~\ref{lem:rr-simplified-process}]
Recall that the round-robin process can be written as follows:
\begin{enumerate}
\item For every $i \in [n], j \in [m]$, let $u_i(j) \sim \cD_{\leq X^i_0}$.
\item For $t = 1, \dots, \lceil m/n \rceil$:
\begin{enumerate}
\item For $i = 1, \dots, \min\{n, m - (t - 1)n\}$:
\begin{enumerate}
\item Let $j^*$ be the index of a remaining item that maximizes $u_i(j^*)$. \label{step:rr-pick}
\item Remove $j^*$ from the set of available items.
\item Set $X^i_t = u_i(j^*)$ and, for every $i' \in [n] \setminus \{i\}$, set $X^{i', i}_t = u_{i'}(j^*)$.
\end{enumerate}
\end{enumerate}
\end{enumerate}

Consider the first item $j^*_1$ picked by the first agent (in Step~\ref{step:rr-pick} when $t = 1$ and $i = 1$). For notational convenience, let us assume without loss of generality that $j^*_1 = m$. 
From Lemma~\ref{lem:conditioned-max} with $\ell = m$, we have that $X^1_1$ is distributed as $\cD^{\max(m)} = \cD^{\max(m)}_{\leq X_0^1}$ (recall that $X_0^1=1$), and that, for the remaining items $j \in [m - 1]$, the utilities $u_1(j)$ are distributed i.i.d. as $\cD_{\leq X_1^1}$. Moreover, since agent~1 does not consider other agents' utilities at all when picking $j^*_1$, we also have that $X^{2, 1}_1, \dots, X^{n, 1}_1$ are distributed i.i.d. as $\cD$ and, for the remaining items $j \in [m - 1]$, $u_2(j), \dots, u_n(j)$ are distributed i.i.d. as $\cD$. From the observations so far, the round-robin process is equivalent to the following process: 
\begin{enumerate}
\item Sample utilities of the first item selected by the first agent: \label{step:rr-first-pick}
\begin{itemize}
\item Sample $X^1_1 \sim \cD^{\max(m)}_{\leq X^1_0}$.
\item For every $1 < i \leq n$, sample $X^{i, 1}_t \sim \cD_{\leq X^{i}_0}$.
\end{itemize}
\item Sample utilities of the remaining items:
\begin{itemize}
\item For every $j \in [m - 1]$, sample $u_1(j) \sim \cD_{\leq X^1_1}$.
\item For every $1 < i \leq n$ and $j \in [m - 1]$, sample $u_i(j) \sim \cD_{\leq X^i_0}$.
\end{itemize}
\item For $t = 1, \dots, \lceil m/n \rceil$:
\begin{enumerate}
\item For $i = \max\{1, 2 \cdot \bone[t = 1]\}, \dots, \min\{n, m - (t - 1)n\}$: \label{step:rr-loop-after-first-transformation}
\begin{enumerate}
\item Let $j^*$ be the index of a remaining item that maximizes $u_i(j^*)$.
\item Remove $j^*$ from the set of available items.
\item Set $X^i_t = u_i(j^*)$ and, for every $i' \in [n] \setminus \{i\}$, set $X^{i', i}_t = u_{i'}(j^*)$.
\end{enumerate}
\end{enumerate}
\end{enumerate}
(We use $\bone[E]$ to denote the indicator random variable for event $E$. Note that in Step~\ref{step:rr-loop-after-first-transformation}, the term $2 \cdot \bone[t = 1]$ is there so that the first agent does not get to pick in the first round, since this pick was already taken care of in Step~\ref{step:rr-first-pick}.)

Similarly to the arguments above, we may now consider the first item $j^*_2$ picked by the second agent (in Step~\ref{step:rr-loop-after-first-transformation} when $t = 1$ and $i = 2$) and assume without loss of generality that $j^*_2 = m - 1$. From Lemma~\ref{lem:conditioned-max}, $X^2_1$ is distributed as $\cD^{\max(m-1)} = \cD^{\max(m-1)}_{\leq X_0^2}$ (recall that $X_0^2=1$), and, for the remaining items $j \in [m - 2]$, $u_2(j)$ is distributed i.i.d. as $\cD_{\leq X_1^2}$. Moreover, since agent 2 does not consider other agent's utilities at all when picking $j^*_2$, we also have that $X^{1, 2}_1, X^{3, 2}_1 , \dots, X^{n, 2}_1$ are independently distributed as $\cD_{\leq X^1_1}, \cD, \dots, \cD$ respectively, and, for the remaining items $j \in [m - 2]$, $u_1(j), u_3(j), \dots, u_n(j)$ are independently distributed as $\cD_{\leq X^1_1}, \cD, \dots, \cD$ respectively. As a result, the process above is in turn equivalent to the following process.

\begin{enumerate}
\item Sample utilities of the first item selected by the first agent:
\begin{itemize}
\item Sample $X^1_1 \sim \cD^{\max(m)}_{\leq X^1_0}$.
\item For every $1 < i \leq n$, sample $X^{i, 1}_1 \sim \cD_{\leq X^{i}_0}$.
\end{itemize}
\item Sample utilities of the first item selected by the second agent:
\begin{itemize}
\item Sample $X^2_1 \sim \cD^{\max(m-1)}_{\leq X^2_0}$.
\item Sample $X^{1, 2}_1 \sim \cD_{\leq X^1_1}$.
\item For every $2 < i \leq n$, sample $X^{i, 2}_1 \sim \cD_{\leq X^{i}_0}$.
\end{itemize}
\item Sample utilities of the remaining items:
\begin{itemize}
\item For every $j \in [m - 2]$, let $u_1(j) \sim \cD_{\leq X^1_1}$ and $u_2(j) \sim \cD_{\leq X^2_1}$.
\item For every $2 < i \leq n$ and $j \in [m - 2]$, let $u_i(j) \sim \cD_{\leq X^i_0}$.
\end{itemize}
\item For $t = 1, \dots, \lceil m/n \rceil$:
\begin{enumerate}
\item For $i = \max\{1, 3 \cdot \bone[t = 1]\}, \dots, \min\{n, m - (t - 1)n\}$: 
\begin{enumerate}
\item Let $j^*$ be the index of a remaining item that maximizes $u_i(j^*)$.
\item Remove $j^*$ from the set of available items.
\item Set $X^i_t = u_i(j^*)$ and, for every $i' \in [n] \setminus \{i\}$, set $X^{i', i}_t = u_{i'}(j^*)$.
\end{enumerate}
\end{enumerate}
\end{enumerate}

By repeatedly applying this argument $m - 2$ additional times, we will arrive at the process stated in Lemma~\ref{lem:rr-simplified-process}, and the proof is complete.
\end{proof}

\subsection*{Proof of Lemma~\ref{lem:discrete-vs-continuous}}

We explain how Theorem 1 of~\citet{Wormald95} implies our Lemma~\ref{lem:discrete-vs-continuous}. To do so, we first restate Wormald's theorem for the special case of a single sequence of random variables:

\begin{theorem}[\citet{Wormald95}] \label{thm:wormald}
Let $(X_t)_{0 \leq t \leq T}$ be a Markovian random process such that $X_0/T = x^* \in \R$, $|X_{t} - X_{t + 1}| \leq 1$ for all $t = 0 ,1,\dots, T - 1$, and there exists a function $f: \R^2 \to \R$ such that $$\E[X_{t + 1} - X_t \mid X_t = x] = f(t/T, x/T)$$ for all $t \in \{0,1, \dots, T - 1\}$.

Suppose further that there exists a bounded connected open set $D \subseteq \R^2$ such that $f$ is continuous and satisfies the Lipschitz condition\footnote{That is, there exists a constant $L > 0$ such that $f(s, z) - f(s', z') \leq L \cdot\max\{|s - s'|, |z - z'|\}$ for all $(s, z), (s', z') \in D$. The constant $L$ is said to be a \emph{Lipschitz constant} of $f$.} on $D$, and $(0, x^*) \in D$.

Then, the differential equation $\frac{dz}{ds} = f(s, z)$ with the initial condition $z(0) = x^*$ has a unique solution $z(s)$ on $D$. Furthermore, with high probability as $T\rightarrow\infty$, the following holds: for every $t$ such that $(t/T, z(t/T)) \in D$, we have $$|X_t - T \cdot z(t/T)| = o(T).$$
\end{theorem}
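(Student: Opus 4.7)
The plan is to prove Wormald's theorem by the standard differential equation method: approximate the Markov chain on short time intervals by its conditional drift (which nearly equals the ODE's vector field at a tracked trajectory), use Azuma--Hoeffding on each interval to control random fluctuations, and propagate the error across intervals using the Lipschitz property of $f$.

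First I would establish the deterministic ODE side. Since $f$ is continuous and Lipschitz on a bounded connected open set $D$ containing $(0, x^*)$, the Picard--Lindel\"of theorem gives a unique solution $z(s)$ of $dz/ds = f(s, z)$ with $z(0) = x^*$, defined on some maximal interval $[0, s_{\max})$ for which $(s, z(s)) \in D$. The goal is to prove $|X_t - T \cdot z(t/T)| = o(T)$ uniformly for $t/T$ in any compact sub-interval of $[0, s_{\max})$. I would introduce a stopping time $\tau$ equal to the first step at which $(t/T, X_t/T)$ leaves a slightly shrunk closed sub-domain $D' \subset D$, so that throughout $t < \tau$ the Lipschitz bound on $f$ (say with constant $L$) applies uniformly; it will suffice to prove the tracking bound on $\{t < \tau\}$ and then verify $\tau \geq t$ deterministically once the bound is proved.

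Second, I would discretize time. Fix a block length $h = h(T)$ with $1 \ll h \ll T$ (the final choice, something like $h = T^{2/3}$, is tuned at the end). For each $k \geq 0$ with $kh < \tau$, I compare $X_{(k+1)h} - X_{kh}$ to the ODE increment $T[z(((k+1)h)/T) - z(kh/T)] \approx h \cdot f(kh/T, X_{kh}/T)$ (the last approximation uses the Lipschitz property plus an assumed inductive bound $|X_{kh}/T - z(kh/T)| \leq \epsilon_k$ and costs $O(Lh\epsilon_k + h^2/T)$). The random part, $X_{(k+1)h} - X_{kh} - E[X_{(k+1)h} - X_{kh} \mid X_{kh}]$, is a sum of bounded martingale differences each of size $\leq 1$, so Azuma--Hoeffding gives fluctuation $O(\sqrt{h \log T})$ on one block and, by a union bound over the at most $T/h$ blocks, uniformly over all blocks with high probability.

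Third, I combine these into an error recursion. Writing $\epsilon_k = |X_{kh} - T z(kh/T)|/T$, the above estimates yield $\epsilon_{k+1} \leq (1 + Lh/T)\epsilon_k + O(h^2/T^2) + O(\sqrt{h \log T}/T)$ with high probability. Iterating $T/h$ times and using $(1 + Lh/T)^{T/h} \leq e^L = O(1)$ gives $\max_k \epsilon_k = O(h/T + \sqrt{T \log T / h}/T)$, which we drive to $o(1)$ by choosing e.g.\ $h = T^{2/3}$, proving $|X_{kh} - T z(kh/T)| = o(T)$ on the grid. Finally, since any time $t$ lies within $h$ steps of a grid point and $|X_{t+1} - X_t| \leq 1$, and $z$ is Lipschitz (bounded $f$), we interpolate to arbitrary $t$ at cost $O(h) = o(T)$. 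A standard bootstrap confirms $\tau$ exceeds the desired range with high probability: because $X_t/T$ is within $o(1)$ of $z(t/T) \in D'$, the process stays in $D$, so the stopped and unstopped processes agree. The main obstacle I anticipate is the balancing step for $h$ together with keeping the Gronwall-style error constant $e^L$ harmless when the domain is only locally Lipschitz; this is exactly what justifies restricting attention to a compact sub-region of $D$ via the stopping time $\tau$.
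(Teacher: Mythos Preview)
The paper does not prove this theorem: it is quoted from \citet{Wormald95} and used as a black box (the appendix only shows how to \emph{apply} it to derive Lemma~\ref{lem:discrete-vs-continuous}). So there is no in-paper proof to compare against.

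That said, your sketch is a faithful outline of the standard proof of Wormald's theorem: block decomposition of length $h$, Azuma--Hoeffding on each block to control the martingale part, a Gronwall/discrete-Gr\"onwall recursion $\epsilon_{k+1}\le(1+Lh/T)\epsilon_k+\text{(drift error)}+\text{(fluctuation)}$, optimization of $h$, and a stopping-time argument to stay inside a compact $D'\subset D$ where the Lipschitz constant is uniform. One small point to tighten: the drift over a block is not literally $\E[X_{(k+1)h}-X_{kh}\mid X_{kh}]$ but a telescoping sum of one-step conditional expectations $\sum_{j}\E[X_{kh+j+1}-X_{kh+j}\mid \mathcal F_{kh+j}]=\sum_j f((kh+j)/T,X_{kh+j}/T)$, and you need the Lipschitz bound plus $|X_{kh+j}-X_{kh}|\le h$ to replace each summand by $f(kh/T,X_{kh}/T)$ up to $O(Lh/T)$ per step; your stated error budget $O(Lh\epsilon_k+h^2/T)$ is correct once this is spelled out. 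With that clarification the argument goes through and matches Wormald's original proof in structure.
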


At first glance, it may seem that the above theorem immediately implies our Lemma~\ref{lem:discrete-vs-continuous}. Nonetheless, there is in fact a slightly subtle point, because the Lipschitz constant of our function $f$ is not bounded as $s \to 1$. However, this is a common issue and was also faced by Wormald in his original paper~\cite{Wormald95}. Wormald handled this by using the concentration inequality only for $s \leq 1 - \varepsilon$ and then use the fact that $|X_t - X_{t + 1}| \leq 1$ to deal with the rest of the range (i.e., $t \geq (1 - \varepsilon)T$). A similar approach works for us here, as formalized below.

\begin{proof}[Proof of Lemma~\ref{lem:discrete-vs-continuous}]
First, note that our differential equation~\eqref{eq:diff-eq} can be easily solved via standard methods, and its solution is the unique $z = z(s)$ that satisfies
\begin{align} \label{eq:diff-eq-sol}
2z - z \ln(2z) = 1 - s.
\end{align}

Let $0 < \varepsilon < 1$ be any constant. We will argue that, with high probability as $m\rightarrow\infty$, we have $|X_t - 2m \cdot z\left(\frac{t}{2m}\right)| \leq \varepsilon m$ for all $t = 0, 1, \dots, 2m - 1$.

Consider the set $D = \{(s, z) \mid -0.1 < s < 1, -0.1 < s + z < 1 - 0.01\varepsilon\}$. For any $(s, z), (s', z') \in D$, we have
\begin{align*}
f(s, z) - f(s', z') &= -\frac{z}{1 - s - z} + \frac{z'}{1 - s' - z'} \\
&= \frac{z' - z + zs' - z's}{(1 - s - z)(1 - s' - z')} \\
&= \frac{(z' - z) + (zs' - z's') + (z's' - z's)}{(1 - s - z)(1 - s' - z')} \\
&\leq \frac{|z' - z| + |s'| \cdot |z - z'| + |z'| \cdot |s' - s|}{(1 - s - z)(1 - s' - z')} \\
(\text{From } |s'| < 1 \text{ and } |z'| < 1.1) &\leq \frac{2|z' - z| + 1.1|s' - s|}{(1 - s - z)(1 - s' - z')} \\
(\text{From } s + z, s' + z' < 1 - 0.01\varepsilon) &\leq \frac{31000}{\varepsilon^2} \cdot \max\{|z' - z|, |s' - s|\},
\end{align*}
which means that $f$ satisfies the Lipschitz condition on $D$ (with Lipschitz constant $31000/\varepsilon^2$).

As a result, by applying Theorem~\ref{thm:wormald} with $T = 2m$, the following holds with high probability: for all $t$ such that $\frac{t}{2m} + z\left(\frac{t}{2m}\right) < 1 - 0.01\varepsilon$, we have
\begin{align} \label{eq:lipschitz-range}
\left|X_t - 2m \cdot z\left(\frac{t}{2m}\right)\right| \leq o(m).
\end{align}
Let $s^* \in [0, 1)$ be such that, in our equation~\eqref{eq:diff-eq-sol}, $z(s^*) = 0.1\epsilon$.
Note that there exists a unique such $s^*$ because $z(s)$ is decreasing and continuous for $s \in [0, 1)$, $z(0) = 1/2$, and $\lim_{s \to 1^{-}} z(s) = 0$.
Moreover, we have $z(s) > 0.1\varepsilon$ for $s < s^*$.
Let $t^* = \lceil s^* T \rceil$.

Notice that we have $s + z(s) = 1 - z(s) + z(s)\ln(2z(s)) < 1 - 0.1z(s)$ for all $s\in [0,1)$. In other words, $\frac{t}{2m} + z(\frac{t}{2m}) < 1 - 0.1z\left(\frac{t}{2m}\right) \leq 1 - 0.01\varepsilon$ for any integer $t < t^*$, which means that \eqref{eq:lipschitz-range} is satisfied for all $t<t^*$ with high probability.  

On the other hand, to see that~\eqref{eq:lipschitz-range} is also likely to hold for $t \geq t^*$, first observe that the sequence $(X_t)_{0 \leq t \leq T}$ is non-increasing. Hence, for such $t$ we have
\begin{align*}
X_t \leq X_{t^* - 1} \overset{\eqref{eq:lipschitz-range}}{\leq} 2m \cdot z\left(\frac{t^* - 1}{2m}\right) + o(m). 
\end{align*}
Now, since $z(s)$ is continuous at $s^*$, it must be the case that $\frac{t^* - 1}{2m}$ converges to $s^*$ as $m$ grows.
This means that $\left|z\left(\frac{t^* - 1}{2m}\right) - z(s^*)\right| = o(1)$, where the $o(1)$ term converges to zero as $m \to \infty$. Plugging this into the inequality above, we get
\begin{align} \label{eq:large-t-bound}
X_t &\leq 2m \cdot z(s^*) + o(m) = 0.2 \epsilon m + o(m),
\end{align}
where the equality follows from our choice of $s^*$. 
Thus, we have
\begin{align*}
- \epsilon m &\leq -0.2 \epsilon m \\
(\text{From our choice of } s^*) &= -2m \cdot z(s^*) \\
(\text{Since } z(s) \text{ is decreasing and } t \geq t^*) &\leq -2m \cdot z\left(\frac{t}{2m}\right) \\
&\leq X_t -2m \cdot z\left(\frac{t}{2m}\right) \\
&\leq X_t \\
(\text{From }\eqref{eq:large-t-bound}) &\leq 0.2 \varepsilon m + o(m),
\end{align*}
which is at most $\varepsilon m$ for any sufficiently large $m$. This implies that with high probability, $|X_t - 2m \cdot z\left(\frac{t}{2m}\right)| \leq \varepsilon m$ for all $t \geq t^*$. In conclusion, we have $|X_t - 2m \cdot z\left(\frac{t}{2m}\right)| \leq \varepsilon m$ for all $t \in \{0,1, \dots, 2m - 1\}$ with high probability, as desired.
\end{proof}

\end{document}